\documentclass[11pt]{article}

\usepackage[paper=letterpaper, margin=1in]{geometry}

\usepackage{ifthen}

\usepackage{amsmath}
\usepackage{amssymb}

\usepackage{amsthm}

\usepackage{authblk}

\usepackage{graphicx}

\usepackage[%
  ocgcolorlinks,%
  linkcolor=blue,%
  filecolor=blue,%
  citecolor=blue,%
  urlcolor=blue]{hyperref}

\usepackage{enumitem}

\usepackage{multirow}
\usepackage[singlelinecheck=true, labelfont=bf]{caption}

\usepackage[ruled]{algorithm2e} 

\SetAlFnt{\small}
\SetAlCapFnt{\small}
\SetAlCapNameFnt{\small}
\SetAlCapHSkip{0pt}
\IncMargin{-\parindent}

\usepackage{cleveref}
\usepackage{xfrac}

\newtheorem{theorem}{Theorem}[section]
\newtheorem{lemma}[theorem]{Lemma}

\newtheorem{corollary}[theorem]{Corollary}

\theoremstyle{definition}
\newtheorem{definition}[theorem]{Definition}

\newtheorem{remark}[theorem]{Remark}

\newtheorem*{definition*}{Definition}
\theoremstyle{plain}

\DeclareMathOperator*{\argmax}{arg\,max}

\newcommand{\makeset}[1]{\lbrace #1 \rbrace}
\newcommand{\setminussingle}[2]{#1 \setminus \makeset{#2}}
\newcommand{\setunionsingle}[2]{#1 \cup \makeset{#2}}
\newcommand{\marginal}[2]{f(#2: #1)}
\newcommand{\XOS}{\texttt{XOS}}
\newcommand{\EQref}[1]{Eq.~\eqref{#1}}
\newcommand{\DSIC}{\texttt{DSIC}}
\newcommand{\BB}{\texttt{BB}}

\newcommand{\ev}{\text{eligible values}}
\newcommand{\ANY}{\circ}
\newcommand{\RULE}[1]{\mathbf{#1}}

\newcommand{\SC}{\texttt{SATOSH}}
\newcommand{\SEL}{\RULE{S}}
\newcommand{\AHT}{\texttt{AHT}}

\makeatletter
\newcommand*{\rom}[1]{\expandafter\@slowromancap\romannumeral #1@}
\makeatother

\usepackage[dvipsnames]{xcolor}

\usepackage{mathtools}
\newcommand{\defeq}{\coloneqq}
\hypersetup{
  hypertexnames=false,
  pdftitle={BROCADE: Broker-based Contract Design for the Multi-Agency Model},
  pdfauthor={Yangguang Shi, Ron Lavi, Jie Zhang}
}

\begin{document}

\title{BROCADE: Broker-based Contract Design for the \\Multi-Agency Model}

\author[1]{Yangguang Shi}
\author[2]{Ron Lavi}
\author[3]{Jie Zhang}
\affil[1]{School of Computer Science and Technology, Shandong University, China\\
\texttt{shiyangguang@sdu.edu.cn}}
\affil[2]{Department of Economics, University of Bath, United Kingdom\\
\texttt{arl65@bath.ac.uk}}
\affil[3]{Department of Computer Science, University of Bath, United Kingdom\\
\texttt{jz2558@bath.ac.uk}}

\date{}

\begin{titlepage}

\maketitle

\begin{abstract}
We study welfare maximization in a contract model with $m$ principals who privately value a public project's success and $n$ agents whose costly effort is unobservable. Contracts induce truthful dominant strategies for principals and equilibrium effort by agents, maximizing the expected total value of eligible principals.

We develop three tools: a broker that separates payments from principals and payments to agents, an extension of Myerson's Lemma characterizing dominant-strategy incentive compatibility (DSIC), and a black-box reduction to budget-constrained single-principal contract design. When the success probability is a strictly monotone XOS function of agents' actions, these tools yield three results.

First, a deterministic DSIC contract achieves a constant approximation against a benchmark that qualifies every principal and maximizes success probability using budget $\sum_j v^j$, but may violate budget balance (BB).

Second, against benchmarks with budgets at least the largest principal value, deterministic DSIC and BB contracts admit no finite approximation guarantee, even with two agents. Universally DSIC and universally BB randomized contracts have an $\Omega(n)$ approximation lower bound, even for a fixed number of principals.

Third, using one quarter of the optimal fixed-price revenue supported by at least two principals as the benchmark budget, we obtain a universally DSIC and universally BB randomized contract with an $O(\log m)$ approximation ratio independent of $n$. 
\end{abstract}

\renewcommand{\thepage}{}
\end{titlepage}

\clearpage

\pagenumbering{arabic}

\section{Introduction}

Contract theory studies how to incentivize agents to take costly actions when their actions cannot be directly observed. Such problems appear frequently in the Internet economy. For example, in online labor markets, freelancers need to exert effort to complete tasks from customers. Social-media influencers spend effort to promote different brands. In these applications, effort is costly and usually cannot be fully observed by the parties who benefit from it. The principal-agent model provides a natural game-theoretic framework for studying such interactions.

In many applications, a single principal needs to interact with a \emph{team} of agents. For example, there can be many freelancers in an online labor market and many influencers in a social network. Our work builds on the multi-agent binary-action model of~\cite{DuttingEFK23}. In this model, a single principal interacts with multiple agents, and each agent chooses whether to exert effort. The subset of agents who exert effort determines the probability that the principal's project succeeds. This model has received increasing attention in the recent algorithmic contract design literature~\cite{SODA2024,hann2024optimality,DuttingEFK25,ITCS24,FeldmanTzurPonitkaSchlesinger25,EC2025}.

However, in many Internet applications, the interaction can involve multiple principals as well. A freelancer may work for different customers, an influencer may promote multiple brands, and a platform may provide services to many sellers at the same time. In these examples, the actions of a group of agents can affect the values obtained by multiple principals. This motivates us to study contract design in a setting with both multiple principals and multiple agents.

In this paper, we study a multi-principal multi-agent binary-action model. There are $m$ principals and $n$ agents. Each agent can choose whether to exert effort, and exerting effort incurs a publicly known, strictly positive cost. The subset $S$ of agents who exert effort determines the success probability $f(S)$ of a common project. We consider the hidden-action setting, where the actions of agents cannot be observed by the principals. Each principal $j$ has a private value $v^j$ for the success of the project and reports a bid to the contract.

One feature of our model is the eligibility variable $\delta^j\in\{0,1\}$ for every principal $j$. If $\delta^j=1$, principal $j$ obtains her value $v^j$ when the project succeeds. If $\delta^j=0$, she does not obtain any value from the success. Therefore, although all principals face the same project outcome and the same success probability, the contract can decide which principals can benefit from the success. This setting can be viewed as a middle ground between private and public goods. For a private good, different principals can receive different outcomes. For a public good, all principals benefit from the same outcome. In our model, the success of the project is common, but the benefit from the success can be restricted to a subset of principals. We refer to such an outcome as a \emph{semi-public outcome}.

The interaction can be viewed as a two-round sequential game. In the first round, the principals simultaneously submit their bids. In the second round, the agents simultaneously decide whether to exert effort according to the contract determined by the bids. We would like the truthful bid to be a dominant strategy for every principal, while the agents choose an equilibrium action profile in the second round. Our objective is to maximize the expected total value of the eligible principals,
$\left(\sum_{j\in[m]}v^j\delta^j\right)f(S).$
When there is only one principal, this objective is closely related to the success-probability maximization objective studied in previous work on multi-agent contracts.

Designing a contract in this model has difficulties from both mechanism design and contract theory. On the principals' side, their values are private, so the contract needs to provide incentives for truthful bidding. On the agents' side, their actions are hidden, so the contract needs to provide incentives for effort. Moreover, these two parts are connected: the payments collected from the principals determine the resources that can be used to incentivize the agents, while the actions of the agents determine the success probability and therefore affect the utilities and incentives of the principals.

\paragraph{Our technical tools.}
We develop three main technical tools for handling these difficulties.

Our first tool is a \emph{broker-based contract} framework, which we call \emph{BROker-based ContrAct DEsign} (BROCADE). We introduce a non-strategic broker between the principals and the agents. The broker collects payments from the principals and uses these payments to incentivize the agents. Without a broker, the contract needs to specify a transfer between every pair of a principal and an agent, resulting in $O(mn)$ transfer variables. With the broker, we only need to consider transfers from the $m$ principals to the broker and transfers from the broker to the $n$ agents. This reduces the number of transfer variables to $O(m+n)$ and separates the principal side and the agent side of the contract. We also establish an equivalence between the broker-free and broker-based formulations under budget balance.

Our second tool is a characterization of the dominant-strategy incentive-com\-patible (\DSIC{}) condition for the principals. In our model, the allocation of a principal depends on both her eligibility and the success probability generated by the selected agents. Therefore, the effective allocation of principal $j$ is $\delta^j f(S).$
We extend the standard monotonicity argument in Myerson's Lemma to our setting. We show that an allocation rule is implementable if and only if this effective allocation is monotone in the bid of every principal, and we give the corresponding payment rule.

Our third tool is a black-box reduction for selecting the agents. Suppose that a budget $B$ is available to incentivize agents. Previous work provides constant-factor approximation algorithms for maximizing the success probability under such a budget constraint. However, directly using such an algorithm does not guarantee that the obtained success probability is non-de\-creasing when the budget increases. This monotonicity property is important for obtaining a \DSIC{} contract on the principals' side. We therefore develop a black-box transformation that converts a constant-approxi\-mation algorithm for a fixed budget into a \emph{budget-scalable} constant-approxi\-mation selection function. The construction uses a tagged partition of the possible budget range.

\paragraph{Our results.}
Using these tools, we obtain three main results when the success probability function $f$ is strictly monotone and \XOS{}.

Our first result is a deterministic and non-ex\-cludable contract, where all principals are eligible to benefit from the success of the project. The contract is \DSIC{} and achieves a constant-factor approximation with respect to the benchmark using $\mathtt{SumV}(v)=\sum_{j\in[m]}v^j$
as the budget for incentivizing the agents. This is a strong benchmark since it uses the total value of all principals to maximize the success probability. However, this contract may violate the budget-balance condition, because the total payment collected from the principals can be smaller than the budget used for incentivizing the agents.

Our second result establishes limits imposed by budget balance when the benchmark budget is at least the maximum principal value. No deterministic \DSIC{} and \BB{} contract has a finite approximation guarantee, even with two agents and an additive success probability function (Theorem~\ref{theorem: approximation impossbility with DSIC and BB}). For universally \DSIC{} and universally \BB{} random contracts, we prove an $\Omega(n)$ approximation lower bound, even for additive functions and any fixed number of principals (Theorem~\ref{theorem: impossiblity for universally truthful}). Thus a guarantee independent of the number of agents is impossible under these benchmarks.

Our third result achieves an approximation guarantee independent of $n$ by using a weaker benchmark budget. Let $\mathcal{F}^{2}(v)=\max_{k\geq 2} k\cdot \pi^{k}(v),$
where $\pi^{k}(v)$ is the $k$-th largest value among the principals. This quantity represents the optimal fixed-price revenue when at least two principals are considered. Based on the random-sampling idea of~\cite{FiatGHK02}, we develop a randomized contract called \emph{SAmpling TOll SHaring} (\SC{}). The contract randomly partitions the principals into two subsets. It uses the fixed-price revenue obtained from one subset to determine a toll for the other subset, and then uses this toll as the budget for selecting agents. We show that the resulting contract is universally \DSIC{} and universally budget-balanced. It achieves an $O(\log m)$ approximation, independent of $n$, with respect to the benchmark using $\frac{1}{4}\mathcal{F}^{2}(v)$ as the budget for incentivizing the agents.

Together, our results identify the role of the benchmark budget under budget balance. Without budget balance, a constant approximation is possible for the strong benchmark based on the total principal value. With budget balance, deterministic contracts admit no finite approximation guarantee under strong benchmarks, and universally truthful random contracts incur an $\Omega(n)$ loss. Using a quarter of the fixed-price revenue supported by at least two principals as the benchmark budget allows \SC{} to achieve an $O(\log m)$ approximation independent of the number of agents.

The broker framework also provides a way to connect algorithms for single-principal contract design with mechanism-design techniques for multiple principals. In particular, the broker separates the payments of the principals from the incentives of the agents, the monotonicity characterization handles truthful bidding, and the budget-scalable selection function allows us to use existing algorithms for budget-constrained contract design in the multi-principal setting.

\paragraph{Other related literature}

The extension of the basic single-agent, single-principal setting to multiple principals is known as the \emph{common agency} problem. With observable agent actions and a single agent, the problem of maximizing principals’ values was largely settled by \cite{bernheim1986menu}, and  \cite{prat2003games} generalize to a setting of multiple principals and multiple agents, still with observable actions. The paper \cite{LaviS19} instantiates the abstract concept of contractible contracts of \cite{peters2012definable} to study the price of anarchy in this complete-information game.
In contrast, with unobservable actions and incomplete information, the problem is still largely open. \cite{BernheimW86} first studied the common agency setting with unobservable actions and incomplete information, for the case of multiple principals and one agent. In the almost four decades that have passed since then, there has not been much advancement ever since. In the case of a single agent, \cite{alon2024incomplete} define and characterize a class of ``incomplete information VCG contracts (IIVCG)’’ and design a polynomial-time algorithm to determine whether a setting has an IIVCG contract which provides individual rationality for the principals and limited liability for the agent. They show that there exist instances in which such a contract does not exist.
With a single principal, several works have considered the design of contracts with incomplete information from a computational point of view~\cite{babaioff2006combinatorial,HoSV16,KleinbergK18,KR19,roughgarden2019approximately,DRT19,FOCS21}. Some of these works have multiple agents (e.g.,~\cite{XiaoWCTY20,AlonDPTT20}), or agent private types~\cite{GuruganeshSW20}.

We are not aware of any papers that consider multiple agents and multiple principals with private types.

\section{Preliminaries}\label{sec:preliminaries}

\paragraph{Multi-agency model.} This paper studies the contract design model of \emph{multi-agency} (multiple principals and multiple agents) \cite{Peters2003}. In this model, there are $m \geq 1$ principals delegating a project and interacting with $n \geq 1$ agents. Each agent $i \in [n]$ can choose to either exert effort or not. Under such a \emph{binary-action} setting, exerting effort incurs a strictly positive cost $c_i>0$ for every agent $i\in[n]$,
while not exerting effort has a cost of zero. We assume that each agent's cost is publicly known to all entities in the model.

The model studied in this paper focuses on the \emph{binary-outcome} case \cite{DuttingEFK23}, meaning the project delegated by the principals can either succeed or fail. A function $f: 2^{[n]} \mapsto [0, 1]$ maps the subset of agents choosing to exert effort to the probability that the project succeeds. We assume $f(\emptyset)=0$. The function $f$ is commonly known to all players. We adopt the classic \emph{hidden-action} setting, in which the principals can only observe the project's outcome, not the agents' actions. Each principal $j \in [m]$ has a value $v^{j} \geq 0$ for the success of the project, while her value is zero in case of a failure. The parameter $v^{j}$ is the \emph{private} information of each principal $j$, and is unknown to all the other entities. An instance of such a multi-agency model is denoted by $\mathcal{I} = \langle m, n, f, v, c \rangle$.

\paragraph{Contract and bids.} Since exerting effort is costly and the benefits of the project's success are obtained by principals, the agents have no interest in exerting effort. The \emph{contract} solves this moral hazard issue by defining transfers from principals to agents for incentivizing agents to exert effort.

Formally, a contract $\phi = \lbrace t, \delta\rbrace$ is a pair of mapping functions, $t: \mathbb{R}_{\geq 0}^{m} \mapsto \mathbb{R}_{\geq 0}^{m \times n}$ and $\delta: \mathbb{R}_{\geq 0}^{m} \mapsto \lbrace 0, 1\rbrace^{m}$, specified by the designer. On the side of the principals, the contract resembles a \emph{direct revelation mechanism} that requires each principal $j \in [m]$ to submit a \emph{bid} $b^{j} \in \mathbb{R}_{\geq 0}$.\footnote{Throughout, we use superscripts for principals and subscripts for agents.} A bid $b^{j}$ is said to be \emph{truthful} if $b^{j} = v^{j}$. The set of bids from all the principals is denoted by $b = \lbrace b^{j} \rbrace_{j \in [m]}$ and is called a bid profile. The contract $t$ takes the bid profile $b$ as input, and outputs a non-negative transfer $t_{j \to i}(b)$ from every principal $j$ to every agent $i$. The payment $t_{j \to i}(b)$ is being made only in case of success and in case of failure the payment is zero. Note that the payment in case of success can only depend on the bids and not on the set of agents who exerted effort since we have hidden actions.
The contract $\phi$ also outputs an eligibility profile $\delta$. For each principal $j \in [m]$, $\delta^{j} \in \lbrace 0, 1\rbrace$ indicates whether the valuation of $j$ can be realized through the success of the project.

If $\delta^j = 1$, principal $j$ obtains value $v^j$ upon success and makes payments according to the transfer rule $t$; otherwise, $j$ is excluded, receives no value, and makes no payment.

\paragraph{Utilities, equilibria, and incentive compatibility.} Fix the contract $\phi = \lbrace t, \delta\rbrace$. For each submitted bid profile $b$, a subgame $\mathcal{G} = \langle b, \phi\rangle$ is induced over the agents. Let $S \subseteq [n]$ be a subset of agents choosing to exert effort in $\mathcal{G} = \langle b, \phi\rangle$, then the expected utility of each agent $i \in [n]$ is given as
\begin{equation}
    u_{i}(b, \phi; S) = \begin{cases}
        \Big(\sum_{j \in [m]} t_{j \to i}(b)\Big) \cdot f(S) - c_{i} & \text{if}\; i \in S \\
        \Big(\sum_{j \in [m]} t_{j \to i}(b)\Big) \cdot f(S) & \text{if}\; i \notin S 
    \end{cases} \, . \label{eq: agent utility}
\end{equation}
The subset $S$ forms a (pure Nash) equilibrium in the subgame $\mathcal{G}$ if for every agent $i \in [n]$, it holds that
\begin{equation}
    u_{i}(b, \phi; S) \geq \begin{cases}
        \Big(\sum_{j \in [m]} t_{j \to i}(b)\Big) \cdot f(\setminussingle{S}{i}) & \text{if} \; i \in S\\
        \Big(\sum_{j \in [m]} t_{j \to i}(b)\Big) \cdot f(\setunionsingle{S}{i}) - c_{i} & \text{if} \,\ i \notin S
    \end{cases} \, . \label{eq: subgame equilibrium}
\end{equation}
When there are multiple equilibria in the subgame $\mathcal{G} = \langle b, \phi \rangle$, we assume that the agents always coordinate at the equilibrium $S(b, \phi)$ that favors the principals' welfare, which is discussed later, breaking welfare ties in favor of the highest success probability and any remaining ties by a fixed order.

Let $b$ be an arbitrary bid profile, and $S = S(b, \phi)$ be the equilibrium in the induced subgame $\mathcal{G} = \langle b, \phi\rangle$. The expected utility of each principal $j$ will be
\begin{equation}
    u^{j}(b, \phi; S) = \Big(v^{j} - \sum_{i \in [n]}t_{j \to i}(b) \Big) \cdot \delta^{j} \cdot f(S) \, . \label{eq: principal utility}
\end{equation}

Fix a principal $j$. Let $b$ and $\tilde{b}$ be two bid profiles
with $b^{j} = v^{j}$ and $b^{-j} = \tilde{b}^{-j}$. Let $S$ and $S'$ be the equilibria chosen by the agents in subgames $\mathcal{G} = \langle b, \phi\rangle$ and $\mathcal{G}' = \langle \tilde{b}, \phi\rangle$, respectively. The contract is said to be \emph{dominant-strategy incentive compatible} (\DSIC{}) if it always holds that
\begin{equation}
    u^{j}(b, \phi; S) \geq \max\Big\lbrace u^{j}(\tilde{b}, \phi; S'),\, 0 \Big\rbrace \, , \label{eq: DSIC}
\end{equation}
which means that a truthful bid is a dominant strategy and yields the principal $j$ non-negative utility. 

Note that this definition incorporates both \emph{truthfulness} and \emph{individual rationality} (IR).

\paragraph{Set functions and oracles.} For a subset $S \subseteq [n]$ and an element $i \in S$, define $\marginal{S}{i} = f(S) - f(\setminussingle{S}{i})$. The following classes of the set function $f(\cdot)$ are often considered in the literature.
\begin{itemize}
    \item \texttt{Additive}: $f$ is additive if for any $S \subset [n]$, $f(S) = \sum_{i \in S}f(\makeset{i})$.
    \item \texttt{Submodular}: $f$ is submodular if for any $S, S' \subseteq [n]$, $f(S) + f(S') \geq f(S \cup S') + f(S \cap S')$.
    \item \texttt{XOS}: $f$ is XOS if there exists $K$ additive functions $\lbrace a_{k} \rbrace_{k \in [K]}$ for some finite $K > 0$ such that for any $S \subseteq [n]$, $f(S) = \max_{k \in [K]}a_{k}(S)$.
    \item \texttt{Subadditive}: $f$ is subadditive if for any $S, S' \subseteq [n]$, $f(S) + f(S') \geq f(S \cup S')$.
\end{itemize}
It is well known that $\texttt{Additive} \subsetneq \texttt{Submodular} \subsetneq \texttt{XOS} \subsetneq \texttt{Subadditive}$. Throughout this paper, we assume that $f$ is \XOS{} and strictly monotone, that is, $f(S) < f(S')$ for any two subsets $S, S'$ such that $S \subsetneq S'$. 

This paper also adopts standard assumptions that the set function $f$ is succinctly represented, and the contract design algorithm has oracle access to $f$. There are two types of oracles often considered in the literature, the value oracle and the demand oracle.
\begin{itemize}
    \item Value oracle: when queried with a subset $S \subseteq [n]$, a value oracle returns the value $f(S)$.
    \item Demand oracle: when queried with an additive function $a: 2^{n} \mapsto \mathbb{R}_{\geq 0}$, an $\alpha$-demand oracle with $\alpha \in (0, 1]$ returns a subset $S \subseteq [n]$ satisfying $f(S) - a(S) \geq \alpha \cdot f(S') - a(S')$ for any $S' \subseteq [n]$.  An $\alpha$-demand oracle is also called a demand oracle when $\alpha = 1$. \cite{pmlr-v97-harshaw19a} show how to construct an $\alpha$-demand oracle in polynomial time for specific $\alpha$ values.
\end{itemize}

\paragraph{Value maximization and approximation.} Given an instance $\mathcal{I} = \langle m, n, f, v, c \rangle$ of the multi-agency model, the objective of this paper is to develop an algorithm that computes a \DSIC{} contract $\phi = \lbrace t, \delta \rbrace$ for $\mathcal{I}$ with a polynomial number of accesses to the oracles and maximizes 
\[
\big(\sum_{j \in [m]} v^{j} \delta^{j}\big) \cdot f\big( S(v, \phi) \big),\]
the expected total value of the eligible principals. We refer to this objective function as \emph{eligible values} for the rest of this paper. It can be viewed as a generalization of success probability maximization studied in the literature on contract design with a single principal (e.g.~\cite{FeldmanTzurPonitkaSchlesinger25, EC2025}).

For each instance $\mathcal{I}$ of the multi-agency model, let $\Phi(\mathcal{I})$ be the set of all \DSIC{} contracts, and $R: \mathbb{R}_{\geq 0}^{m} \mapsto \mathbb{R}_{\geq 0}$ be a \emph{revenue} function that maps the value profile $v$ to a non-negative real number. Define
\begin{equation}
    \texttt{OPT}(\mathcal{I}, R) = \sum_{j \in [m]} v^{j} \cdot \max_{\phi \in \Phi(\mathcal{I})}\Bigg\lbrace f\big( S(v, \phi) \big) \,\Bigg| \, \sum_{i \in [n], j \in [m]} t_{j \to i}(v) \leq R(v) \Bigg\rbrace \label{eq: definition of benchmark}
\end{equation}

be the upper bound on the \ev{} when the total transfer does not exceed $R(v)$. It can be viewed as the product of the total value of all principals (no one is excluded) and the project's optimal success probability by incentivizing the agents with budget $R(v)$ \cite{FeldmanTzurPonitkaSchlesinger25}. We call $R$ a revenue function because the total transfer that we collect from the principals resembles the revenue in auctions. We define the following notion with respect to the revenue function to evaluate the performance of contract design algorithms.

\begin{definition}[Approximation of Contract Design Algorithms]\label{def: algorithm approximation}
    A contract design algorithm is said to be $(\zeta,\, R)$-approximation with $\zeta \geq 1$ if for any instance $\mathcal{I}$, the algorithm generates a contract $\phi \in \Phi(\mathcal{I})$ such that 
    \begin{equation}
        \Big(\sum_{j \in [m]} v^{j} \delta^{j}\Big) \cdot f\big( S(v, \phi) \big) \geq \frac{1}{\zeta} \cdot \texttt{OPT}(\mathcal{I}, R) \, . \label{eq: approx ratio}
    \end{equation}
\end{definition}

In this paper, we examine the following categories of revenue functions in the approximation analysis.
\begin{enumerate}
    \item $\texttt{SumV}(v) := \sum_{j \in [m]}v^{j}$. 
    \item $\mathcal{F}_{x}^{2}(v) := x \cdot \max_{2 \leq j \leq m}\lbrace j \cdot \pi^{j}(v)\rbrace$, where $x>0$ and $\pi^{j}(v)$ is the $j$-th largest value in $v$. We set $\mathcal{F}_{x}^{2}(v)=0$ when $m<2$. 
    The notion $\mathcal{F}^{2}(v) \defeq \mathcal{F}_{x=1}^{2}(v)$ was suggested by~\cite{FiatGHK02} and in our context it represents the optimal fixed-price
revenue with at least two eligible principals.  
\end{enumerate}

\section{Broker-based Contract}\label{sec:broker}

Designing a contract for the multi-agency model requires the algorithm to make decisions over 
$O(m \cdot n)$-variables $\lbrace t_{j \to i}(b)\rbrace_{i \in [n], j \in [m]}$, and these decisions simultaneously affect the agents' responses in the subgames and the principals' strategies in the bidding game. This makes the design and analysis of the contract design algorithm complex.

To address this issue, this paper proposes a technical framework called \emph{BROker-based ContrAct DEsign} (BROCADE). In this framework, a non-strategic broker is introduced to collect transfers from the principals according to the bid profile and redistribute them to the agents to incentivize their actions. This mechanism effectively decouples the intricate interactions between principals and agents while reducing the number of decision variables to a linear scale of $O(m + n)$.

Denote the broker by $\beta$. The BROCADE protocol operates as follows.
\begin{enumerate}
    \item The broker declares two sets of transfer functions, $t_{\ANY \to \beta} = \lbrace t_{j \to \beta}: \mathbb{R}_{\geq 0}^{m} \mapsto \mathbb{R}_{\geq 0}\rbrace_{j \in [m]}$ and $t_{\beta \to \ANY} = \lbrace t_{\beta \to i}: \mathbb{R}_{\geq 0}^{m} \mapsto \mathbb{R}_{\geq 0} \rbrace_{i \in [n]}$, and a set of eligibility function, $\delta = \lbrace \delta^{j}: \mathbb{R}_{\geq 0}^{m} \mapsto  \lbrace 0, 1\rbrace \rbrace$. 
    For each bid profile $b \in \mathbb{R}_{\geq 0}^{m}$, $t_{j \to \beta}(b)$ and $t_{\beta \to i}(b)$ specify the transfer from principal $j$ to the broker and the transfer from the broker to agent $i$ when the project succeeds, respectively.
    Note that the broker also cannot observe agents' action profile; this is why the transfer functions depend only on the principal bids.
    \item Principals simultaneously report their bids $b$.
    \item Agents decide their response $S \subseteq [n]$ in the subgame induced by $b$, $t_{\beta}$, and $\delta$, where $t_{\beta} = \big\lbrace t_{\ANY \to \beta}, t_{\beta \to \ANY} \big\rbrace$.
    \item The project's outcome is realized according to the success probability $f(S)$. If the project succeeds, each principal $j \in [m]$ pays the broker $t_{j \to \beta}(b)$, and each agent $i \in [n]$ gets the payment $t_{\beta \to i}(b)$ from the broker.
\end{enumerate}

Under the BROCADE framework, the utilities of the agents and principals, $u_{i}$ and $u^{j}$, are similar to \EQref{eq: agent utility} and \EQref{eq: principal utility}, respectively. In particular, for a bid profile $b$ and a subset $S$ of agents that exerts effort,
\begin{equation}
    u_{i}(b, t_{\beta}; S) = \begin{cases}
        t_{\beta \to i}(b) \cdot f(S) - c_{i} & \text{if}\; i \in S \\
        t_{\beta \to i}(b) \cdot f(S) & \text{if}\; i \notin S 
    \end{cases} \; , \quad \text{and} \quad
    u^{j}(b, t_{\beta}; S) = \big(v^{j} - t_{j \to \beta}(b) \big) \cdot \delta^{j} f(S) \, . \label{eq: utilities with broker}
\end{equation}

\subsection{Equivalence between broker-free and broker-based contracts}

On top of its natural applied motivation to platform design, we next show that the presence of a broker is also useful to algorithm design: we show that the broker-free model (Section~\ref{sec:preliminaries}) and the broker-based model defined here are equivalent, in a formal way.

The BROCADE framework therefore 
provides a perspective on the problem that informs algorithm design without altering the problem's underlying structure. In particular, any broker-free contract $\phi$ can be expressed as a \emph{budget-balanced} broker-based contract $\phi_{\beta}$, and vice versa. 

\begin{definition}
A broker-based transfer function $t_{\beta}$ is said to be Budget-Balanced (\BB{}) if for any bid profile $b$,
\begin{equation}
    \sum_{j \in [m]}t_{j \to \beta}(b) = \sum_{i \in [n]}t_{\beta \to i}(b)\, , \label{eq: original definition of BB for det contract}
\end{equation}
which means that the total transfer received by the broker equals the total transfer he distributes.
\end{definition}

\begin{definition}\label{def: utility-wise equivalence}
    For a broker-free transfer function $t$ and a broker-based transfer function $t_{\beta}$, we say they are utility-wise equivalent if for any eligibility function $\delta$, any bid profile $b$, and any subset $S$ of agents, 
    \begin{equation}
        u_{i}(b, \lbrace t, \delta\rbrace; S) = u_{i}(b, \lbrace t_{\beta}, \delta\rbrace ; S) \quad \forall i \in [n]\, , \quad \text{and} \quad u^{j}(b, \lbrace t, \delta\rbrace; S) = u^{j}(b, \lbrace t_{\beta}, \delta \rbrace; S)  \quad \forall j \in [m] \, . \label{eq: transfer equivalence}
    \end{equation}
    The relation between $t$ and $t_{\beta}$ is denoted by $t \overset{u}{\sim} t_{\beta}$.
\end{definition}

The following lemma can be inferred from Definition \ref{def: utility-wise equivalence} and Eq.~\eqref{eq: utilities with broker}.

\begin{lemma}
\label{lemma: transfer equivalence}
For a broker-free transfer function $t$ and a broker-based transfer function $t_{\beta}$, if for any bid profile $b$, it holds that
\begin{equation}
    \sum_{i \in [n]}t_{j \to i}(b) = t_{j \to \beta}(b) \quad \forall j \in [m] \, , \quad \text{and} \quad \sum_{j \in [m]}t_{j \to i}(b) = t_{\beta \to i}(b) \quad \forall i \in [n] \, , \label{eq: equivalence between transfer functions}
\end{equation}
then we have $t \overset{u}{\sim} t_{\beta}$.
\end{lemma}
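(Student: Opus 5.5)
The plan is to verify the two identities in \EQref{eq: transfer equivalence} directly. We substitute the hypotheses \EQref{eq: equivalence between transfer functions} into the broker-free utility formulas \EQref{eq: agent utility} and \EQref{eq: principal utility}, and compare the result with the broker-based formulas in \EQref{eq: utilities with broker}. Fix an arbitrary eligibility function $\delta$, an arbitrary bid profile $b$, and an arbitrary subset $S \subseteq [n]$. Nothing about equilibria or the selection rule $S(b,\phi)$ is needed, because Definition~\ref{def: utility-wise equivalence} quantifies over all $S$.

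\textbf{Agent side.} For an agent $i \in [n]$, the broker-free utility depends on the transfers only through the aggregate $\sum_{j\in[m]} t_{j\to i}(b)$, which multiplies $f(S)$. The cost term $c_i$ appears exactly when $i \in S$. The second family of equalities in \EQref{eq: equivalence between transfer functions} replaces this aggregate by $t_{\beta\to i}(b)$. This gives case by case ($i\in S$ and $i\notin S$) exactly the broker-based expression for $u_i(b,\{t_\beta,\delta\};S)$.

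\textbf{Principal side.} For a principal $j\in[m]$, the broker-free utility is $\big(v^j - \sum_{i\in[n]} t_{j\to i}(b)\big)\delta^j f(S)$. The first family of equalities in \EQref{eq: equivalence between transfer functions} replaces $\sum_{i} t_{j\to i}(b)$ by $t_{j\to\beta}(b)$. The result is $(v^j - t_{j\to\beta}(b))\delta^j f(S)$, which is the broker-based $u^j$. The eligibility factor $\delta^j$ and $f(S)$ are the same in both models, so no further work is needed.

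Since $\delta$, $b$, $S$, $i$ and $j$ were arbitrary, $t \overset{u}{\sim} t_\beta$ follows. There is no real obstacle; the only point needing care is to observe that both utility formulas depend on the transfers solely through these row and column sums of the $m\times n$ transfer matrix. The identification of the two models is therefore purely a statement about marginals.
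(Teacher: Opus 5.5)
Your proposal is correct and matches the paper's approach. The paper gives no separate argument and only notes that the lemma follows from Definition~\ref{def: utility-wise equivalence} and \EQref{eq: utilities with broker}, which is exactly the substitution of the row sums and column sums into the two pairs of utility formulas that you carry out, for arbitrary $\delta$, $b$ and $S$.
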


\begin{lemma}\label{lemma: contract equivalence}
For any broker-free transfer function $t$, there always exists a \BB{} broker-based transfer function $t_{\beta}$ such that $t \overset{u}{\sim} t_{\beta}$, and vice versa.
\end{lemma}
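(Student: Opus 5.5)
Both directions go through Lemma~\ref{lemma: transfer equivalence}: it suffices to exhibit, for each bid profile $b$, transfers satisfying the two marginal conditions in \EQref{eq: equivalence between transfer functions}. These conditions say that the matrix $(t_{j \to i}(b))_{j,i}$ must have row sums $t_{j \to \beta}(b)$ and column sums $t_{\beta \to i}(b)$. The proof therefore reduces to the existence of a nonnegative matrix with prescribed row and column marginals. Such a matrix exists exactly when the marginals are nonnegative and have equal totals, which is where budget balance enters.

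\emph{Broker-free to broker-based.} Given $t$, I would define, for each bid profile $b$,
\[
t_{j \to \beta}(b) \defeq \sum_{i \in [n]} t_{j \to i}(b) \quad\text{and}\quad t_{\beta \to i}(b) \defeq \sum_{j \in [m]} t_{j \to i}(b).
\]
Both are nonnegative and depend only on $b$, so they are valid broker transfer functions. Each of the two sums $\sum_j t_{j \to \beta}(b)$ and $\sum_i t_{\beta \to i}(b)$ equals the double sum $\sum_{i,j} t_{j\to i}(b)$, so $t_\beta$ is \BB{}. Lemma~\ref{lemma: transfer equivalence} then gives $t \overset{u}{\sim} t_\beta$.

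\emph{Broker-based to broker-free.} Here I read the ``vice versa'' as starting from a \BB{} broker-based $t_\beta$; without \BB{}, equivalence can fail, since the broker-free totals paid and received always coincide. Fix $b$ and let $T(b) \defeq \sum_j t_{j\to\beta}(b) = \sum_i t_{\beta\to i}(b)$.
\begin{itemize}
\item If $T(b)=0$, then every marginal is zero, and I set $t_{j\to i}(b)=0$.
\item Otherwise I use the proportional (product) coupling
\[
t_{j \to i}(b) \defeq \frac{t_{j\to\beta}(b)\, t_{\beta\to i}(b)}{T(b)}.
\]
This is nonnegative. Summing over $i$ gives $t_{j\to\beta}(b)\cdot T(b)/T(b) = t_{j\to\beta}(b)$, and summing over $j$ gives $t_{\beta\to i}(b)$; the second computation uses \BB{}, because the denominator must equal the sum of the principal-side transfers.
\end{itemize}
This construction is pointwise in $b$, so $t$ is a well-defined function of the bids alone. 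Lemma~\ref{lemma: transfer equivalence} again yields $t \overset{u}{\sim} t_\beta$.

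The main obstacle is only a subtle point rather than a computation: the converse direction has to be stated for \BB{} broker-based contracts, and \BB{} is used precisely to make the column sums of the product coupling correct. A northwest-corner greedy filling would work equally well, but the product formula avoids any case analysis.
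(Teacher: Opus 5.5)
Your proof is correct and follows the paper's structure: both directions go through Lemma~\ref{lemma: transfer equivalence}, and \BB{} supplies the equal totals needed for a nonnegative matrix with the prescribed row and column sums. The one difference is how you build that matrix in the converse direction: you use the explicit product coupling, whereas the paper takes an optimal solution of a fractional bin-packing instance. Your closed form is more self-contained, since it avoids relying on the paper's unproved assertion that such an optimum fills every bin exactly.
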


\begin{proof}
Given a broker-free transfer function $t$, a utility-wise equivalent broker-based transfer function $t_{\beta}$ can be constructed by following Eq.~\eqref{eq: equivalence between transfer functions}. 

It is \BB{} because
\begin{equation*}
    \sum_{j \in [m]}t_{j \to \beta}(b) = \sum_{j \in [m]}\sum_{i \in [n]}t_{j \to i}(b) = \sum_{i \in [n]} t_{\beta \to i}(b) \, .
\end{equation*}

Now, suppose we are given a \BB{} broker-based contract $t_{\beta}' $. An equivalent broker-free contract $t'$ can be constructed by solving a fractional bin-packing problem (FBPP) for every bid profile $b$, where each agent $i$ is viewed as a bin with capacity $t_{\beta \to i}' (b)$, each principal $j$ is viewed as a splittable item of size $t_{\beta \to j}' (b)$, and $t_{j \to i}' (b)$ corresponds to the part of item $j$ allocated to the bin $i$.

The fact that the transfer function $t_{\beta} '$ is \BB{} means that in the corresponding instance of FBPP, the total bin capacity equals the total item size. Thus, in an optimal solution to the FBPP instance, all items are fully assigned and all bins are fully filled. It implies that $t'$ satisfies $\sum_{i' \in [n]}t_{j \to i'}'(b) = t_{j \to \beta}'(b)$ for every $j \in [m]$ and $\sum_{j' \in [m]} t_{j' \to i}'(b) = t_{\beta \to i}'(b)$ for every $i \in [n]$. By Lemma \ref{lemma: transfer equivalence}, $t' \overset{u}{\sim} t_{\beta}'$.
\end{proof}

Combining Lemma \ref{lemma: contract equivalence} with Eq.~\eqref{eq: subgame equilibrium} and Eq.~\eqref{eq: DSIC} gives the following result.

\begin{theorem}\label{theorem: objective equivalence}
    For any \DSIC{} broker-free contract $\phi = \lbrace t, \delta \rbrace$ and any bid profile $b$, there exists a \DSIC{} \& \BB{} broker-based contract $\phi_{\beta} = \lbrace t_{\beta}, \delta \rbrace$ such that $f\big( S(b, \phi) \big) = f\big( S(b, \phi_{\beta}) \big)$.
\end{theorem}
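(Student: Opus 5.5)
The construction is the one from the first half of the proof of Lemma~\ref{lemma: contract equivalence}, applied to the transfer function of $\phi$. Given the \DSIC{} broker-free contract $\phi = \lbrace t, \delta \rbrace$, I define $t_{\beta}$ pointwise for every bid profile $b$ by Eq.~\eqref{eq: equivalence between transfer functions}:
\[
t_{j \to \beta}(b) = \sum_{i \in [n]} t_{j \to i}(b), \qquad t_{\beta \to i}(b) = \sum_{j \in [m]} t_{j \to i}(b).
\]
I keep the same eligibility function $\delta$ and set $\phi_{\beta} = \lbrace t_{\beta}, \delta \rbrace$. Lemma~\ref{lemma: contract equivalence} then gives that $t_{\beta}$ is \BB{}, and Lemma~\ref{lemma: transfer equivalence} gives $t \overset{u}{\sim} t_{\beta}$. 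So for every $b$, every $S \subseteq [n]$, every agent $i$ and every principal $j$, the utilities in Eq.~\eqref{eq: utilities with broker} coincide with those in Eq.~\eqref{eq: agent utility} and Eq.~\eqref{eq: principal utility}.

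Next I show that the agents' subgames are identical. For each $b$, the equilibrium conditions in Eq.~\eqref{eq: subgame equilibrium} depend only on the total per-agent payment $\sum_j t_{j \to i}(b) = t_{\beta \to i}(b)$ and on $f$. Hence a set $S$ is an equilibrium of $\langle b, \phi \rangle$ if and only if it is an equilibrium of $\langle b, \phi_{\beta} \rangle$.

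The main obstacle is that the selected equilibrium must also coincide. The tie-breaking rule picks the equilibrium maximizing principals' welfare, then success probability, then a fixed order. Each of these criteria is a function of $S$, $\delta(b)$, $v$, and the principal utilities. These utilities agree by utility-wise equivalence, and the welfare quantity $\sum_j v^j \delta^j f(S)$ does not involve transfers at all. The selection therefore picks the same set, $S(b, \phi) = S(b, \phi_{\beta})$, which in particular gives $f\big( S(b, \phi) \big) = f\big( S(b, \phi_{\beta}) \big)$. This holds for every $b$, not only the fixed profile in the statement.

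Finally, \DSIC{} transfers directly. Fix $j$, with $b^j = v^j$ and a deviation $\tilde b$ that agrees with $b$ off coordinate $j$. The selected equilibria $S$ and $S'$ are the same under both contracts, and the principal utilities are equal for every $(b, S)$. So both sides of Eq.~\eqref{eq: DSIC} take the same values under $\phi_{\beta}$ as under $\phi$, and the inequality inherited from $\phi$ holds for $\phi_{\beta}$. Together with \BB{} from Lemma~\ref{lemma: contract equivalence}, this gives the theorem.
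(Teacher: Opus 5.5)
Your proposal is correct and follows the paper's argument. You build the \BB{} transfer $t_{\beta}$ via Lemma~\ref{lemma: contract equivalence}, note that the equilibrium conditions in Eq.~\eqref{eq: subgame equilibrium} coincide, and carry \DSIC{} over through utility-wise equivalence. You also spell out the step the paper leaves implicit: the tie-breaking rule selects the same equilibrium in both contracts.
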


Theorem \ref{theorem: objective equivalence} implies that the objective of this paper can be restated equivalently as follows: develop an algorithm that computes a \DSIC{} \& \BB{} broker-based contract $\phi_{\beta} = \lbrace t_{\beta}, \delta \rbrace$ for the given instance $\mathcal{I}$ with a polynomial number of oracle queries. This algorithm aims to maximize the \ev{}, $\sum_{j \in [m]}v^{j}\delta^{j}\cdot f\big( S(v, \phi_{\beta}) \big)$.

\section{Incentive Compatibility and Monotone Allocation Rules}\label{sec: myerson}

This section studies the \DSIC{} condition under the BROCADE framework. We begin with the agent transfers. For a target set $S\subseteq[n]$, the minimum success-contingent transfers that make $S$ an equilibrium are~\cite{DuttingEzraFeldmanKesselheim23}

\begin{equation}
    t_{\beta \to i} = \begin{cases}
        \frac{c_{i}}{\marginal{S}{i}} & \text{if} \; i \in S\\
        0 & \text{if} \; i \notin S
    \end{cases} \, . \label{eq: linear contract}
\end{equation}

Write $Q(S)=\sum_{i\in S}c_i/f(i:S)$, with $Q(\emptyset)=0$, and $T=\sum_j t_{j\to\beta}(b)$ for the minimum agent payments and total principal payments. For nonempty $S$, any surplus $T-Q(S)\geq0$ can be paid to an agent in $S$, chosen by a fixed order. The set $S$ remains an equilibrium, while every equilibrium is contained in $S$ because unpaid agents strictly prefer not to exert effort. This redistribution preserves the success probability and principal utilities and ensures exact budget balance. 
Consequently, the constructions below need only verify $T\geq Q(S)$ to establish \BB{}; we leave surplus redistribution implicit.

\paragraph{Allocation and payments.} A contract chooses the principals' eligibility profile $\delta$, the target agent set $S$, and the principal transfers $t_{\ANY\to\beta}$. We encode these choices as an allocation rule and a payment rule.

\begin{definition}[Allocation rule]
    Given a bid profile $b$, an allocation rule $\RULE{A}: \mathbb{R}_{\geq 0}^{m} \mapsto \lbrace 0, 1\rbrace^{m} \times2^{[n]}$ returns a pair of variables $\lbrace \delta, S\rbrace = \RULE{A}(b)$ that designate an eligibility profile for the principals and a subset of agents to incentivize, respectively.
\end{definition}

In the following, we abuse the notation and use $\RULE{A}(b; j)$ to represent the multiplication of the eligibility and the success probability for each principal $j$, that is,
\begin{equation}
    \RULE{A}(b; j) = \delta^{j} \cdot f(S) \, . \label{eq: allocation outcome for single principal}
\end{equation}

\begin{definition}[Payment rule]
    Given a bid profile $b$, a payment rule $\RULE{P}: \mathbb{R}_{\geq 0}^{m} \mapsto \mathbb{R}_{\geq 0}^{m}$ returns a vector $t_{\ANY \to \beta} = \RULE{P}(b)$ that specifies the transfer $t_{j \to \beta}(b)$ of each principal $j$ to the broker.
\end{definition}

With this transfer convention, a pair $\lbrace\RULE A,\RULE P\rbrace$ specifies a contract: $\RULE A$ determines $\delta$ and $S$, and $\RULE P$ determines $t_{\ANY\to\beta}$. Agents receive the minimum payments in Eq.~\eqref{eq: linear contract}, together with any surplus distributed as above. The characterization below therefore depends only on the effective allocation $\delta^j f(S)$ and the principal payments.

\begin{definition}[Implementability]\label{def: implementability}
    An allocation rule $\RULE{A}$ is implementable if there exists a payment rule $\RULE{P}$ such that the contract $\lbrace \RULE{A}, \RULE{P} \rbrace$ is \DSIC{}.
\end{definition}

Next, we will state the sufficient and necessary conditions for the implementability of the allocation rule.

\begin{definition}[Monotonicity]
    An allocation rule $\RULE{A}$ is said to be monotone if for any principal $j \in [m]$ and any two bid profiles $b$ and $\tilde{b}$ with $b^{j} \geq \tilde{b}^{j}$ and $b^{j'} = \tilde{b}^{j'}$ for each $j' \neq j$, $\RULE{A}(b; j) \geq \RULE{A}(\tilde{b}; j)$ always holds.
\end{definition}

The following theorem is inspired by Myerson's Lemma. However, Myerson's auction outputs only $\delta$, while here we also have $S$, which is a public good setting. In other words, in our case, all principals share the same success probability $f(S)$ while in Myerson's single-parameter setting, bidders may have individual winning probabilities. 
This, in turn, also implies a payment function -- different from Myerson's -- that we need to carefully construct.

\begin{theorem}\label{theorem: myerson}
    An allocation rule $\RULE{A}$ is implementable if and only if (iff.) it is monotone. For a monotone allocation rule $\RULE{A}$, the payment rule $\RULE{P}$ for which the contract $\lbrace \RULE{A}, \RULE{P} \rbrace$ is \DSIC{} can be formulated as
    \begin{equation}
        \RULE{P}^{j}(b) = \begin{cases}
        b^{j} - \frac{1}{\RULE{A}\big(b;\, j \big)}\int_{0}^{b^{j}}\RULE{A}\big(\tilde{b}^{j}, b^{-j};\, j \big)\, d\tilde{b}^{j} & \text{if} \; \RULE{A}\big(b;\, j \big) > 0 \\
        0 & \text{if} \; \RULE{A}\big(b;\, j \big) = 0
    \end{cases} \, . \label{eq: payment function}
    \end{equation}
\end{theorem}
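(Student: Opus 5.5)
The proof reduces to Myerson's single-parameter argument applied to the effective allocation $x(z)\defeq\RULE{A}(z,b^{-j};j)=\delta^j f(S)$, viewed as a function of principal $j$'s own bid $z$ with $b^{-j}$ fixed. By Eq.~\eqref{eq: utilities with broker}, a principal of value $v$ who bids $z$ gets utility $(v-\RULE{P}^j(z,b^{-j}))\,x(z)$. So it is natural to work with the \emph{expected payment} $p(z)\defeq \RULE{P}^j(z,b^{-j})\,x(z)$. This quantity is exactly what enters the utility: payments are only made upon success, and only by eligible principals. The utility is then the standard quasi-linear form $v\,x(z)-p(z)$. The agents' side does not interfere. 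Given $\RULE{A}$, the broker pays the transfers of Eq.~\eqref{eq: linear contract} plus surplus redistribution, so $S$ is the equilibrium selected and $f(S(b,\phi_\beta))$ is indeed $f(S)$. Hence the principal's utility depends on her bid only through $x(z)$ and $p(z)$.

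\emph{Necessity.} Assume $\{\RULE{A},\RULE{P}\}$ is \DSIC{}. Take $z<z'$ and write the two truthfulness inequalities: type $z$ does not gain by bidding $z'$, and type $z'$ does not gain by bidding $z$. This gives
\[
z\,x(z)-p(z)\ge z\,x(z')-p(z')\quad\text{and}\quad z'\,x(z')-p(z')\ge z'\,x(z)-p(z).
\]
Adding them yields $(z'-z)(x(z')-x(z))\ge 0$, so $x$ is non-decreasing, i.e.\ $\RULE{A}$ is monotone.

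\emph{Sufficiency.} Let $\RULE{A}$ be monotone. Define $p(z)=z\,x(z)-\int_0^z x(s)\,ds$. This is well defined because $x$ is monotone and bounded in $[0,1]$, hence Riemann integrable. Dividing by $x(z)$ when $x(z)>0$ gives exactly Eq.~\eqref{eq: payment function}. When $x(z)=0$, monotonicity forces $x\equiv 0$ on $[0,z]$, so $p(z)=0$ and setting $\RULE{P}^j=0$ is consistent.

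Then verify the three required properties.
\begin{itemize}
\item \emph{Nonnegativity of the payment:} $\int_0^z x\le z\,x(z)$ by monotonicity.
\item \emph{Individual rationality:} truthful utility equals $\int_0^v x(s)\,ds\ge0$.
\item \emph{Truthfulness:} the gain from bidding $z$ instead of $v$ is $\int_0^v x - \bigl(v\,x(z)-z\,x(z)+\int_0^z x\bigr)=\int_z^v (x(s)-x(z))\,ds$. This is nonnegative in both cases $z<v$ and $z>v$ by monotonicity, the usual area argument.
\end{itemize}
Since these hold for every $j$ and every $b^{-j}$, Eq.~\eqref{eq: DSIC} holds.

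The main obstacle is not the calculus but justifying the reduction to the quasi-linear form. We must check three things. First, the per-success payment $\RULE{P}^j$ is charged only when $\delta^j=1$ and the project succeeds, which gives the product $\RULE{P}^j x$. Second, the division in Eq.~\eqref{eq: payment function} is harmless. Third, the agents' equilibrium selection under the broker's transfers (minimal payments of Eq.~\eqref{eq: linear contract} plus surplus) indeed realizes $S$ from $\RULE{A}(b)$ for every bid, including misreports. Only then is the principal's utility a function of $x(z)$ and $p(z)$ alone.
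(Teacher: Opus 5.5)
Your proof is correct and uses essentially the same Myerson-style argument as the paper. Necessity comes from adding the two cross-type incentive constraints, and sufficiency from the integral payment formula checked via monotonicity. The difference is only one of presentation. You work with the expected payment $p(z)=\RULE{P}^j(z,b^{-j})\,x(z)=z\,x(z)-\int_0^z x(s)\,ds$, which remains valid when $x(z)=0$, so a single identity $\int_z^v (x(s)-x(z))\,ds\geq 0$ suffices and you avoid the paper's three-case verification of the sandwich inequality.
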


\begin{proof}
Fix a principal $j \in [m]$ and the bids $b^{-j}$ of principals other than $j$. For any non-negative real number $x$, define $\gamma(x) = \RULE{A}\big(x, b^{-j};\, j \big)$, and $\eta(x) = \RULE{P}^{j}(x, b^{-j})$. By Eq.~\eqref{eq: utilities with broker}, the expected utility of principal $j$ when it has a value of $v^{j} = x \geq 0$ and submits a bid of $b^{j} = y \geq 0$ will be $(x - \eta(y)) \cdot \gamma(y)$. Then Eq.~\eqref{eq: DSIC} holds iff.~for any real numbers $x \geq y \geq 0$, the following three inequalities hold together.
\begin{align}
    (y - \eta(y)) \cdot \gamma(y) \; \geq& \; (y - \eta(x)) \cdot \gamma(x) \label{eq: submit higher no good} \\
    (x - \eta(x)) \cdot \gamma(x) \; \geq& \; (x - \eta(y)) \cdot \gamma(y) \label{eq: submit lower no good} \\
    (x - \eta(x)) \cdot \gamma(x) \; \geq& \; 0 \label{eq: truthful submit non-negative} \, .
\end{align}
In particular, Eq.~\eqref{eq: submit higher no good} means that if principal $j$ has a low value of $y$, then the utility by bidding truthfully dominates the utility by submitting a higher bid $x$. Eq.~\eqref{eq: submit lower no good} is for the case where principal $j$ has a high value of $x$ and is symmetric to Eq.~\eqref{eq: submit higher no good}. Eq.~\eqref{eq: truthful submit non-negative} is to ensure that the utility incurred by the truthful bid is always non-negative. By rearranging Eq.~\eqref{eq: submit higher no good} and \eqref{eq: submit lower no good}, we have
\begin{equation}
    y \cdot \gamma(y) - y \cdot \gamma(x) \; \geq \; \eta(y)\gamma(y) - \eta(x)\gamma(x) \; \geq\; x \cdot \gamma(y) - x\cdot \gamma(x) \, . \label{eq: sandwich condition}
\end{equation}

Note that Eq.~\eqref{eq: sandwich condition} requires $y \cdot \gamma(y) - y \cdot \gamma(x) \geq x \cdot \gamma(y) - x\cdot \gamma(x)$, which is equivalent to $(y - x)[\gamma(y) - \gamma(x)] \geq 0$. It holds only if $\gamma(y)\leq \gamma(x)$ for any $x \geq y$, which means that the monotonicity of the allocation rule is a necessary condition of \DSIC{}. To show that it is a sufficient condition, we proceed to prove that the payment function constructed in Eq.~\eqref{eq: payment function} satisfies Eq.~\eqref{eq: truthful submit non-negative} and \eqref{eq: sandwich condition} when $\RULE{A}$ is monotone. For any real number $z \geq 0$, we rewrite Eq.~\eqref{eq: payment function} as

\begin{equation}
    \eta(z) = \begin{cases}
        z - \frac{1}{\gamma(z)}\int_{0}^{z}\gamma(z')dz' & \text{if} \; \gamma(z) > 0 \\
        0 & \text{if} \; \gamma(z) = 0
    \end{cases} \, . \label{eq: payment function rewritten}
\end{equation}
The function $\eta$ in Eq.~\eqref{eq: payment function rewritten} is well-defined, because the function $\gamma$ is monotone, 
which implies that the integral $\int_{0}^{z}\gamma(z')dz'$ exists. The integral is also bounded, since $\gamma(z') = \RULE A(z', b^{-j};j) \in [0,1]$ for any $z' \geq 0$.

Nonnegative transfers and IR follow from $0\leq\eta(z)\leq z$. When $\gamma(z)>0$, monotonicity bounds the integral in Eq.~\eqref{eq: payment function rewritten} between $0$ and $z\gamma(z)$. When $\gamma(z)=0$, the definition gives $\eta(z)=0$. Thus Eq.~\eqref{eq: truthful submit non-negative} holds in both cases.

To prove that Eq.~\eqref{eq: sandwich condition} is satisfied for any $x \geq y \geq 0$, we need to consider three subcases.
\begin{enumerate}
    \item $\gamma(y) > 0$. Since $\gamma$ is monotonically increasing, in such a case, we have $\gamma(x) > 0$. Then by Eq.~\eqref{eq: payment function},
        \begin{equation*}
        \eta(y)\gamma(y) - \eta(x)\gamma(x) \; = \; y \cdot \gamma(y) - \int_{0}^{y}\gamma(z)dz \, - x \cdot \gamma(x) + \int_{0}^{x}\gamma(z)dz \; = \; y \cdot \gamma(y) - x \cdot \gamma(x) + \int_{y}^{x}\gamma(z)dz \, .
        \end{equation*}
    Still, by using the monotonicity of $\gamma$, we have
        \begin{equation*}
        (x - y) \cdot \gamma(y) \; =\; \int_{y}^{x}\gamma(y)dz \; \leq \; \int_{y}^{x}\gamma(z)dz \; \leq \; \int_{y}^{x}\gamma(x)dz \; =\; (x - y) \cdot \gamma(x)
        \end{equation*}
    Putting the two inequalities above together gives
        \begin{align*}
        \eta(y)\gamma(y) - \eta(x)\gamma(x) \; \geq& \; y \cdot \gamma(y) - x \cdot \gamma(x) + (x - y) \cdot \gamma(y) \; = \; x\cdot \gamma(y) - x \cdot \gamma(x) \, , \\
        \eta(y)\gamma(y) - \eta(x)\gamma(x) \; \leq& \; y \cdot \gamma(y) - x \cdot \gamma(x) + (x - y) \cdot \gamma(x) \; = \; y\cdot \gamma(y) - y \cdot \gamma(x) \, ,
        \end{align*}
    which establishes Eq.~\eqref{eq: sandwich condition}.
    \item $\gamma(x) = 0$. In such a case, $\gamma(y)$ also equals zero. Then Eq.~\eqref{eq: sandwich condition} trivially holds.
    \item $\gamma(x) > 0$ and $\gamma(y) = 0$. In such a case, Eq.~\eqref{eq: sandwich condition} is equivalent to
    \begin{equation*}
        y\; \leq\; \eta(x)\; \leq \; x \, .
    \end{equation*}
    We have proved that $x \geq \eta(x)$ always holds, and will prove $\eta(x) \geq y$ by using the monotonicity of $\gamma$. In particular, since $\gamma(y) = 0$, we have $\gamma(y') = 0$ for any $y' \in [0, y]$. Therefore, 
    \begin{equation*}
        \int_{0}^{x}\gamma(x')dx' \; = \; \int_{y}^{x}\gamma(x')dx' \; \leq \; \int_{y}^{x}\gamma(x)dx' \; = \; \gamma(x) \cdot (x - y) \, .
    \end{equation*}
    Therefore,
    \begin{equation*}
        \eta(x) \; = \; x - \frac{1}{\gamma(x)}\int_{0}^{x}\gamma(x')dx' \; \geq \; x - \frac{1}{\gamma(x)}\cdot \gamma(x) \cdot (x - y) \; = \; y\, ,
    \end{equation*}
    which completes the proof.
\end{enumerate}

\end{proof}

\section{Budget-Scalable Selection}\label{sec: selection}

For an allocation rule $\RULE{A}$ defined in Section \ref{sec: myerson}, it has to make decisions on both sides of the contract entities, that is, specifying the eligibility $\delta$ for the principals and selecting the subset $S$ of agents to exert effort. This section breaks down the allocation decisions by defining a \emph{selection} function $\SEL: \mathbb{R}_{\geq 0} \mapsto 2^{[n]}$. Given a non-negative real number $B \geq 0$, called the \emph{budget}, as the input, a selection function $\SEL$ outputs a subset $S = \SEL(B) \subseteq [n]$ of agents that satisfies the following \emph{budget constraint}.
\begin{equation}
    \sum_{i \in S}\frac{c_{i}}{f(i: S)} \; \leq \; B \, . \label{eq: budget constraint}
\end{equation}
The selection function will be combined with different eligibility determination algorithms, either deterministic or random, in the next two sections, to generate desired allocation rules. The following notions and propositions are used in the design of the selection function and the analysis of the generated allocation rule and contract.

\begin{definition}[Budget-scalability]
    A selection rule $\SEL$ is budget-scalable if for any two real numbers $B \geq B' \geq 0$, it holds that 
    \begin{equation}
        f\big( \SEL(B) \big) \; \geq \; f\big( \SEL(B') \big) \, , \quad \forall B, B' \in \mathbb{R}_{\geq 0},\, B\geq B' \, . \label{eq: budget scalability}
\end{equation}
\end{definition}

For a selection function $\SEL$, let $\gamma_{f, \, \SEL}: \mathbb{R}_{\geq 0} \mapsto [0, 1]$ be a scalar function such that for any $B \geq 0$, $\gamma_{f, \, \SEL}(B) = f\big(\SEL(B)\big)$. We may omit the subscript when $f$ and $\SEL$ are obvious from the context.

\begin{lemma}
    When $\SEL$ is budget-scalable, the function $\gamma$ is a monotonically non-decreasing step function.
\end{lemma}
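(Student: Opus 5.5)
The lemma has two parts: $\gamma$ is monotonically non-decreasing, and $\gamma$ is a step function. I will prove them separately.

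\textbf{Monotonicity.} This is immediate from the definitions. For any $B \geq B' \geq 0$, budget-scalability (Eq.~\eqref{eq: budget scalability}) gives
\[
\gamma(B) = f\big(\SEL(B)\big) \geq f\big(\SEL(B')\big) = \gamma(B').
\]

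\textbf{Finitely many values.} I will use a finiteness argument. For every $B$, the set $\SEL(B)$ is a subset of $[n]$, so $\gamma(B) \in \{ f(S) : S \subseteq [n] \}$. This set has at most $2^n$ elements. Hence $\gamma$ takes at most $2^n$ distinct values $0 = y_0 < y_1 < \dots < y_K \le 1$; the value $y_0 = 0$ is attained only if $\SEL(B) = \emptyset$ for some $B$, since $f(\emptyset)=0$ and $f$ is strictly monotone.

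\textbf{Interval structure.} Next I will show that each level set $L_k = \gamma^{-1}(y_k)$ is an interval. Take $B_1 < B < B_2$ with $B_1, B_2 \in L_k$. Monotonicity gives $y_k = \gamma(B_1) \le \gamma(B) \le \gamma(B_2) = y_k$, so $B \in L_k$. Monotonicity also orders the level sets: every point of $L_k$ lies below every point of $L_{k+1}$.

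\textbf{Conclusion.} Together these show that $\mathbb{R}_{\ge 0}$ is partitioned into finitely many consecutive intervals $L_0, L_1, \dots, L_K$ (some possibly degenerate). On each $L_k$ the function $\gamma$ is constant, equal to $y_k$, and these values increase from one interval to the next. Hence
\[
\gamma = \sum_k y_k \mathbf{1}_{L_k},
\]
which is a non-decreasing step function with at most $2^n - 1$ jump points.

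\textbf{Main obstacle.} The only delicate point is the meaning of ``step function.'' Breakpoints may belong to either adjacent interval, and an interval may degenerate to a single point. The argument above allows both, because it only needs finitely many intervals with constant values. This representation is exactly what the payment formula of Theorem~\ref{theorem: myerson} needs: the integral $\int_0^{z} \gamma$ is well defined and reduces to a finite sum.
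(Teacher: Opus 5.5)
Your proof is correct and follows the same route as the paper: monotonicity comes directly from budget-scalability, and the step-function property comes from $\gamma$ being non-decreasing with a finite image (at most $2^n$ values, since $\SEL(B)\subseteq[n]$). Your decomposition of $\mathbb{R}_{\geq 0}$ into ordered interval level sets spells out a step the paper leaves implicit; the only cosmetic nit is that you label the smallest value $y_0=0$ before noting that it need not be attained.
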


\begin{proof}
    The function $\gamma$ is monotonically non-decreasing by Eq.~\eqref{eq: budget scalability}. It is a step function because it is non-decreasing, and its image is a finite set, as the image of $\SEL$ is a subset of $2^{[n]}$.
\end{proof}

\begin{definition}[Jump discontinuity]
    A positive real number $z > 0$ is said to be a jump discontinuity of the step function $\gamma(\cdot)$ if $\lim_{\epsilon \to 0}\gamma(z - \epsilon) \neq \lim_{\epsilon > 0}\gamma(z + \epsilon)$. 
\end{definition}

\begin{definition}[Optimal selection]\label{def: optimal selection}
    A selection function $\SEL$ is said to be optimal if for any budget $B \geq 0$, we have
    \begin{equation*}
        f\big( \SEL(B)\big) = \max_{S\subseteq [n]}\Big\lbrace f(S) \Big| \sum_{i \in S}\frac{c_{i}}{f(i:S)}\leq B \Big\rbrace \, .
    \end{equation*}
    Throughout, we denote the optimal selection function by $\SEL^{*}$.
\end{definition}

\begin{definition}[Approximate selection]
    A selection function $\SEL$ is said to be $\theta$-approximate for some $\theta \geq 1$ if for any budget $B \geq 0$, it holds that $f\big( \SEL^{*}(B) \big) \leq \theta \cdot f\big( \SEL(B) \big)$.
\end{definition}

It can be inferred from the optimality that $\SEL^{*}$ is a budget-scalable selection function. However, it cannot be adopted in our contract design due to computational infeasibility. In particular, even when $f$ is an additive function, the problem of optimizing the success probability $f$ under the budget constraint Eq.~\eqref{eq: budget constraint} is equivalent to the knapsack problem, which is NP-hard to solve \cite{FeldmanTzurPonitkaSchlesinger25}.
In \cite{EC2025}, Aharoni et al.~propose an $O(1)$-approximate algorithm for budget-constrained success probability maximization when $f$ is non-additive.

\begin{lemma}[\cite{EC2025}]\label{lemma: set-value oracle}
    Given a budget $B \geq 0$, there exists an algorithm $\mathtt{AHT}$ that outputs a subset $S \subseteq [n]$ of agents that satisfies Eq.~\eqref{eq: budget constraint} with respect to $B$. It runs in polynomial time using the value oracle and $\alpha$-demand oracle, and guarantees that $f\big(\SEL^{*}(B)\big) \leq \Big(\frac{48}{\alpha^2} + o(1) \Big)\cdot f(S)$.
For $B=0$, the algorithm returns $\emptyset$.
\end{lemma}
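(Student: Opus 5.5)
This lemma is imported from \cite{EC2025}, so the plan is to check that their result matches the hypotheses and notation here rather than to reprove it. \emph{First step:} match the optimization problem. Aharoni et al.\ maximize success probability subject to a budget on the total expected payment. By Eq.~\eqref{eq: linear contract}, the minimum success-contingent transfers that make $S$ an equilibrium are $c_i/\marginal{S}{i}$ for $i\in S$. Hence the total transfer under success is exactly $Q(S)$, and Eq.~\eqref{eq: budget constraint} is precisely the feasibility constraint they use. Strict monotonicity of $f$ guarantees $\marginal{S}{i}>0$, so $Q(S)$ is finite for every $S$. Since $f$ is \XOS{}, we are inside the class of functions their guarantee covers, so $f(\SEL^{*}(B))$ is their benchmark value.

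\emph{Second step:} check the oracle model and the constant. Their algorithm uses a value oracle and an $\alpha$-demand oracle, as defined in Section~\ref{sec:preliminaries}, and makes polynomially many queries. Their approximation factor is $48/\alpha^2+o(1)$, which is what we quote. If their statement is phrased with a demand oracle for $f$ against a scaled additive function, I will confirm it coincides with our definition $f(S)-a(S)\ge \alpha f(S')-a(S')$.

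\emph{Third step:} the boundary case $B=0$, which may need a one-line argument rather than a citation. Since every $c_i>0$, any nonempty $S$ has $Q(S)>0$. The only feasible set is therefore $\emptyset$, so $\SEL^{*}(0)=\emptyset$, and returning $\emptyset$ is exact. We can simply define $\mathtt{AHT}(0)=\emptyset$ without loss.

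The main obstacle I expect is the first step. Their paper may state the budget on expected payment, $f(S)\cdot Q(S)$, rather than on the success-contingent payment $Q(S)$. If so, I would adapt their algorithm by adding a guarantee on the output rather than re-running it. First, enumerate over a geometric grid of guesses for $f(\SEL^{*}(B))$. Second, rescale the budget for each guess. Third, keep the best feasible output. This should cost only constant factors, which are absorbed into the $o(1)$ term or the stated constant.
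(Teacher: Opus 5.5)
Your plan matches the paper, which gives no proof beyond importing the result from \cite{EC2025} with the constraint of Eq.~\eqref{eq: budget constraint}; your one-line argument for $B=0$ is exactly what justifies the added clause, since $c_i>0$ and strict monotonicity ($f(i:S)>0$) give $Q(S)>0$ for every nonempty $S$, leaving $\emptyset$ as the only feasible set. Your expected-payment contingency is not needed for the paper, and as sketched it would not give this lemma: a multiplicative loss from guessing and rescaling cannot be absorbed into $o(1)$ or into the fixed constant $48/\alpha^2$, and an output whose expected payment $f(S)Q(S)$ meets a rescaled budget need not satisfy $Q(S)\leq B$, so at best you would obtain some unspecified $O_\alpha(1)$ factor.
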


\begin{remark}
When the success probability function $f \in \mathtt{Submodular}$, an $\alpha$-demand oracle can be constructed using a polynomial number of queries to a value oracle \cite{harshaw2019submodular}, where $\alpha=1-\frac{1}{e}$.
\end{remark}

Although Lemma \ref{lemma: set-value oracle} provides an efficient algorithm for computing a constant approximation of $\SEL^{*}(B)$, it cannot be directly adopted as our selection function since we do not know whether \AHT{} is budget-scalable. Therefore, in the remainder of this section, we use it in a black-box manner to develop an $O(1)$-approximate budget-scalable selection function.

\subsection{Tagged partition-based selection}

We start by proving a property of the transfer $\frac{c_{i}}{f(i:S)}$ specified by Eq.~\eqref{eq: linear contract} to an agent $i$ for incentivizing an agent subset $S \ni i$.

\begin{lemma}\label{lemma: transfer for incentivizing a single agent}
    For any success probability function $f \in \mathtt{Subadditive}$, any non-empty agent subset $S \subseteq [n]$, and any agent $i \in S$, we have $\frac{c_{i}}{f(\makeset{i})} \leq \frac{c_{i}}{f(i:S)}$.\footnote{Since $\XOS{} \subseteq \texttt{Subadditive}$, this property directly holds for the \XOS{} success probability considered in this paper.}
\end{lemma}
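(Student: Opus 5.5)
The claim reduces to comparing the two marginal contributions $f(\makeset{i})$ and $\marginal{S}{i}$. Since $c_i>0$ is a fixed positive constant, the inequality $\frac{c_{i}}{f(\makeset{i})} \leq \frac{c_{i}}{f(i:S)}$ follows once we show
\[
0 < \marginal{S}{i} \leq f(\makeset{i}).
\]
The left inequality ensures both denominators are positive, so taking reciprocals preserves the order. I would establish each side of this chain separately and then conclude.

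\textbf{Positivity.} For positivity I would invoke the standing assumption that $f$ is strictly monotone. Since $\setminussingle{S}{i} \subsetneq S$, we get $f(\setminussingle{S}{i}) < f(S)$, hence $\marginal{S}{i} > 0$. The same argument applied to $\emptyset \subsetneq \makeset{i}$, together with $f(\emptyset)=0$, gives $f(\makeset{i})>0$. If one insists on using only subadditivity, as in the statement, the ratio $\frac{c_i}{f(i:S)}$ must be interpreted as $+\infty$ whenever the marginal is zero, and the inequality still holds trivially. I would mention this in one line and otherwise rely on strict monotonicity, which the paper assumes throughout.

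\textbf{Upper bound on the marginal.} Here I apply subadditivity to the sets $\setminussingle{S}{i}$ and $\makeset{i}$, whose union is $S$ because $i \in S$:
\[
f(S) \leq f(\setminussingle{S}{i}) + f(\makeset{i}).
\]
Rearranging gives $\marginal{S}{i} = f(S) - f(\setminussingle{S}{i}) \leq f(\makeset{i})$.

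\textbf{Conclusion.} Dividing $c_i$ by the two positive quantities in the chain yields the stated inequality. The footnote's remark then follows from the inclusion $\XOS{} \subseteq \texttt{Subadditive}$.

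I expect no real obstacle in this argument. The only point needing care is nonzero denominators, and strict monotonicity handles it.
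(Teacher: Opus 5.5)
Your proof is correct and is essentially the paper's argument: both apply subadditivity to $S\setminus\{i\}$ and $\{i\}$ to get $f(i:S)\le f(\{i\})$, and your positivity check via strict monotonicity makes explicit what the paper leaves implicit. One caveat on your aside: under subadditivity alone the marginal $f(i:S)$ can be negative, not merely zero, so the inequality relies on the paper's standing monotonicity assumption rather than on a $+\infty$ convention.
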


\begin{proof}
    For any $f \in \mathtt{Subadditive}$, we have $f(\makeset{i}) + f(\setminussingle{S}{i}) \geq f(S)$, which implies that $f(\makeset{i}) \geq f(S) - f(\setminussingle{S}{i}) = f(i:S)$. Therefore, this proposition holds.
\end{proof}

Define $B_{\min} = \min_{i \in [n]}\frac{c_{i}}{f(\makeset{i})}$ and $B_{\max} = \sum_{i \in [n]}\frac{c_{i}}{f(i:[n])}$ to be the minimum budget to incentivize a single agent and the budget sufficient for incentivizing all the agents, respectively. The values of these two parameters can be obtained in $O(n)$-time using the value oracle. By Lemma \ref{lemma: transfer for incentivizing a single agent}, $B_{\min}$ is also the minimum budget for incentivizing any non-empty subset of agents, and $0<B_{\min} \leq B_{\max}$. Although for some success probability function $f \in \mathtt{XOS} \setminus \mathtt{Submodular}$, there may exist a subset $S \subsetneq [n]$ such that $\sum_{i \in S}\frac{c_{i}}{f(i:S)} > B_{\max}$, the set $[n]$ is always a no-worse option than such a subset $S$ for the selection function, as $f$ is assumed to be strictly monotone. Therefore, the design of the selection function only needs to focus on the interval $[B_{\min}, B_{\max}]$.

\begin{algorithm}[t]
\caption{Tagged Partition-based Selection}
\label{algo: tps}
    \SetAlgoLined
    \KwIn{$B \in \mathbb{R}_{\geq 0}$}
	\KwOut{$S \subseteq [n]$}
    $\langle \lambda, \mu \rangle \leftarrow \texttt{TaggedPartition}(n, f, c)$;\\
    $K \leftarrow |\lambda|$;\\
    $S \leftarrow \emptyset$;\\
    \For{$k \in [K]$}{
        \If{$\lambda_{k} > B$}{\textbf{break};}
        \If{$f(\mu_{k}) > f(S)$}{$S \leftarrow \mu_{k}$;}
    }
    \textbf{return} $S$;
\end{algorithm}

Our selection function is implemented as the algorithm \emph{Tagged Partition-based Selection} (\texttt{TPS}) shown in Algorithm \ref{algo: tps}. It first computes a pair $\langle \lambda, \mu\rangle$ using the number $n$ of agents, the success probability $f$, and the cost profile $c$ as input parameters. Here $\lambda$ is an increasing sequence of $K$ real numbers $\lbrace \lambda_1 = B_{\min}, \lambda_{2}, \cdots, \lambda_{K} = B_{\max} \rbrace$ that specifies a partition of the interval $[B_{\min}, B_{\max}]$, and for every $k \in [K]$, the tag $\mu_{k}$ is referred to as a tag of $\lambda_{k}$ and represents a subset of agents such that $\sum_{i \in \mu_{k}}\frac{c_{i}}{f(i: \mu_{k})} \leq \lambda_{k}$ and $\mu_{K} = [n]$. For analysis, set $(\lambda_0,\mu_0)=(0,\emptyset)$; this auxiliary pair is not stored by the algorithm. The same tagged partition, including the choices among tied outputs of the black box, is used for every bid profile. The computation of $\mu_{k}$ will be discussed in the next subsection. For any given budget $B \in [B_{\min}, +\infty)$, \texttt{TPS} chooses the agent subset
\begin{equation}
    S \; =\; \argmax_{\mu_{k}: \lambda_{k} \leq B} f(\mu_{k} ) \, , \label{eq: chosen tag}
\end{equation}
the tag that maximizes the success probability with the budget $\lambda_{k} \leq B$. The following proposition can be derived from the description of \texttt{TPS}.

\begin{lemma}\label{lemma: basic properties of tps}
    \texttt{TPS} satisfies Eq.~\eqref{eq: budget constraint} for any $B \geq 0$ and is budget-scalable.
\end{lemma}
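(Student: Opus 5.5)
The proof reads both properties directly off Algorithm~\ref{algo: tps}, using only the stated invariants of the tagged partition. These invariants are: $\lambda$ is increasing with $\lambda_1 = B_{\min} > 0$; each tag satisfies $\sum_{i \in \mu_k} c_i / f(i:\mu_k) \leq \lambda_k$; and the partition (tags included) is fixed independently of the input $B$. The first step is to show that the output has the form of Eq.~\eqref{eq: chosen tag}. Because $\lambda$ is increasing, the loop breaks at the first $k$ with $\lambda_k > B$. It therefore scans exactly the prefix $P(B) = \{k : \lambda_k \leq B\}$, while the update rule keeps in $S$ a tag of maximum $f$-value among those scanned. Since $f(\emptyset)=0 \leq f(\mu_k)$, and the update uses strict improvement, the returned set is $\emptyset$ when $P(B)=\emptyset$. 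Otherwise it is a particular tag $\mu_{k^*}$ with $k^* \in P(B)$ and
\[
f(\mu_{k^*}) = \max_{k \in P(B)} f(\mu_k).
\]

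The budget constraint then follows at once. If $P(B) = \emptyset$, which includes every $B < B_{\min}$ and in particular $B = 0$, the output is $\emptyset$, and Eq.~\eqref{eq: budget constraint} holds trivially because the empty sum is $0 \leq B$. Otherwise the output is $\mu_{k^*}$ with $\lambda_{k^*} \leq B$, and the tag invariant gives
\[
\sum_{i \in \mu_{k^*}} \frac{c_i}{f(i:\mu_{k^*})} \leq \lambda_{k^*} \leq B.
\]

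For budget-scalability, take $B \geq B'$. Then $P(B') \subseteq P(B)$, since the sets are prefixes determined by thresholds. Because the tagged partition is the same for both inputs, $f(\texttt{TPS}(B))$ is a maximum over a superset of the indices used for $f(\texttt{TPS}(B'))$, together with the baseline $0$. Hence $f(\texttt{TPS}(B)) \geq f(\texttt{TPS}(B'))$, which is Eq.~\eqref{eq: budget scalability}.

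No step here is a real obstacle. The only point needing care is the remark that the partition and the tie-breaking inside \texttt{TaggedPartition} are fixed and do not depend on $B$. Without this, two calls could produce different tag sequences and the prefix-monotonicity argument would fail. The paper states explicitly that the same tagged partition is used for every bid profile, so I would simply invoke that assumption.
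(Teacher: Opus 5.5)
Your proposal is correct and follows the same route as the paper. The budget constraint comes from the tag invariant $\sum_{i\in\mu_k} c_i/f(i:\mu_k)\leq\lambda_k$ applied to an affordable tag. Budget-scalability holds because the tagged partition does not depend on $B$, so \texttt{TPS} maximizes $f$ over a prefix of tags that can only grow with $B$; you simply spell out the loop and prefix details, including the empty-output case for $B<B_{\min}$, that the paper's two-line proof leaves implicit.
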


\begin{proof}
By its definition, \texttt{TPS} satisfies Eq.~\eqref{eq: budget constraint} for any $B \geq 0$. Since $\langle \lambda, \mu\rangle$ does not depend on the input $B$, Eq.~\eqref{eq: chosen tag} implies that \texttt{TPS} is budget-scalable.    
\end{proof}

\begin{definition}[Jump ratio]
    A tagged partition $\langle \lambda, \mu \rangle$ is said to have a jump ratio of $r \geq 1$ if for any $k \in [0, K-1]$ and any budget $B \in [\lambda_{k},\, \lambda_{k+1})$,
\begin{equation}
    f(\mu_{k}) \; \geq \; \frac{1}{r}f\big( \SEL^*(B) \big) \, . \label{eq: jump ratio}
\end{equation}
\end{definition}

\begin{lemma}\label{lemma: from jump ratio to approx factor}
    When provided with a tagged partition $\langle \lambda, \mu \rangle$ with jump ratio $r$, \texttt{TPS} serves as an $r$-approximate selection function.
\end{lemma}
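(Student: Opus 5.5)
We need to show that for every budget $B \geq 0$, $f\big(\SEL^{*}(B)\big) \leq r \cdot f\big(\texttt{TPS}(B)\big)$. The plan is a case split on where $B$ falls relative to the tagged partition $\lambda_0 = 0 < \lambda_1 = B_{\min} < \cdots < \lambda_K = B_{\max}$. In each case we compare the tag $\mu_k$ of the largest breakpoint not exceeding $B$ with the set returned by \texttt{TPS}, and then apply the jump-ratio inequality \eqref{eq: jump ratio}.

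\textbf{Case $B < B_{\min}$.} Here $B \in [\lambda_0, \lambda_1)$. By Lemma~\ref{lemma: transfer for incentivizing a single agent}, every non-empty $S$ satisfies $\sum_{i\in S} c_i/f(i:S) \geq B_{\min} > B$. Hence $\SEL^{*}(B) = \emptyset$ and $f(\SEL^{*}(B)) = 0$. The algorithm also returns $\emptyset$, because $\lambda_1 > B$ triggers the break immediately. The inequality therefore holds trivially.

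\textbf{Case $B \in [\lambda_k, \lambda_{k+1})$ with $1 \leq k \leq K-1$.} The loop in Algorithm~\ref{algo: tps} scans every index $k' \leq k$ before breaking. Since it keeps the tag with the largest $f$, its output $S$ satisfies $f(S) = \max_{k' \leq k} f(\mu_{k'}) \geq f(\mu_k)$. This is exactly \eqref{eq: chosen tag}. Combining with \eqref{eq: jump ratio} gives $f(S) \geq f(\mu_k) \geq \frac{1}{r} f(\SEL^{*}(B))$.

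\textbf{Case $B \geq \lambda_K = B_{\max}$.} No index breaks the loop, and $\mu_K = [n]$ is among the candidates. So $f(S) \geq f([n])$. By strict monotonicity, $f([n]) \geq f(S')$ for every $S' \subseteq [n]$, in particular for $\SEL^{*}(B)$. Thus the ratio is $1 \leq r$.

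The only step needing care is this last, boundary case. The jump-ratio definition only covers $k \in [0, K-1]$, so the unbounded top interval must be handled separately, and the argument above does so by using $\mu_K = [n]$ together with monotonicity of $f$. A secondary point is to make sure the loop's break semantics match \eqref{eq: chosen tag}: because $\lambda$ is increasing, breaking at the first $\lambda_k > B$ means exactly the indices with $\lambda_k \leq B$ are considered. Finally, the budget feasibility of the output, needed for \texttt{TPS} to count as a selection function, has already been established in Lemma~\ref{lemma: basic properties of tps}.
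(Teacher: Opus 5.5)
Your proof is correct and follows the paper's argument: the same three-way split ($B<B_{\min}$; $B\in[\lambda_k,\lambda_{k+1})$ for some $k<K$; $B\geq B_{\max}$), with the middle case combining the jump-ratio inequality at the last affordable threshold with the fact that \texttt{TPS} maximizes over all affordable tags. Your explicit checks that the loop's break matches the arg-max formula and that the top interval needs separate handling via $\mu_K=[n]$ only spell out steps the paper states more briefly.
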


\begin{proof}
For $B<B_{\min}$, no nonempty agent set is feasible, so \texttt{TPS} and the optimum both return the empty set. For $B\geq B_{\max}$, the tag $\mu_K=[n]$ is affordable and optimal. Otherwise, let $k=\max\{h:\lambda_h\leq B\}$ be the last affordable threshold. Then $k<K$ and $B\in[\lambda_k,\lambda_{k+1})$. By the jump-ratio condition, $f(\mu_k)\geq f(\SEL^*(B))/r$. Equation~\eqref{eq: chosen tag} maximizes over every affordable tag, so its output $S$ satisfies $f(S)\geq f(\mu_k)$. The returned set need not be the tag at the last affordable threshold.
\end{proof}

\subsection{Computing the Tagged partition}

\begin{algorithm}[t]
\caption{Function \texttt{TaggedPartition}}
\label{algo: sad}
    \SetAlgoLined
    \KwIn{$n \in \mathbb{Z}_{\geq 1},\, f \in \XOS,\,  c \in \mathbb{R}_{> 0}^{n}$}
	\KwOut{$\langle \lambda, \mu \rangle \in \big( \mathbb{R}_{> 0} \times 2^{[n]} \big)$}
    $\lambda \leftarrow \emptyset,\; \mu \leftarrow \emptyset$;\\
    \For{$i \in [n]$}{
        $\rho_{i} \leftarrow \frac{c_{i}}{f(\makeset{i})}$;
    }
    $\mathbf{sort}([n], \rho, \text{ascending})$;\\
    $\rho_{n + 1} \leftarrow B_{\max}$;\\
    $k \leftarrow 1$;\\
    \For{$i \in [n]$}{
        $B' \leftarrow \rho_{i}$;\\
        \While{$B' < \rho_{i + 1}$}{
            $\lambda_{k} \leftarrow \makeset{B'}$;\\
            $\mu_{k} \leftarrow \makeset{\AHT(B')}$;\\
            $k \leftarrow k + 1$;\\
            $B' \leftarrow B' \cdot 2$;
        }
    }
    $\lambda_{k} \leftarrow \makeset{B_{\max}}$;\\
    $\mu_{k} \leftarrow [n]$;\\
    $\mathbf{return}\;\langle\lambda, \mu\rangle$;
\end{algorithm}

Next, we show how to implement the function \texttt{TaggedPartition} invoked in Algorithm \ref{algo: tps} to generate $\langle \lambda, \mu \rangle$ with well-bounded jump ratio $r$. For each agent $i \in [n]$, define $\rho_{i} = \frac{c_{i}}{f(\makeset{i})}$. As shown in Algorithm \ref{algo: sad}, \texttt{TaggedPartition} first sorts the elements in $[n]$ in ascending order of $\rho$ and takes $\rho_{n + 1} = B_{\max}$, which means that the agents are relabeled. Then, for every agent $i \in [n]$, \texttt{TaggedPartition} performs a doubling iteration starting from $B' = \rho_{i}$. In each round of the iteration, \texttt{TaggedPartition} adds the pair $\langle B', \AHT(B')\rangle$ to $\langle \lambda, \mu \rangle$, where $\AHT(B')$ represents the agent subset chosen by the algorithm in Lemma \ref{lemma: set-value oracle} with respect to $B'$, and then doubles $B'$ until $B' \geq \rho_{i + 1}$. The last step is to add $\langle B_{\max}, [n]\rangle$ to $\langle \lambda, \mu \rangle$ and output $\langle \lambda, \mu \rangle$. The elements of $\lambda$ and $\mu$ are indexed according to their insertion order. The following results are derived from the description of \texttt{TaggedPartition}.

\begin{lemma}\label{lemma: full interval}
    For any budget $B \in [B_{\min}, B_{\max}]$, we have $\lambda_{1} \leq B \leq \lambda_{K}$.
\end{lemma}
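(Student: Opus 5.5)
The claim is that $\lambda_1 \le B \le \lambda_K$ for every $B \in [B_{\min}, B_{\max}]$. I will prove it by identifying $\lambda_1$ and $\lambda_K$ directly from the execution of \texttt{TaggedPartition}. Both inequalities then follow immediately from the hypothesis on $B$. No approximation or XOS property is needed; only the ordering of the $\rho_i$ and the termination of the doubling loops are used.

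\textbf{Step 1: the last entry.} After sorting, $\rho_1 \le \rho_2 \le \dots \le \rho_n$, and by definition $\rho_1 = \min_i c_i/f(\makeset{i}) = B_{\min}$. The final two lines of Algorithm~\ref{algo: sad} append $\langle B_{\max}, [n]\rangle$ as the last inserted pair. Since entries are indexed by insertion order, $\lambda_K = B_{\max}$.

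\textbf{Step 2: the first entry, main case.} Suppose $\rho_1 < \rho_2$ (with $\rho_{n+1} = B_{\max}$ when $n=1$). Then the first outer iteration ($i=1$) sets $B' = \rho_1$. The while-condition $B' < \rho_2$ holds, so the very first inserted value is $\lambda_1 = \rho_1 = B_{\min}$.

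\textbf{Step 3: the first entry, degenerate case.} If $\rho_1 = \rho_2$, the first outer iteration inserts nothing. This is the step I expect needs the most care: I must show that the first value ever inserted is still at least $B_{\min}$ in a way that does not break the claim. Every inserted $B'$ is of the form $\rho_i 2^s$ with $s \ge 0$, so every $\lambda$-entry is at least $\rho_1 = B_{\min}$. Taking the first inserted value (or $\lambda_K = B_{\max}$ if the loops insert nothing), we get $\lambda_1 \ge B_{\min}$.

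For the claim I actually need $\lambda_1 \le B$, so I argue that the first inserted value equals $B_{\min}$. Let $i^*$ be the largest index with $\rho_{i^*} = \rho_1$. Outer iterations $i < i^*$ insert nothing, because $\rho_i = \rho_{i+1}$ fails the strict test. Iteration $i^*$ starts at $B' = \rho_{i^*} = B_{\min}$, and either $\rho_{i^*} < \rho_{i^*+1}$, so it inserts $B_{\min}$ first, or $i^* = n$ and $\rho_n = B_{\max}$. In the latter case $B_{\min} = B_{\max}$ by Lemma~\ref{lemma: transfer for incentivizing a single agent} ($\rho_n \le B_{\max}$). The only inserted pair is then the final one, so $\lambda_1 = \lambda_K = B_{\max} = B_{\min}$.

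In all cases $\lambda_1 = B_{\min}$ and $\lambda_K = B_{\max}$. This also uses that $\rho_{i+1} \le B_{\max}$ for $i<n$ and that the doubling loops terminate, since $B' > 0$ doubles each round. Hence every $B \in [B_{\min}, B_{\max}]$ satisfies $\lambda_1 \le B \le \lambda_K$.
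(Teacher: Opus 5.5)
Your proof is correct and follows the same route as the paper: $\lambda_K = B_{\max}$ comes from the final insertion, and $\lambda_1 = B_{\min}$ because tied thresholds produce empty loops, with the all-equal case collapsing to $B_{\min} = B_{\max}$. Your explicit appeal to Lemma~\ref{lemma: transfer for incentivizing a single agent} for $\rho_n \le B_{\max}$ makes precise a step that the paper's case split uses implicitly.
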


\begin{proof}
The sorted singleton thresholds satisfy $\rho_1=B_{\min}$. Repeated singleton thresholds produce empty loops. If some $\rho_i<B_{\max}$, the first nonempty loop starts at the same minimum value and inserts it as $\lambda_1$. If all thresholds equal $B_{\max}$, the final step inserts this value, which is also $B_{\min}$. The last inserted threshold is always $B_{\max}$, proving the claim.
\end{proof}

\begin{lemma}\label{lemma: doubling jumping}
    For any $k \in [K - 1]$, $\lambda_{k + 1} \leq 2 \cdot \lambda_{k}$.
\end{lemma}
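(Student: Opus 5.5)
The plan is a case analysis on how two consecutive thresholds $\lambda_k,\lambda_{k+1}$ arise in Algorithm~\ref{algo: sad}. Every threshold except the final one, $\lambda_K=B_{\max}$, is inserted inside some while loop for an agent $i$ (in the sorted order). Such a loop inserts the values $\rho_i,2\rho_i,4\rho_i,\dots$ as long as they stay strictly below $\rho_{i+1}$, where $\rho_{n+1}=B_{\max}$. So the thresholds from one loop form a geometric progression with ratio exactly $2$, all lying in $[\rho_i,\rho_{i+1})$. The sorted order gives $\rho_i\le\rho_{i+1}$, and loops with $\rho_i=\rho_{i+1}$ insert nothing.

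Fix $k\in[K-1]$ and let $\lambda_k=2^{p}\rho_i$ be inserted in the loop for agent $i$. There are three cases.

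\emph{Case 1: $\lambda_{k+1}$ comes from the same loop.} Then $\lambda_{k+1}=2\lambda_k$, and the bound holds with equality.

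\emph{Case 2: $\lambda_k$ is the last element of its loop.} The loop stopped because $2\lambda_k\ge\rho_{i+1}$. The next inserted threshold is either the first element of the next nonempty loop, for some agent $i'>i$, or the final value $B_{\max}$.
\begin{itemize}
\item If it comes from the loop of agent $i'$, then $\lambda_{k+1}=\rho_{i'}$. Every loop for agents $i+1,\dots,i'-1$ is empty, so $\rho_{i+1}=\rho_{i+2}=\cdots=\rho_{i'}$. Hence $\lambda_{k+1}=\rho_{i+1}\le 2\lambda_k$.
\item If it is the final insertion, every later loop is empty, so $\rho_{i+1}=\cdots=\rho_{n+1}=B_{\max}$. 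Hence $\lambda_{k+1}=B_{\max}=\rho_{i+1}\le2\lambda_k$.
\end{itemize}

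\emph{Case 3: $\lambda_k$ is the final threshold $B_{\max}$.} Then $k=K$, so this case does not arise for $k\le K-1$.

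The main obstacle is the bookkeeping around empty loops, including the degenerate situation where all $\rho$ values equal $B_{\max}$. In that situation $K=1$ and the claim is vacuous. The key observation that makes the argument work is that an empty loop for agent $i''$ forces $\rho_{i''}=\rho_{i''+1}$. Chaining these equalities identifies the next inserted threshold with $\rho_{i+1}$.

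A side remark on strict increase: the sequence is strictly increasing, since $\lambda_k<\rho_{i+1}\le\lambda_{k+1}$ at loop boundaries. This matches the paper's description of $\lambda$ but is not needed for the inequality.
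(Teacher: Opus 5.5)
Your proof is correct and is essentially the paper's argument: both split on whether $2\lambda_k<\rho_{i+1}$ (so $\lambda_{k+1}=2\lambda_k$) or not (so $\lambda_{k+1}=\rho_{i+1}\le 2\lambda_k$), and your chaining through empty loops and the final $B_{\max}$ insertion spells out the identification $\lambda_{k+1}=\rho_{i+1}$ that the paper asserts without comment. One small precision: $\rho_n\le\rho_{n+1}=B_{\max}$ comes from subadditivity rather than from the sorting, since $B_{\max}\ge c_n/f(n:[n])\ge\rho_n$ by Lemma~\ref{lemma: transfer for incentivizing a single agent}.
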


\begin{proof}
    Suppose that $\lambda_{k}$ is added in the iteration starting from $\rho_{i}$. When $\lambda_{k} \cdot 2 < \rho_{i + 1}$, $\lambda_{k + 1}$ is defined inside the ``while'' loop, which implies the claim. When $\lambda_{k} \cdot 2 \geq \rho_{i + 1}$,
    then $\lambda_{k+1} = \rho_{i + 1}$ and the claim again holds.
    
\end{proof}

The following lemma is inferred from Lemma \ref{lemma: full interval} and Lemma \ref{lemma: doubling jumping}.

\begin{lemma}\label{lemma: double increment}
    For any budget $B \in [B_{\min}, B_{\max}]$, there exists $k \in [K]$ such that $\lambda_{k} \leq B \leq 2 \cdot \lambda_{k}$.
\end{lemma}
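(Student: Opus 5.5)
The plan is to choose $k$ as the last threshold not exceeding $B$ and then use Lemma~\ref{lemma: full interval} and Lemma~\ref{lemma: doubling jumping} to trap $B$ between $\lambda_k$ and $2\lambda_k$. Fix $B \in [B_{\min}, B_{\max}]$. First I will check that the sequence $\lambda_1, \dots, \lambda_K$ produced by \texttt{TaggedPartition} is non-decreasing, which makes the index choice meaningful. Inside a single while loop the values strictly double, so they increase. Each loop for agent $i$ stops at the first value that is at least $\rho_{i+1}$, and that value is not inserted. The next nonempty loop starts at some $\rho_{i'} \geq \rho_{i+1}$, because the $\rho$'s are sorted. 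So every inserted value of the loop for $i$ is strictly below every inserted value of later loops. The final value $B_{\max}$ is at least every inserted $\lambda$, because each inserted value is below some $\rho_{i+1} \leq \rho_{n+1} = B_{\max}$.

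Next, define $k = \max\{h \in [K] : \lambda_h \leq B\}$. By Lemma~\ref{lemma: full interval}, $\lambda_1 \leq B$, so this set is nonempty and $k$ is well defined; it gives $\lambda_k \leq B$ immediately. I then split into two cases.

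If $k = K$, then $\lambda_K = B_{\max} \geq B$, so $B = \lambda_K \leq 2\lambda_K$, using $\lambda_K > 0$ since $B_{\min} > 0$.

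If $k < K$, then maximality of $k$ together with the monotonicity of $\lambda$ gives $B < \lambda_{k+1}$. Lemma~\ref{lemma: doubling jumping} then yields $B < \lambda_{k+1} \leq 2\lambda_k$.

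I expect the main obstacle to be the monotonicity of $\lambda$, and in particular the boundary where a loop ends and the next loop or the final $B_{\max}$ is appended. Lemma~\ref{lemma: doubling jumping} bounds only consecutive ratios, so it does not by itself rule out later thresholds dropping below $B$ again. If full monotonicity turned out to be delicate, for example with ties among the $\rho_i$, I would fall back to a weaker statement: every $h > k$ satisfies $\lambda_h > B$. This still gives $\lambda_{k+1} > B$, which is all the argument needs.
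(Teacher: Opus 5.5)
Your proof is correct and is essentially the paper's route: the paper simply infers this lemma from Lemma~\ref{lemma: full interval} and Lemma~\ref{lemma: doubling jumping}, which is exactly your choice of $k$ as the last threshold with $\lambda_k \leq B$, followed by $B < \lambda_{k+1} \leq 2\lambda_k$ (or $B = \lambda_K$ when $k = K$). Your monotonicity check on $\lambda$ is harmless but unnecessary, since $\lambda_{k+1} > B$ already follows from the maximality in $k = \max\{h \in [K] : \lambda_h \leq B\}$, as your own fallback observes.
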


After analyzing the structure of $\lambda$, we proceed to investigate the property of value maximization over the subset of $[n]$. 

\begin{lemma}\label{lemma: budget infeasibility}
    For any $i \in [n]$ and any budget $B < \rho_{i}$, a subset $S \subseteq [n]$ of agents cannot be incentivized with $B$ if $S \cap \lbrace i, i + 1, \cdots ,n\rbrace \neq \emptyset$.
\end{lemma}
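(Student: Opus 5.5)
The plan is to argue by contradiction: suppose $S$ can be incentivized with budget $B<\rho_i$, meaning $\sum_{i'\in S} c_{i'}/f(i':S)\leq B$, and that $S$ contains some agent $i'\geq i$. We work with the relabeling from \texttt{TaggedPartition}, under which the agents are sorted so that $\rho_1\leq\rho_2\leq\cdots\leq\rho_n$. With this ordering, every index $i'\in\{i,i+1,\dots,n\}$ satisfies $\rho_{i'}\geq\rho_i$.

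The key step is to lower bound the cost of incentivizing $S$ by the single term belonging to $i'$. Since $f$ is \XOS{} and therefore subadditive, Lemma~\ref{lemma: transfer for incentivizing a single agent} applies to the nonempty set $S\ni i'$ and gives
\[
\frac{c_{i'}}{f(\makeset{i'})}\leq \frac{c_{i'}}{f(i':S)}.
\]
Every other summand in the budget constraint Eq.~\eqref{eq: budget constraint} is nonnegative, because $c_{i''}>0$. We also have $f(i'':S)>0$ for each $i''\in S$, which follows from strict monotonicity; this positivity is also what makes the transfers in Eq.~\eqref{eq: linear contract} well defined.

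Chaining these facts gives
\[
\sum_{i''\in S}\frac{c_{i''}}{f(i'':S)} \;\geq\; \frac{c_{i'}}{f(i':S)} \;\geq\; \rho_{i'} \;\geq\; \rho_i \;>\; B.
\]
This contradicts the assumption that $S$ satisfies the budget constraint, so no such $S$ can be incentivized with budget $B$.

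There is essentially no hard step here. The only points needing care are that the relabeling after sorting is what makes $\rho_{i'}\geq\rho_i$ hold for $i'\geq i$, and that strict monotonicity ensures the marginals $f(i':S)$ are strictly positive, so that the ratios are finite and the inequalities are valid.
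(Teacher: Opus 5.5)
Your proof is correct and is essentially the paper's argument. The paper takes $\hat{i}$ to be the largest index in $S$, which is automatically at least $i$, rather than an arbitrary $i'\geq i$ in $S$. It then chains the same inequalities $\sum_{i'\in S} c_{i'}/f(i':S)\geq c_{\hat{i}}/f(\hat{i}:S)\geq \rho_{\hat{i}}\geq\rho_i>B$, using the XOS (subadditivity) bound $f(\hat{i}:S)\leq f(\{\hat{i}\})$. The paper leaves implicit your point that strict monotonicity makes every marginal $f(i'':S)$ positive.
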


\begin{proof}
    Let the element with the largest index in $S$ be $\hat{i}$. The transfer needed to incentivize $S$ is
    \begin{equation*}
        \sum_{i' \in S} \frac{c_{i'}}{f(i': S)}\; \geq \; \frac{c_{\hat{i}}}{f(\hat{i}: S)} \; \geq \; \frac{c_{\hat{i}}}{f(\makeset{\hat{i}})} \; \geq \; \frac{c_{i}}{f(\makeset{i})} \; > \; B \, ,
    \end{equation*}
    where the second inequality holds because $f$ is an \XOS{}  function, which implies that $f(\hat{i}: S) = f(S) - f(\setminussingle{S}{\hat{i}}) \leq f(\makeset{\hat{i}})$.
\end{proof}

For any budget $B$ and any subset $S \subseteq [n]$, define $\gamma^{*}(B; S) = \max_{S' \subseteq S: \sum_{i \in S'}c_{i}/f(i: S') \leq B}f(S')$, which represents the maximum value of incentivizing the subset of $S$ with budget $B$. The following proposition can be directly inferred from Lemma \ref{lemma: budget infeasibility}.

\begin{lemma}\label{lemma: local to global optimality}
    For any $i \in [n]$ and any budget $B < \rho_{i}$, $\gamma^{*}(B; [i - 1]) = \gamma^{*}(B; [n])$.
\end{lemma}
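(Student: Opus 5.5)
The claim is $\gamma^{*}(B; [i-1]) = \gamma^{*}(B; [n])$ whenever $B < \rho_{i}$. I will prove the two inequalities separately, using Lemma~\ref{lemma: budget infeasibility} to exclude every candidate set that contains an agent of index at least $i$. Throughout, agents are labeled in ascending order of $\rho$, as in \texttt{TaggedPartition}.

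The direction $\gamma^{*}(B; [i-1]) \leq \gamma^{*}(B; [n])$ is immediate. The maximum defining $\gamma^{*}(B;[i-1])$ ranges over subsets $S' \subseteq [i-1]$ satisfying the budget constraint. Each such $S'$ is also a subset of $[n]$ satisfying the same constraint. So the feasible family for $[i-1]$ is contained in the one for $[n]$, and the maximum can only increase.

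For the reverse direction, fix any $S' \subseteq [n]$ with $\sum_{i' \in S'} c_{i'}/f(i':S') \leq B$. Suppose $S'$ contained some agent of index at least $i$, i.e.\ $S' \cap \{i, \dots, n\} \neq \emptyset$. Since $B < \rho_i$, Lemma~\ref{lemma: budget infeasibility} says $S'$ cannot be incentivized with $B$, which contradicts feasibility. Hence $S' \subseteq [i-1]$. Therefore the two feasible families coincide, and the maxima are equal.

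The only subtle points are conventions rather than real obstacles. The empty set is always feasible, with $Q(\emptyset) = 0$ and $f(\emptyset) = 0$, so both maxima are well defined. When $i = 1$, we have $[0] = \emptyset$; then both sides equal $0$, because every nonempty set is infeasible when $B < \rho_1 = B_{\min}$.
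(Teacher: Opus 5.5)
Your proof is correct and takes the same approach as the paper, which simply notes that this lemma follows directly from Lemma~\ref{lemma: budget infeasibility}. Every set that is budget-feasible under $B<\rho_i$ lies in $[i-1]$, so the two feasible families coincide, and hence so do their maxima, including in the edge case $i=1$.
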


To analyze the jump ratio of the tagged partition $\langle \lambda, \mu \rangle$ produced by Algorithm \ref{algo: sad}, we utilize the following result proposed in \cite{FeldmanTzurPonitkaSchlesinger25} on optimizing an XOS function under the budget constraint.

\begin{lemma}[\cite{FeldmanTzurPonitkaSchlesinger25}]\label{lemma: doubling high payment}
    When $f$ is an \XOS{} function, for any subset $S$ of agents and any budgets $B, B'$ with $B > B' \geq \max_{i \in S}\frac{c_{i}}{f(\makeset{i})}$, we have
    \begin{equation*}
        \frac{\gamma^{*}(B; S)}{\gamma^{*}(B'; S)} \leq 64 \cdot \frac{B}{B'} + 1 \, .
    \end{equation*}
\end{lemma}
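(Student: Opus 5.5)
The approach is to fix an optimal set at the large budget and extract from it a set that is affordable at the smaller budget $B'$, losing only a factor linear in $B/B'$ plus an additive term absorbed by singletons. Let $T\subseteq S$ attain $\gamma^{*}(B;S)$, so $\sum_{i\in T}c_i/f(i:T)\le B$. Let $a$ be an additive function among the XOS clauses with $a(T)=f(T)$ and $a(U)\le f(U)$ for all $U$.

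First, I translate the budget constraint into $a$-weights. Since $f(T\setminus\{i\})\ge a(T\setminus\{i\})$, we get $f(i:T)\le a_i$. Hence each element has weight $w_i\defeq c_i/a_i\le c_i/f(i:T)$ and $\sum_{i\in T}w_i\le B$. I then view $T$ as a knapsack instance with item values $a_i$, weights $w_i$, and capacity $B'/\kappa$ for a constant $\kappa$ (I expect $\kappa=4$ or $8$). Running greedy by ratio $a_i/w_i$ gives a set $U\subseteq T$ with $\sum_{i\in U}w_i\le B'/\kappa$. The fractional greedy value is at least $(B'/(\kappa B))\,a(T)$, so either $a(U)\ge \tfrac12 (B'/(\kappa B))f(T)$, or the single overflowing item $i^{\ast}$ has $a_{i^\ast}\ge \tfrac12(B'/(\kappa B))f(T)$. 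In the second case I finish directly: by hypothesis $B'\ge \rho_{i^\ast}$, so $\{i^\ast\}$ is affordable at $B'$, and $f(\{i^\ast\})\ge a_{i^\ast}$. This case split is where the ``$+1$'' slack in the bound is used.

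Second, and this is the main obstacle, $U$ is cheap in terms of $w_i=c_i/a_i$, but the actual incentive cost uses the marginals $f(i:U)$, which may be much smaller than $a_i$. I would prove a pruning lemma in the style of the XOS marginal-preservation arguments of D\"utting et al. The idea is to iteratively delete any $i\in U$ whose marginal $f(i:U)$ falls below $a_i/2$ and to track the potential $f(U)-\tfrac12 a(U)$. Each deletion of such an element decreases $f$ by less than $a_i/2$ while decreasing $\tfrac12 a(U)$ by exactly $a_i/2$, so the potential does not decrease. The potential starts at least $\tfrac12 a(U)$, since $f(U)\ge a(U)$. The process ends with $U'\subseteq U$ satisfying $f(i:U')\ge a_i/2$ for every $i\in U'$, and with $f(U')\ge \tfrac12 a(U)$ by the potential bound. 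Then $\sum_{i\in U'}c_i/f(i:U')\le 2\sum_{i\in U'}w_i\le 2B'/\kappa\le B'$.

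Finally, I combine the constants. We obtain $\gamma^{*}(B';S)\ge f(U')\ge \tfrac14\,(B'/(\kappa B))\,f(T)$ up to the factor-2 losses, or the singleton bound in the other case. With $\kappa$ chosen so that the pruning stays within $B'$, this yields $\gamma^{*}(B;S)\le C\,(B/B')\,\gamma^{*}(B';S)$ with $C\le 64$, which implies the stated inequality. If the greedy step gives a worse constant, I would instead randomly partition $T$ into about $\lceil B/B'\rceil$ groups of roughly equal weight. I would then pick the group of largest $a$-value and apply the same pruning. This version also naturally produces the additive $1$ for the case $B/B'$ close to $1$.
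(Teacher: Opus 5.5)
Your proof is correct. Note that the paper has no proof of its own to compare against: the lemma is cited from \cite{FeldmanTzurPonitkaSchlesinger25} and used only as a black box in Theorem~\ref{theorem: sad}.

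Your argument has three steps:
\begin{enumerate}
\item Pass to the supporting clause $a$ of an optimal $T$. Use $f(i:T)\le a_i$ to replace the incentive constraint by a knapsack constraint with weights $c_i/a_i$.
\item Shrink $T$ by greedy. Handle the overflow item through the hypothesis $B'\ge\max_{i\in S}c_i/f(\{i\})$, which makes it affordable alone.
\item Restore the true incentive constraint by pruning with the potential $f(U)-\tfrac12 a(U)$.
\end{enumerate}

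This is self-contained and closes once you fix $\kappa=2$. The pruned set then costs at most $2\cdot B'/2=B'$. Both cases give $\gamma^{*}(B';S)\ge \frac{B'}{8B}\,\gamma^{*}(B;S)$, so the ratio is at most $8B/B'$. More generally it is $4\kappa B/B'$ for any $\kappa\ge 2$, so your bookkeeping ``up to the factor-2 losses'' is loose but harmless.

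What your route buys over the citation is self-containment and a sharper constant. In the proof of Theorem~\ref{theorem: sad}, where $B<2\lambda_k$, the factor $129$ would drop to $16$, giving a jump ratio of $16r_\alpha$.

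A few small corrections:
\begin{itemize}
\item The case split does not use the additive $+1$. Both branches give purely multiplicative bounds, so the $+1$ in the statement is just slack.
\item The random-partition fallback is unnecessary.
\item State explicitly that $a_i\ge f(i:T)>0$ for $i\in T$. This holds because $T$ is feasible at a finite budget and $f$ is strictly monotone, so the weights $c_i/a_i$ are well defined.
\item Read the conclusion in multiplied-out form, $\gamma^{*}(B;S)\le 8(B/B')\,\gamma^{*}(B';S)$. This covers $S=\emptyset$, where the stated ratio is $0/0$.
\end{itemize}
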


\begin{theorem}\label{theorem: sad}
    Algorithm \ref{algo: sad} produces a tagged partition $\langle \lambda, \mu \rangle$ of $[B_{\min}, B_{\max}]$ in polynomial time using value-oracle access and the $\alpha$-demand oracle. It ensures that the length $K$ of $\langle \lambda, \mu \rangle$ is bounded by $O\bigg( n + \log{\Big(\frac{B_{\max}}{B_{\min}}\Big)} \bigg)$, and the jump ratio $r$ is bounded by a constant depending on $\alpha$. 
\end{theorem}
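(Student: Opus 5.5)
Three things need to be checked: the length $K$, the running time, and the jump ratio. The first two are essentially counting, and the jump ratio comes from combining Lemma~\ref{lemma: local to global optimality}, Lemma~\ref{lemma: doubling high payment}, and the guarantee of \AHT{} in Lemma~\ref{lemma: set-value oracle}.

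\emph{Length and running time.} For each $i\in[n]$, the while loop starting at $B'=\rho_i$ runs $\lceil\log_2(\rho_{i+1}/\rho_i)\rceil$ times, or zero times when $\rho_{i+1}=\rho_i$. Each nonempty loop contributes at most $1+\log_2(\rho_{i+1}/\rho_i)$ thresholds. Summing over $i$, the logarithms telescope to $\log_2(\rho_{n+1}/\rho_1)=\log_2(B_{\max}/B_{\min})$. Adding the final pair gives $K\le n+1+\log_2(B_{\max}/B_{\min})$. Every threshold costs one call to \AHT{}, which is polynomial by Lemma~\ref{lemma: set-value oracle}. Sorting and computing $B_{\max}$ take $O(n\log n)$ time and $O(n)$ value queries. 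So the running time is polynomial in $K$, which I take to be the intended meaning of polynomial here.

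\emph{Jump ratio.} Fix $k\in[0,K-1]$ and $B\in[\lambda_k,\lambda_{k+1})$.

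For $k=0$ we have $B<\lambda_1=B_{\min}$. No nonempty set is affordable, so $f(\SEL^*(B))=0$ and the inequality is trivial.

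For $k\ge1$, let $i$ be the agent whose loop inserted $\lambda_k$, so $\rho_i\le\lambda_k<\rho_{i+1}$. By construction the next threshold satisfies $\lambda_{k+1}\le\min\{2\lambda_k,\rho_{i+1}\}$; this is the content of the proof of Lemma~\ref{lemma: doubling jumping}. Hence $B<\rho_{i+1}$, and Lemma~\ref{lemma: local to global optimality} (applied with index $i+1$) gives $f(\SEL^*(B))=\gamma^*(B;[i])$. The same lemma applied to $\lambda_k<\rho_{i+1}$ gives $\gamma^*(\lambda_k;[n])=\gamma^*(\lambda_k;[i])$. Every $i'\le i$ has $\rho_{i'}\le\rho_i\le\lambda_k$, so Lemma~\ref{lemma: doubling high payment} applies with $S=[i]$, $B'=\lambda_k$, and $B<2\lambda_k$. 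This yields
\begin{equation*}
\gamma^*(B;[i])\le(64\cdot 2+1)\,\gamma^*(\lambda_k;[i])=129\,f(\SEL^*(\lambda_k)).
\end{equation*}
(If $B=\lambda_k$ the bound holds trivially.) Finally, $\mu_k=\AHT(\lambda_k)$, so Lemma~\ref{lemma: set-value oracle} gives $f(\SEL^*(\lambda_k))\le(48/\alpha^2+o(1))f(\mu_k)$. Combining, $r\le 129(48/\alpha^2+o(1))$, a constant depending only on $\alpha$.

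\emph{Main obstacle.} The delicate step is justifying that $B$ and $\lambda_k$ fall inside the same "segment" $[\rho_i,\rho_{i+1})$. Only then is the relevant agent set $[i]$, whose agents are all individually affordable at budget $\lambda_k$, which is exactly the hypothesis Lemma~\ref{lemma: doubling high payment} needs. This is why the partition restarts at every $\rho_i$ instead of doubling only from $B_{\min}$.

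Two boundary cases need care. Repeated values $\rho_i=\rho_{i+1}$ give empty loops, so the inserting index $i$ must be taken as the one whose loop is nonempty. The last segment has $\rho_{n+1}=B_{\max}$ and the final tag $\mu_K=[n]$; it is handled by the same argument with $S=[n]$.
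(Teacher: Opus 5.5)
Your proposal is correct and follows essentially the same route as the paper: the same telescoping count giving $K\le n+1+\log_2(B_{\max}/B_{\min})$, and the same chain $\gamma^*(B;[n])=\gamma^*(B;[i])\le 129\,\gamma^*(\lambda_k;[i])=129\,\gamma^*(\lambda_k;[n])\le 129\,r_\alpha f(\mu_k)$, obtained from the local-to-global lemma, the XOS budget-doubling lemma, and the \AHT{} guarantee. The only difference is cosmetic and the two choices of $i$ coincide: you fix the threshold index $k$ and take $i$ to be the agent whose loop inserted $\lambda_k$, which matches the jump-ratio definition directly, whereas the paper fixes $B$ and takes $i$ to be the largest index with $\rho_i\le B$.
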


\begin{proof}
    By the description of \texttt{TaggedPartition}, the length is bounded by
    \begin{equation*}
        K \; \leq \; \sum_{i \in [n]} \Big\lceil \log_{2}\frac{\rho_{i + 1}}{\rho_{i}} \Big\rceil + 1 \; \leq \; \sum_{i \in [n]} \Big(\log_{2}\frac{\rho_{i + 1}}{\rho_{i}} + 1 \Big) + 1 \; = \; \log_{2}\frac{\rho_{n + 1}}{\rho_{1}} + n + 1\, .
    \end{equation*}
    For each tag, the algorithm described in Lemma \ref{lemma: set-value oracle} is invoked only once, which ensures the oracle efficiency of \texttt{TaggedPartition}.

    Let $r_\alpha=48/\alpha^2+o(1)$ be the approximation factor in Lemma~\ref{lemma: set-value oracle}. Fix $B\in[B_{\min},B_{\max})$, let $i$ be the largest index with $\rho_i\leq B$, and choose the last affordable threshold $\lambda_k$. The construction within $[\rho_i,\rho_{i+1})$ gives $\rho_i\leq\lambda_k\leq B<2\lambda_k$. By Lemma~\ref{lemma: budget infeasibility}, optimizing at either $B$ or $\lambda_k$ can be restricted to agents in $[i]$. Also $\lambda_k\geq\max_{h\leq i}\rho_h$, as required by the price-frugality lemma. If $B=\lambda_k$, the guarantee of \AHT{} directly gives the required approximation. For $B>\lambda_k$, we have
    \begin{align*}
        f\big(S^{*}(B) \big) \; =& \; \gamma^{*}(B; [n]) \; = \; \gamma^{*}(B; [i]) \; \leq \; \Big(64 \cdot \frac{B}{\lambda_{k}} + 1\Big) \gamma^{*}(\lambda_{k}; [i]) \\
        \leq& \; 129 \cdot \gamma^{*}(\lambda_{k}; [i]) \; = 129 \cdot \gamma^{*}(\lambda_{k}; [n]) \;
        \leq \; 129 \cdot r_{\alpha} \cdot f\big( \AHT(\lambda_{k})\big) \; = \; 129 r_{\alpha} \cdot f(\mu_{k}) \,
    \end{align*}
    where the second transition is by Lemma \ref{lemma: local to global optimality}, the third one follows from Lemma \ref{lemma: doubling high payment}, the fourth one uses $B<2\lambda_k$, and the sixth one is inferred from Lemma \ref{lemma: set-value oracle}. By Eq.~\eqref{eq: jump ratio}, the jump ratio $r$ is bounded by $129r_{\alpha}$, which completes the proof. Budgets below $B_{\min}$ have only the empty feasible set, while $\mu_K=[n]$ is optimal at and above $B_{\max}$.
\end{proof}

\begin{remark}
The construction in Algorithm~\ref{algo: sad} and Theorem~\ref{theorem: sad} is a black-box construction that does not rely on any properties of algorithm \AHT{}. Specifically, Algorithm~\ref{algo: sad} can adopt any $r$-approximate algorithm for the success probability maximization problem under the budget constraint and transform it to a budget-scalable $(129 \cdot r)$-approximate selection function.

\end{remark}

To summarize, \begin{corollary}
Algorithm \ref{algo: tps} provides a budget-scalable selection function which is $O(1)$-approximate.
\end{corollary}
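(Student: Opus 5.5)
The plan is to assemble the corollary from three earlier results, applied in order to Algorithm~\ref{algo: tps} run on the tagged partition $\langle \lambda, \mu\rangle$ returned by Algorithm~\ref{algo: sad}.

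\emph{Feasibility and budget-scalability.} Lemma~\ref{lemma: basic properties of tps} gives both directly. Each stored tag $\mu_k$ satisfies $\sum_{i\in\mu_k} c_i/f(i:\mu_k)\le \lambda_k$, and \texttt{TPS} only returns tags with $\lambda_k\le B$. Two cases need a quick check. For the tags produced by \AHT{}, this constraint holds by Lemma~\ref{lemma: set-value oracle}. For the final tag $\mu_K=[n]$, it holds because $\lambda_K=B_{\max}$ is defined as exactly that cost. Scalability follows because the partition is computed once, independently of $B$, so the set of affordable tags grows with $B$. Consequently the maximum in Eq.~\eqref{eq: chosen tag} is non-decreasing in $B$.

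\emph{Approximation.} Theorem~\ref{theorem: sad} states that the partition has jump ratio $r\le 129\, r_\alpha$, where $r_\alpha = 48/\alpha^2+o(1)$. Lemma~\ref{lemma: from jump ratio to approx factor} then turns this jump ratio into an $r$-approximation guarantee for \texttt{TPS} at every budget $B\ge 0$. That lemma covers the two edge regimes separately: for $B<B_{\min}$ only $\emptyset$ is feasible, and for $B\ge B_{\max}$ the set $[n]$ is affordable and, by strict monotonicity, optimal. Since $\alpha$ is a fixed constant (e.g.\ $1-1/e$ in the submodular case), $129 r_\alpha = O(1)$.

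\emph{Main check.} The step most worth verifying is that the jump-ratio definition, stated for $B\in[\lambda_k,\lambda_{k+1})$ with the auxiliary pair $(\lambda_0,\mu_0)=(0,\emptyset)$, matches what Theorem~\ref{theorem: sad} actually proves. The interval $[0,\lambda_1)=[0,B_{\min})$ is harmless, since $f(\SEL^*(B))=0$ there. On $[B_{\min},B_{\max})$, the proof of Theorem~\ref{theorem: sad} argues with the last affordable threshold, which is exactly the index $k$ used in Lemma~\ref{lemma: from jump ratio to approx factor}. Polynomial running time, together with the $O(n+\log(B_{\max}/B_{\min}))$ bound on $K$, is also inherited from Theorem~\ref{theorem: sad}. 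Combining these pieces proves the corollary.
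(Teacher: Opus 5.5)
Your proposal is correct and follows the paper's own route. The paper states the corollary as a direct summary of three results, and you assemble it the same way: Lemma~\ref{lemma: basic properties of tps} for feasibility and budget-scalability, Theorem~\ref{theorem: sad} for a jump ratio of at most $129 r_\alpha$, and Lemma~\ref{lemma: from jump ratio to approx factor} to turn jump ratio $r$ into an $r$-approximation. Your extra checks are sound and only make explicit what the paper leaves implicit: that $\mu_K=[n]$ is affordable at $\lambda_K=B_{\max}$, that the interval $[0,\lambda_1)$ is harmless, and that the jump-ratio interval matches the ``last affordable threshold'' used in the proof of Theorem~\ref{theorem: sad}.
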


\section{Deterministic Contracts: Positive and Negative Results}\label{sec: non-excludability}

Having laid the technical foundation in previous sections,
this section discusses deterministic contracts. We first present a non-ex\-cludable deterministic \DSIC{} contract that has a constant approximation of $\mathtt{OPT}(\cdot, \mathtt{SumV})$, the best possible benchmark for any \DSIC{} contract. However, it may inherently violate the \BB{} condition, as we establish an impossibility: no deterministic \DSIC{} + \BB{}
 contract provides a finite approximation against $\mathtt{OPT}(\cdot, R)$ for any $R$ with $R(v) \geq \max_{j}v^{j}$ over all possible $v \in \mathbb{R}_{\geq 0}^{m}$. 

\subsection{A Non-excludable Deterministic Contract}

In this subsection, we design a \emph{Non-ExcludablE Deterministic} (\texttt{NEED})

contract using the selection function \texttt{TPS} designed in Section \ref{sec: selection}. The contract is said to be non-excludable as its allocation rule outputs a constant eligibility profile $\delta = \vec{1}$ for any bid profile $b$.

Since the eligibility profile is fixed and the payment function is given in Theorem \ref{theorem: myerson}, it remains to decide the budget $B$ for the selection function $\texttt{TPS}$. However, it is not straightforward to determine an appropriate budget. 
Recall that the budget constraint on the selection function requires that
\begin{equation*}
    B \; \geq \; \sum_{i \in S} \frac{c_{i}}{f(i: S)}\, ,
\end{equation*}
where $S$ is the selected subset of agents. Here $B$ is the budget supplied to agent selection; it need not equal the principal payments. Budget balance would additionally require those payments to cover the agent transfers. Meanwhile, the payment function given by Theorem \ref{theorem: myerson} is formulated as
\begin{equation}
    \RULE{P}^{j}(b) = \begin{cases}
        b^{j} - \frac{1}{\RULE{A}(b;\, j)} \int_{0}^{b^j} \RULE{A}(\tilde{b}, b^{-j};\, j)  d\tilde{b} & \text{if} \;  \RULE{A}(b;\, j)  > 0 \\
        0 & \text{otherwise}
    \end{cases} \, , \nonumber
\end{equation}
where $\RULE{A}(b; \, j)$ relies on both the eligibility profile and the selected subset of agents.
Since principal payments depend on the allocation, we instead specify the agent-selection budget independently, without imposing budget balance. We introduce an efficiently computable function, nondecreasing in every coordinate, $\mathbf{B} : \mathbb{R}_{\geq 0}^{m} \mapsto \mathbb{R}_{\geq 0}$ of the bid profile to convert the original budget constraint to the following form. 
\begin{equation}
    \mathbf{B}(b) \geq \sum_{i \in S} \frac{c_{i}}{f(i:S)} \, . \label{eq: relaxed feasibility}
\end{equation}
Once the value of $\mathbf{B}(b)$ can be determined independently of the payment rule, we can decouple the computation of the allocation rule from that of the payment rule. The expression of $\mathbf{B}$ will be specified at the end of this section. 

With the function $\textbf{B}$, the contract \texttt{NEED} $\phi = \lbrace \RULE{A}, \RULE{P} \rbrace$ can be stated as follows. Given a bid profile $b$, the allocation rule $\RULE{A}(b) = \Big\lbrace \vec{1}, \texttt{TPS}\big( \mathbf{B}(b) \big) \Big\rbrace$ specifies an eligibility profile $\vec{1}$ and an agent subset chosen by \texttt{TPS} with respect to the budget $\mathbf{B}(b)$. The payment by each principal $j \in [m]$ is computed by plugging
\begin{equation}
    \RULE{A}(b; j) \; = \; \delta^{j} \cdot f\Big( \texttt{TPS}\big( \mathbf{B}(b) \big) \Big) \; = \; f\Big( \texttt{TPS}\big( \mathbf{B}(b) \big) \Big) \label{eq: deterministic allocation}
\end{equation}
into Eq.~\eqref{eq: payment function} as follows.
\begin{equation}
        \RULE{P}^{j}(b) = \begin{cases}
        b^{j} - \frac{1}{f\Big( \texttt{TPS}\big( \mathbf{B}(b) \big) \Big)}\int_{0}^{b^{j}}f\Big( \texttt{TPS}\big( \mathbf{B}(\tilde{b}, b^{-j}) \big) \Big)\, d\tilde{b} & \text{if} \; \mathbf{B}(b) \geq B_{\min} \\
        0 & \text{if} \; \mathbf{B}(b) < B_{\min}
    \end{cases} \, . \label{eq: deterministic payment}
\end{equation}

The following lemma shows that, by allowing the the payment rule to be fully aware of the tagged partition $\langle \lambda, \mu \rangle$ used by the selection function \texttt{TPS}, the integral in Eq.~\eqref{eq: deterministic payment} can be computed efficiently.

\begin{definition}[Breakpoint Oracle]
A breakpoint oracle for a coordinatewise nondecreasing budget function $\mathbf B$ returns $x(z)=\inf\{x\geq0:\mathbf B(x,b^{-j})\geq z\}$, given $b^{-j}$ and $z\geq0$. We set $\inf\emptyset=+\infty$.
\end{definition}

\begin{remark}
For $\mathbf B=\mathtt{SumV}$, the breakpoint is $x(z)=\max\{0,z-\sum_{\ell\ne j}b^\ell\}$. For $\mathbf B=\mathcal F_a^2$ with fixed $a>0$, sort the other bids as $d_1\geq\cdots\geq d_{m-1}$. If $z/a\leq\mathcal F^2(b^{-j})$, then $x(z)=0$. Otherwise, $x(z)$ is the minimum of $z/(ak)$ over $2\leq k\leq m$ satisfying $d_{k-1}\geq z/(ak)$, with the minimum of an empty set equal to $+\infty$. Indeed, when the other bids alone cannot attain the target, a supporting group of size $k$ must include $j$ and $k-1$ other bids of at least $z/(ak)$. Sorting takes $O(m\log m)$ time once per principal, after which each query takes $O(m)$ time.
\end{remark}

\begin{lemma}\label{lemma: payment complexity of NEED}
Given a breakpoint oracle, the payment in Eq.~\eqref{eq: deterministic payment} is computable with $O(K)$ breakpoint queries, value queries, and arithmetic operations, where $K=O(n+\log(B_{\max}/B_{\min}))$.
\end{lemma}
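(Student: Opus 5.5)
The plan is to exploit the fact that, for fixed $b^{-j}$, the integrand $\gamma(\tilde b)=f\big(\texttt{TPS}(\mathbf B(\tilde b,b^{-j}))\big)$ is a nondecreasing step function of $\tilde b$ with at most $K$ distinct values. This makes the integral a finite sum over steps whose endpoints are returned by breakpoint queries.

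\emph{Step 1: Precompute the step profile of $\texttt{TPS}$ as a function of the budget.} By Eq.~\eqref{eq: chosen tag}, $f(\texttt{TPS}(B))=\max\{f(\mu_k):\lambda_k\le B\}$ for $B\ge B_{\min}$, and it equals $0$ otherwise. Scanning $k=1,\dots,K$ and keeping a running maximum $g_k=\max_{h\le k} f(\mu_h)$ uses $K$ value queries. Then $f(\texttt{TPS}(B))=g_k$ for $B\in[\lambda_k,\lambda_{k+1})$, with $g_0=0$ below $\lambda_1$ and $g_K$ from $\lambda_K$ on. The tagged partition is fixed across bid profiles, so it can be computed once and reused.

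\emph{Step 2: Locate the breakpoints in bid space.} Since $\mathbf B$ is nondecreasing in $b^j$, the event $\{\tilde b:\mathbf B(\tilde b,b^{-j})\ge\lambda_k\}$ is an up-ray starting at $x_k:=x(\lambda_k)$. Monotonicity of the $\lambda_k$ gives $x_1\le x_2\le\cdots\le x_K$. For $\tilde b$ strictly between consecutive $x_k$, the integrand equals $g_k$. The question of whether the endpoint itself is attained (closed versus open at $x_k$) only changes the integrand on a finite set, which has measure zero.

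\emph{Step 3: Assemble the integral.} Truncating at $b^j$, we get
\[
\int_0^{b^j}\gamma(\tilde b)\,d\tilde b=\sum_{k=1}^{K}(g_k-g_{k-1})\,\max\{0,\;b^j-x_k\},
\]
using the convention $\max\{0,b^j-\infty\}=0$. This follows from writing $\gamma=\sum_k (g_k-g_{k-1})\mathbf 1[\tilde b\ge x_k]$ almost everywhere. The outer quantity $f(\texttt{TPS}(\mathbf B(b)))$ is one of the $g_k$; we pick out the one with the largest $k$ satisfying $\lambda_k\le\mathbf B(b)$, by a scan or by comparing $x_k\le b^j$. If $\mathbf B(b)<B_{\min}$, we output $0$ as in Eq.~\eqref{eq: deterministic payment}. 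In total this uses $K$ breakpoint queries, $K$ value queries and $O(K)$ arithmetic operations. The bound on $K$ is Theorem~\ref{theorem: sad}.

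\emph{Main obstacle.} The main difficulty is justifying the step decomposition rigorously. This requires checking that every jump of $\gamma$ in $\tilde b$ occurs at some $x_k$, that the infimum definition handles non-attainment and the value $+\infty$ correctly, and that the $\tilde b=x_k$ ambiguity does not affect the integral. I would handle it by proving the almost-everywhere indicator identity pointwise for $\tilde b\notin\{x_k\}$, directly from monotonicity of $\mathbf B$ and the definition of the infimum.
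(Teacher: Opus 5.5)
Your proposal is correct and follows essentially the paper's argument: prefix maxima of $f(\mu_k)$, one breakpoint query per threshold $\lambda_k$, and the observation that the integrand is constant on each open interval between consecutive breakpoints. Your increment form $\sum_{k}(g_k-g_{k-1})\max\{0,b^j-x_k\}$ is exactly the summation-by-parts rewriting of the paper's $\sum_{k=0}^{K}M_k(x_{k+1}-x_k)$, where the paper clips breakpoints at $b^j$. One small caution: to evaluate the allocation at the submitted bid, use the direct comparison $\lambda_k\le\mathbf B(b)$ rather than the test $x_k\le b^j$, since the latter misfires when the infimum defining $x_k$ equals $b^j$ but is not attained.
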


\begin{proof}
Fix $j$ and $b^{-j}$. Precompute $M_k=\max_{0\leq h\leq k}f(\mu_h)$ for $k=0,\ldots,K$, using $\mu_0=\emptyset$ and $M_0=0$. For every $k\in[K]$, let $x_k=\min\{b^j,x(\lambda_k)\}$, and set $x_0=0$ and $x_{K+1}=b^j$. These points are nondecreasing, but need not be distinct. They partition the bid interval, so
    \begin{equation*}
        \int_{0}^{b^{j}}f\Big( \texttt{TPS}\big( \mathbf{B}(\tilde{b}, b^{-j}) \big) \Big)\, d\tilde{b} =  \sum_{k=0}^{K} \int_{x_k}^{x_{k+1}}f\Big( \texttt{TPS}\big( \mathbf{B}(\tilde{b}, b^{-j}) \big) \Big)\, d\tilde{b}.
    \end{equation*}
For each nonempty open interval $(x_k,x_{k+1})$, exactly the first $k$ tag thresholds are affordable. Monotonicity of $\mathbf B$ ensures this even when a breakpoint is not attained: a point strictly above the infimum belongs to the upward-closed feasible set. Thus the success probability on that interval is $M_k$. Repeated points contribute zero, and values at finitely many endpoints do not affect the integral. Therefore,
    \begin{equation*}
        \int_{0}^{b^{j}}f\Big( \texttt{TPS}\big( \mathbf{B}(\tilde{b}, b^{-j}) \big) \Big)\, d\tilde{b} = \sum_{k=0}^{K}M_k(x_{k+1}-x_k).
    \end{equation*}
Clipping unattainable breakpoints to $b^j$ and retaining all zero breakpoints handles budgets that skip tags or already afford tags at bid zero. The formula also covers $b^j=0$. When the effective allocation at the submitted bid is zero, the payment is zero without division; otherwise substitute the integral into Eq.~\eqref{eq: deterministic payment}. Computing the prefix maxima and scanning the $K$ thresholds proves the claimed bound.
\end{proof}

\begin{lemma}\label{lemma: monotonicity of NEED}
    The allocation rule $\RULE{A}$ of the contract \texttt{NEED} is monotone.
\end{lemma}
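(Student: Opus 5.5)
The claim reduces to composing two monotone maps. Fix a principal $j$ and two bid profiles $b,\tilde b$ with $b^j\geq\tilde b^j$ and $b^{j'}=\tilde b^{j'}$ for all $j'\neq j$.

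Since \texttt{NEED} always outputs $\delta=\vec 1$, Eq.~\eqref{eq: deterministic allocation} gives
\[
\RULE A(b;j)=f\big(\texttt{TPS}(\mathbf B(b))\big)=\gamma_{f,\texttt{TPS}}\big(\mathbf B(b)\big),
\]
and likewise for $\tilde b$. The eligibility factor therefore plays no role, and only the success probability of the selected set matters.

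First, $\mathbf B$ is assumed nondecreasing in every coordinate, and $b$ differs from $\tilde b$ only by a weakly larger $j$-th coordinate. Hence $\mathbf B(b)\geq\mathbf B(\tilde b)$.

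Second, by Lemma~\ref{lemma: basic properties of tps}, \texttt{TPS} is budget-scalable. Applying Eq.~\eqref{eq: budget scalability} with $B=\mathbf B(b)$ and $B'=\mathbf B(\tilde b)$ yields
\[
f\big(\texttt{TPS}(\mathbf B(b))\big)\geq f\big(\texttt{TPS}(\mathbf B(\tilde b))\big),
\]
which is exactly $\RULE A(b;j)\geq\RULE A(\tilde b;j)$. This is the monotonicity required by the definition.

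The one point that needs care is budget-scalability across the whole range of budgets, including $B<B_{\min}$ and $B\geq B_{\max}$, and across budgets that skip thresholds. This holds because Algorithm~\ref{algo: tps} uses one fixed tagged partition $\langle\lambda,\mu\rangle$, computed independently of the bids and with tie choices fixed, and returns the best tag among those with $\lambda_k\leq B$. As $B$ grows, this set of affordable tags only grows, so the maximum of $f$ over it cannot decrease. Budgets below $\lambda_1$ give $\emptyset$ with $f(\emptyset)=0$, which is the minimum possible value.

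I do not expect a real obstacle here. The only subtlety is confirming that the agents' equilibrium choice matches the targeted set $S$, so that the effective allocation really is $f(S)$. That is guaranteed by the transfer convention in Eq.~\eqref{eq: linear contract} together with the equilibrium selection rule stated in Section~\ref{sec: myerson}.
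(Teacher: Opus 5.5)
Your proof is correct and takes the same route as the paper's proof. Since $\delta=\vec 1$, the effective allocation is $f(\texttt{TPS}(\mathbf B(b)))$, and monotonicity follows from the coordinatewise monotonicity of $\mathbf B$ together with the budget-scalability of \texttt{TPS} (Lemma~\ref{lemma: basic properties of tps}); your closing remarks on small budgets and equilibrium selection are harmless but not needed, because $\RULE A(b;j)$ is defined directly as $\delta^j f(S)$ for the targeted set $S$.
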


\begin{proof}
    Fix a principal $j \in [m]$. Let $b$ and $\tilde{b}$ be two bid profiles with $b^{j} \geq \tilde{b}^{j}$ and $b^{-j} = \tilde{b}^{-j}$. Then we have
    \begin{equation*}
        \RULE{A}(b; j) \; = \; f\Big( \texttt{TPS}\big( \mathbf{B}(b) \big) \Big) \; \geq \; f\Big( \texttt{TPS}\big( \mathbf{B}(\tilde{b}) \big) \Big) = \RULE{A}(\tilde{b}; j) \, ,
    \end{equation*}
    where the second transition holds because $\mathbf{B}$ is monotonically non-decreasing and \texttt{TPS} is budget-scalable by Lemma \ref{lemma: basic properties of tps}.
\end{proof}

Next, we present the approximation guarantee of \texttt{NEED} (recall Definition \ref{def: algorithm approximation}).

\begin{lemma}\label{lemma: approx ratio of NEED}
Let $r$ be the jump ratio of \texttt{TPS}. If $R$ is efficiently computable, nondecreasing in every coordinate, and admits an efficient breakpoint oracle, then \texttt{NEED} with $\mathbf B=R$ is a \DSIC{} $(r,R)$-approximate contract.
\end{lemma}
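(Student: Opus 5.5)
The proof splits into two parts: \DSIC{} and the approximation bound.

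\emph{\DSIC{}.} Lemma~\ref{lemma: monotonicity of NEED} shows that the allocation rule of \texttt{NEED} is monotone whenever $\mathbf B$ is coordinatewise nondecreasing. Taking $\mathbf B=R$ satisfies this hypothesis. Theorem~\ref{theorem: myerson} then gives \DSIC{} when the payment rule is Eq.~\eqref{eq: deterministic payment}.

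We must check that Eq.~\eqref{eq: deterministic payment} agrees with Eq.~\eqref{eq: payment function}. The only difference is that the case split is written as $\mathbf B(b)\geq B_{\min}$ rather than $\RULE A(b;j)>0$. These conditions coincide:
\begin{itemize}
\item If $\mathbf B(b)\geq B_{\min}=\lambda_1$, then the tag $\mu_1$ is affordable. Since $c_i>0$, every nonempty set has positive required budget, so the tag produced by \AHT{} at budget $B_{\min}$ is nonempty. Strict monotonicity then gives $f(\mu_1)>0$, and \texttt{TPS} returns a set of at least this value.
\item If $\mathbf B(b)<B_{\min}$, then \texttt{TPS} returns $\emptyset$ and $f(\emptyset)=0$.
\end{itemize}

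Efficient computability follows from the assumed efficiency of $R$ and of its breakpoint oracle together with Lemma~\ref{lemma: payment complexity of NEED}. The target set $S$ is made an equilibrium by the agent transfers in Eq.~\eqref{eq: linear contract}, and these fit within the budget by Lemma~\ref{lemma: basic properties of tps}. Budget balance is not claimed.

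\emph{Approximation.} Under truthful bids $b=v$, \texttt{NEED} sets $\delta=\vec 1$. Its objective is therefore $\mathtt{SumV}(v)\cdot f(\texttt{TPS}(R(v)))$. By Lemma~\ref{lemma: from jump ratio to approx factor}, this is at least $\mathtt{SumV}(v)\, f(\SEL^*(R(v)))/r$.

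It remains to show $\texttt{OPT}(\mathcal I,R)\leq \mathtt{SumV}(v)\, f(\SEL^*(R(v)))$, which is the step needing care. Take any \DSIC{} contract $\phi$ whose total transfer at $v$ is at most $R(v)$, and let $S=S(v,\phi)$ be its equilibrium. For each $i\in S$, the equilibrium condition in Eq.~\eqref{eq: subgame equilibrium} gives
\[
\Big(\sum_j t_{j\to i}(v)\Big)\,\marginal{S}{i}\geq c_i .
\]
Since $f$ is strictly monotone and $c_i>0$, we have $\marginal{S}{i}>0$, so $\sum_j t_{j\to i}(v)\geq c_i/\marginal{S}{i}$. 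Summing over $i\in S$ yields
\[
\sum_{i\in S}\frac{c_i}{\marginal{S}{i}}\leq \sum_{i,j}t_{j\to i}(v)\leq R(v).
\]
Hence $S$ satisfies the budget constraint of Eq.~\eqref{eq: budget constraint} with budget $R(v)$, and so $f(S)\leq f(\SEL^*(R(v)))$. Multiplying by $\sum_j v^j$ bounds the benchmark. This holds regardless of how $\phi$ sets eligibility, because \texttt{OPT} already multiplies by the total value $\sum_j v^j$.

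The only delicate point is this equilibrium-to-budget-feasibility step, in particular confirming that the marginals are positive so the division is valid. Strict monotonicity of $f$ guarantees this. Combining the two inequalities gives the $(r,R)$-approximation.
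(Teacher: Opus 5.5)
Your proof is correct and follows essentially the paper's route. \DSIC{} comes from Lemma~\ref{lemma: monotonicity of NEED} and Theorem~\ref{theorem: myerson}, and the ratio comes from the $r$-approximation of \texttt{TPS} together with $\texttt{OPT}(\mathcal I,R)\le\mathtt{SumV}(v)\,f(\SEL^*(R(v)))$. Your equilibrium-to-budget-feasibility argument actually proves this inequality, which the paper simply asserts, as an equality, from the definitions.

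One small slip: positive costs do not show that $\mu_1$ is nonempty. The correct reason is that some singleton is affordable at $B_{\min}$, so $f(\SEL^*(B_{\min}))>0$ by strict monotonicity. The constant-factor guarantee of \AHT{} then gives $f(\mu_1)>0$; in the degenerate case $\mu_1=[n]$, which is nonempty trivially.
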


\begin{proof}
    Let $B=R(v)$. If $B<B_{\min}$, only the empty set is feasible and the benchmark is zero. Otherwise let $k=\max\{h:\lambda_h\leq B\}$ be the last affordable threshold. Consider the following two cases.
    \begin{itemize}
        \item $k < K$. By the definition of jump ratio, we have $f(\mu_{k}) \geq \frac{1}{r}f\big( \SEL^{*}(B) \big)$.
        \item $k = K$. In such a case, we have $f(\mu_{k}) = f([n]) = f\big( \SEL^{*}(B) \big)$.
    \end{itemize}
    Therefore, with a truthful bid $b = v$, we have
    \begin{equation*}
        \Big( \sum_{j \in [m]} v^{j}\delta^{j} \Big) \cdot f\Big( \texttt{TPS}\big( \mathbf{B}(b) \big) \Big) = f\Big( \texttt{TPS}\big( R(v) \big) \Big) \cdot \sum_{j \in [m]} v^{j} \geq f(\mu_{k}) \cdot \sum_{j \in [m]} v^{j} \geq \frac{1}{r} f\Big( \SEL^{*}\big( R(v) \big) \Big) \cdot \sum_{j \in [m]} v^{j} \, ,
    \end{equation*}
    where the first equality holds because $\delta = \vec{1}$. 
    By Eq.~\eqref{eq: definition of benchmark} and Definition \ref{def: optimal selection}, we have $\texttt{OPT}(\mathcal{I}, R) = f\Big( \SEL^{*}\big( R(v) \big) \Big) \cdot \sum_{j \in [m]} v^{j}$. Therefore, this proposition follows from Eq.~\eqref{eq: approx ratio}. Monotonicity and the payment characterization give DSIC, and Lemma~\ref{lemma: payment complexity of NEED} gives efficient implementation under the stated oracle assumptions.
\end{proof}

Putting Theorem \ref{theorem: myerson}, Lemma \ref{lemma: payment complexity of NEED}, Lemma \ref{lemma: monotonicity of NEED}, and Lemma \ref{lemma: approx ratio of NEED} together, we get the following result.

\begin{theorem}\label{theorem: main result for NEED contract}
    \texttt{NEED} can be implemented as an $\big( O_{\alpha}(1), \texttt{SumV} \big)$-approximate \DSIC{} contract for any problem instance $\mathcal{I}$ in polynomial time using the value oracle and $\alpha$-demand oracle, where $O_{\alpha}(1)$ is a constant depending on $\alpha$.
\end{theorem}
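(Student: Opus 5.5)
The plan is to instantiate Lemma~\ref{lemma: approx ratio of NEED} with $R=\texttt{SumV}$. Then $r$ is the jump ratio of the tagged partition from Algorithm~\ref{algo: sad}, which is a constant depending only on $\alpha$ by Theorem~\ref{theorem: sad}. The first step is to check the hypotheses on $R$. $\texttt{SumV}(b)=\sum_j b^j$ is computable in $O(m)$ time, and it is nondecreasing in every coordinate because all bids are non-negative. The remark after the breakpoint-oracle definition gives the closed form $x(z)=\max\{0,z-\sum_{\ell\neq j}b^\ell\}$, so each breakpoint query costs $O(m)$. Lemma~\ref{lemma: approx ratio of NEED} then yields that \texttt{NEED} with $\mathbf B=\texttt{SumV}$ is a \DSIC{} $(r,\texttt{SumV})$-approximate contract, with $r\le 129 r_\alpha = 129(48/\alpha^2+o(1)) = O_\alpha(1)$.

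DSIC comes from two results. Lemma~\ref{lemma: monotonicity of NEED} gives monotonicity of the effective allocation $f(\texttt{TPS}(\mathbf B(b)))$. Theorem~\ref{theorem: myerson} then shows that the payment rule in Eq.~\eqref{eq: deterministic payment} makes the contract DSIC, including IR and non-negative transfers. The agent side uses the minimum transfers of Eq.~\eqref{eq: linear contract} for the selected set $S$, which make $S$ an equilibrium. I will note explicitly that no budget-balance claim is made here, so the broker may pay agents more than it collects.

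The remaining step, and the one needing the most care, is polynomial running time. First, \texttt{TaggedPartition} is run once, independent of the bids. It makes $K=O(n+\log(B_{\max}/B_{\min}))$ calls to \AHT, each polynomial by Lemma~\ref{lemma: set-value oracle}. The bound on $K$ comes from Theorem~\ref{theorem: sad}. Here I must argue that $\log(B_{\max}/B_{\min})$ is polynomial in the input bit-length. Both quantities are computed from the costs and from $O(n)$ value-oracle answers, so their ratio has polynomially many bits, assuming the oracle returns rationals of polynomial encoding size, which I would state as the standard convention.

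Next, given the partition, evaluating \texttt{TPS} at budget $\sum_j b^j$ is an $O(K)$ scan. Each payment uses Lemma~\ref{lemma: payment complexity of NEED}: $O(K)$ breakpoint queries, value queries and arithmetic operations per principal, so $O(mK)$ in total. Combining the approximation bound, the DSIC argument and this time bound proves Theorem~\ref{theorem: main result for NEED contract}.
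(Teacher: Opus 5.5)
Your proposal is correct and follows essentially the same route as the paper, which obtains the theorem by combining Theorem~\ref{theorem: myerson}, Lemma~\ref{lemma: payment complexity of NEED}, Lemma~\ref{lemma: monotonicity of NEED} and Lemma~\ref{lemma: approx ratio of NEED} with $R=\texttt{SumV}$, using the explicit breakpoint formula and the constant jump ratio from Theorem~\ref{theorem: sad}. Your extra observation that $\log(B_{\max}/B_{\min})$ must be polynomial in the input encoding is a detail the paper leaves implicit, and it is consistent with the paper's claims.
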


For example, let $m=2$, $n=1$, and $c_1=f(\{1\})=1$. At bids $(1,1)$, \texttt{NEED} selects the agent. Each principal's effective allocation remains one even when she lowers her bid to zero, so both integral payments are zero, while the agent requires a transfer of one.

\subsection{Impossibility Results for Deterministic Contracts}

Note that the contract described in Theorem \ref{theorem: main result for NEED contract} may not be \BB{}. This is because the approximation guarantee assumes the allocation rule has a budget $B = \sum_{j \in [m]} v^{j}$, while the total transfer collected from the principals can be strictly lower than the total value. However, the impossibility result given in Theorem \ref{theorem: approximation impossbility with DSIC and BB} implies that to achieve an approximation guarantee $(\zeta, \mathtt{SumV})$ for any finite real number $\zeta \geq 1$, it is unavoidable to violate the \BB{} constraint.

\begin{theorem}\label{theorem: approximation impossbility with DSIC and BB}
    Fix $m\geq1$ and a revenue function $R$ satisfying $R(v)\geq\max_{j\in[m]}v^j$ for every $v\in\mathbb R_{\geq0}^m$. For every finite $\zeta\geq1$, no deterministic \DSIC{} and \BB{} contract guarantees $(\zeta,R)$-approximation over all instances, even with two agents, strictly positive costs, and an additive success probability function.\footnote{
    Note that for the existing results on contract design with a single principal, the DSIC condition is not defined in the absence of bids. Therefore, the impossibility result stated in Theorem \ref{theorem: approximation impossbility with DSIC and BB} does not conflict with the approximation guarantees in the previous literature.}
\end{theorem}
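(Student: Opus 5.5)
The plan is to reduce to a single effective principal and exploit the gap between the Myerson-type payment of Theorem~\ref{theorem: myerson} and the minimum agent transfers in Eq.~\eqref{eq: linear contract}. Fix any $\zeta\ge 1$ and suppose a deterministic \DSIC{} and \BB{} contract $\phi$ is $(\zeta,R)$-approximate. I work with principal $1$ having value $x\ge 0$ and all other principals having value (and truthful bid) $0$. By IR in Eq.~\eqref{eq: DSIC}, and the bound $0\le\eta(z)\le z$ from the proof of Theorem~\ref{theorem: myerson}, every other principal pays $0$. Hence \BB{} at profile $(x,0,\dots,0)$ says that principal $1$'s payment $\eta(x)$ must cover $Q(S(x))$, where $S(x)$ is the selected agent set (by Theorem~\ref{theorem: objective equivalence} I may argue in the broker-based model). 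Write $\gamma(x)=\delta^1 f(S(x))$ for principal $1$'s effective allocation. The eligible values then equal $x\,\gamma(x)$, and the benchmark equals $x\cdot f(\SEL^*(R(v)))$ with $R(v)\ge x$.

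I will use two agents with an additive $f$. Agent $1$ is cheap but weak: $f(\{1\})=\varepsilon$ and $c_1=\varepsilon/4$, so that $Q(\{1\})=1/4$. Agent $2$ is strong: $f(\{2\})=1/2$ and $c_2=1/2$, so that $Q(\{2\})=1$. By additivity $f(i:S)=f(\{i\})$, so $Q(\{1,2\})=5/4$. Here $\varepsilon\in(0,1/2)$ is arbitrary and all costs are strictly positive.

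First, for every $x\in[1/4,1)$ we have $R(v)\ge x\ge 1/4$, so $\{1\}$ is affordable and the benchmark is at least $\varepsilon x>0$. A finite $\zeta$ therefore forces $\gamma(x)>0$, and then necessarily $\gamma(x)\ge\varepsilon$, since every nonempty set has success probability at least $\varepsilon$. Second, at $x=1$ the set $\{2\}$ is affordable under $R(v)\ge 1$, so the benchmark is at least $1/2$. If $\zeta<1/(2\varepsilon)$, which I arrange by choosing $\varepsilon$ small relative to the given $\zeta$, the contract must select a set containing agent $2$. Then $Q(S(1))\ge 1$, and $\gamma(1)>0$.

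Now I invoke the payment formula of Theorem~\ref{theorem: myerson}. Any \DSIC{} contract with a monotone allocation must charge a payment whose product with $\gamma$ is determined as in Eq.~\eqref{eq: sandwich condition}; up to the normalization forced by IR at bid $0$, this is exactly Eq.~\eqref{eq: payment function}. The step to check carefully is that IR pins down the constant: it gives $\eta(0)\gamma(0)=0$ for a zero-value principal, so the payment is unique. This gives
\[
\eta(1)=1-\frac{1}{\gamma(1)}\int_0^1\gamma(z)\,dz\le 1-\frac{1}{\gamma(1)}\cdot\varepsilon\cdot\frac34<1\le Q(S(1)).
\]
So the principals' total payment is strictly less than the minimum transfer needed to make $S(1)$ an equilibrium, which contradicts \BB{}. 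The delicate point I expect is justifying that the payment is forced to equal Eq.~\eqref{eq: payment function}, and not merely shown sufficient. I would handle it by integrating the sandwich inequality Eq.~\eqref{eq: sandwich condition} over a fine grid, which yields $\eta(x)\gamma(x)-\eta(0)\gamma(0)=x\gamma(x)-\int_0^x\gamma$, and then using IR at value $0$ to get $\eta(0)\gamma(0)=0$.
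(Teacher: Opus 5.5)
Your proof is correct, and its skeleton matches the paper's. Both use two agents with additive $f$, one cheap but weak and one expensive but strong, and value profiles $(x,\vec 0)$. At a low value the approximation guarantee forces a positive allocation. At a high value it forces agent~2 to be incentivized, so \BB{} and IR make principal~1 pay her whole value.

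The difference is how you extract the contradicting information rent from \DSIC{}. You go through the full payment identity of Theorem~\ref{theorem: myerson}. That needs the uniqueness (revenue-equivalence) direction, which the paper never proves. You correctly flag this, and your grid argument closes it. Writing $P(z)=\eta(z)\gamma(z)$, you only need the one-sided bound $P(1)\le P(0)+\gamma(1)-\int_0^1\gamma(z)\,dz$ together with $P(0)\le 0$ from IR.

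The paper avoids this with a single deviation. In your parametrization, IR at value $1/4$ gives $\eta(1/4)\le 1/4$. Then \DSIC{} at value $1$ gives $(1-\eta(1))\gamma(1)\ge(1-\eta(1/4))\gamma(1/4)\ge\tfrac34\varepsilon>0$, hence $\eta(1)<1\le Q(S(1))$ directly. Your grid computation on the coarsest grid $\{0,1/4,1\}$ reduces to exactly this. The paper also pins down the low-value outcome exactly (agent~1 alone, payment $B/z$) via \BB{} and IR. You only need $\gamma(1/4)\ge\varepsilon$, which holds for any nonempty set, and either suffices. Your route gives an exact expression for the rent, but the theorem only needs the rent to be positive, which one deviation shows with less machinery.

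One citation should be fixed. The bound $0\le\eta(z)\le z$ in the proof of Theorem~\ref{theorem: myerson} holds for the payment rule constructed there, not for an arbitrary \DSIC{} contract. So it does not show that the zero-valued principals pay nothing. Use IR and the eligibility convention instead, as the paper does. At $x=1$ you have $f(S(1))>0$, so an eligible zero-valued principal with a positive payment would get negative utility, and an ineligible one pays nothing by definition.
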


\begin{remark}
    This theorem implies that no $(\zeta, \mathtt{SumV})$-approxi\-mate contract can be both \DSIC{} and \BB{} for any finite $\zeta$, since $\mathtt{SumV}(v) \geq \max_{j \in [m]}v^{j}$ always holds.
\end{remark}

\begin{proof}

Fix $\zeta$ and $R$. Choose $B>0$ and $z>\max\{2,\zeta+1\}$. With two agents, define the additive function $f$ by $f(\{1\})=1/z$ and $f(\{2\})=(z-1)/z$, and set $c_1=B/z^2$ and $c_2=B(z-1)/z$. Incentivizing $\{1\}$, $\{2\}$, and $\{1,2\}$ requires total transfers of at least $B/z$, $B$, and $B+B/z$, respectively. Consider the value profiles $v(x)=(x,\vec0)$ at $x=B/z$ and $x=B$, with corresponding instances $\mathcal I_L$ and $\mathcal I_H$.

Fix any deterministic \DSIC{} and \BB{} contract on these public parameters. Whenever the success probability is positive, the zero-valued principals pay zero by IR and the eligibility convention. Thus any positive eligible value is funded by principal~1 alone, whose payment is at most $x$ by IR.

At $v(B/z)$, the benchmark is at least $B/z^2>0$, since $R(v(B/z))\geq B/z$ can fund agent~1. If the contract produces zero eligible value there, it already fails on $\mathcal I_L$. Otherwise it must admit principal~1 and incentivize exactly agent~1; BB and IR force her payment to be $B/z$.

Now consider $v(B)$. If the contract admits principal~1 and incentivizes agent~2, BB and IR force her payment to be $B$, giving her zero utility. Reporting $B/z$ instead yields utility $(B-B/z)/z>0$, contradicting DSIC. Hence its eligible value at $v(B)$ is at most $B/z$. Since $R(v(B))\geq B$ can fund agent~2, the benchmark is at least $B(z-1)/z$, and
\begin{equation*}
    \frac{B\delta^1 f(S)}{\mathtt{OPT}(\mathcal I_H,R)}\leq\frac{1}{z-1}<\frac{1}{\zeta}.
\end{equation*}
Thus every such contract fails on at least one of $\mathcal I_L$ and $\mathcal I_H$; the failing value profile may depend on the contract.

\end{proof}

\section{The Sampling Toll Sharing Contract}
\label{sec:satosh}

\subsection{Definitions and Extensions}

This section studies random contracts. We first extend contract equivalence and the monotonicity characterization to the randomized setting. We then prove an $\Omega(n)$ approximation lower bound under strong benchmarks, motivating a weaker benchmark budget under which \SC{} achieves an $O(\log m)$ guarantee independent of $n$.

\paragraph{Random contracts.} A random contract $\phi$ is defined to be a distribution $\mathcal{D}$ over deterministic contracts. Throughout this paper, we assume that $\mathcal{D}$ is always a discrete distribution. The distribution is chosen independently of the bids. Each deterministic contract $\bar{\phi} = \lbrace \bar{t}, \bar{\delta} \rbrace \in \operatorname{supp}(\mathcal{D})$ corresponds to a possible \emph{realization} of the coins used by $\phi$. 
For a random contract $\phi$ with distribution $\mathcal{D}$, the objective is still to maximize the \ev{}, which is formulated as $\mathbb{E}_{\bar{\phi} \sim \mathcal{D}}\Big[ \sum_{j \in [m]}v^{j} \cdot \bar{\delta}^{j}(b) \cdot f\big( S(b, \bar{\phi}) \big) \Big]$. The notation $S(b, \bar{\phi})$ in this expression is defined in Section \ref{sec:preliminaries} and represents the subgame equilibrium of agents in $\mathcal{G} = \langle b, \bar{\phi} \rangle$. For random contracts, the approximation guarantee in Definition~\ref{def: algorithm approximation} is evaluated in expectation over the contract's coins.

\begin{definition}[Universally \DSIC{} Contract]\label{definition: universally DSIC}
    A random contract $\phi$ is \emph{universally} \DSIC{} if for every realization of the contract's coins, Eq.~\eqref{eq: DSIC} is satisfied by the corresponding deterministic contract $\bar{\phi}$.
\end{definition}

\begin{definition}[Universal \BB{}]\label{definition: universally BB}
    A random contract $\phi$ is universally \BB{} if Eq.~\eqref{eq: original definition of BB for det contract} holds under every realization of the coins used by $\phi$.
    
\end{definition}

We aim to develop a random contract that is both universally \DSIC{} and universally \BB{} and maximizes the \ev{}.

\paragraph{Contract equivalence.} Definitions \ref{definition: universally DSIC} and \ref{definition: universally BB} imply that a random contract that is both universally \DSIC{} and universally \BB{}  can be viewed as a distribution $\mathcal{D}$ over a set of deterministic contracts that are \DSIC{} and \BB{}. Therefore, one can prove the following result on contract equivalence, which generalizes Theorem \ref{theorem: objective equivalence} to the random setting.

\begin{theorem}\label{theorem: random contract equivalence}
    For any broker-free random contract $\phi$ that is both universally \DSIC{} and universally \BB{}, there exists a broker-based random contract $\phi_{\beta}$ which is also universally \DSIC{} and universally \BB{} and has the same \ev{} as $\phi$ for any bid profile $b$.
\end{theorem}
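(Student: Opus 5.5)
The plan is to prove the statement realization by realization, reusing the deterministic equivalence of Lemma~\ref{lemma: contract equivalence}. Let $\phi$ be a distribution $\mathcal{D}$ over deterministic broker-free contracts $\bar\phi=\lbrace \bar t,\bar\delta\rbrace$. Since $\phi$ is universally \DSIC{} and universally \BB{}, every $\bar\phi\in\operatorname{supp}(\mathcal{D})$ is a deterministic \DSIC{} contract. Its broker-free transfers are automatically balanced in the sense of Eq.~\eqref{eq: original definition of BB for det contract}, because every transfer goes directly from a principal to an agent.

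For each realization $\bar\phi$ we build the broker-based transfer $\bar t_\beta$ by Eq.~\eqref{eq: equivalence between transfer functions}. That is, $\bar t_{j\to\beta}(b)=\sum_i \bar t_{j\to i}(b)$ and $\bar t_{\beta\to i}(b)=\sum_j \bar t_{j\to i}(b)$, and the eligibility function $\bar\delta$ is kept unchanged. The computation in the proof of Lemma~\ref{lemma: contract equivalence} shows that $\bar t_\beta$ is \BB{} for every $b$. Lemma~\ref{lemma: transfer equivalence} then gives $\bar t\overset{u}{\sim}\bar t_\beta$, meaning every agent's and every principal's utility agrees for every $b$ and every $S$. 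We define $\phi_\beta$ as the image distribution: it draws $\bar\phi\sim\mathcal{D}$ and outputs $\bar\phi_\beta=\lbrace\bar t_\beta,\bar\delta\rbrace$. This is again discrete and independent of the bids. Universal \BB{} holds immediately.

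Next we transfer equilibria and incentives within each realization. Because agent utilities coincide for every $S$, the equilibrium conditions Eq.~\eqref{eq: subgame equilibrium} define the same set of equilibria in $\langle b,\bar\phi\rangle$ and $\langle b,\bar\phi_\beta\rangle$. Principal utilities and $\bar\delta$ also coincide, so the equilibrium-selection rule picks the same set. It favors principals' welfare, then success probability, then a fixed order, and all of these ingredients are identical. Hence $S(b,\bar\phi)=S(b,\bar\phi_\beta)$. Since principal utilities agree at these common equilibria for every bid profile, Eq.~\eqref{eq: DSIC} holds for $\bar\phi_\beta$ whenever it holds for $\bar\phi$, which gives universal \DSIC{}. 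This is the argument behind Theorem~\ref{theorem: objective equivalence}.

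Finally, the eligible value in each realization is $\sum_j v^j\bar\delta^j(b)f(S)$, and this is identical for $\bar\phi$ and $\bar\phi_\beta$. Taking expectations under the same distribution $\mathcal{D}$ shows that the two random contracts have equal \ev{} for every $b$. I expect the main obstacle to be the equilibrium-selection step. Theorem~\ref{theorem: objective equivalence} only asserts existence of a broker-based contract with equal $f(S)$. Here we need the stronger fact that the selected equilibrium itself is identical, so that $\bar\delta$ and $S$ jointly match. I would handle this by relying on full utility-wise equivalence rather than on $f(S)$ alone, as above, and by noting that the tie-breaking is defined purely in terms of quantities that are preserved.
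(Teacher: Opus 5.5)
Your proposal is correct and follows essentially the same route as the paper: map each deterministic realization to a \BB{} broker-based contract via Lemma~\ref{lemma: contract equivalence}, push the distribution forward, and compare expected eligible values realization by realization. The only difference is that you spell out two points the paper leaves to Theorem~\ref{theorem: objective equivalence}: the construction through Eq.~\eqref{eq: equivalence between transfer functions} does not depend on $b$, so one contract works for all bid profiles, and the selected equilibria coincide, which makes preservation of \DSIC{} explicit.
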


\begin{proof}
    By Lemma \ref{lemma: contract equivalence} and Theorem \ref{theorem: objective equivalence}, one can define a mapping $\mathcal{T}$ such that for each broker-free deterministic \DSIC{} \& \BB{} contract $\bar{\phi} = \lbrace \bar{t}, \bar{\delta} \rbrace$, $\mathcal{T}(\bar{\phi}) = \bar{\phi}_{\beta} = \lbrace \bar{t}_{\beta}, \bar{\delta} \rbrace$ specifies a broker-based deterministic \DSIC{} \& \BB{} contract with $f\Big(S\big(b, \bar{\phi}\big) \Big) = f\Big(S\big(b, \mathcal{T}(\bar{\phi})\big) \Big)$ for every bid profile $b$.

    For an arbitrary broker-free random contract $\phi$ with a distribution $\mathcal{D}$, we define $\phi_{\beta}$ to be a broker-based random contract with a distribution $\mathcal{D}_{\beta}$ such that for every broker-based deterministic contract $\bar{\phi}_{\beta}$,
    \begin{equation*}
        \mathcal{D}_{\beta}(\bar{\phi}_{\beta}) \; = \; \sum\limits_{\bar{\phi} \in \operatorname{supp}(\mathcal{D}): \: \mathcal{T}(\bar{\phi}) = \bar{\phi}_{\beta}} \mathcal{D}(\bar{\phi}) \, .
    \end{equation*}

    It can be inferred from Definitions \ref{definition: universally DSIC} and \ref{definition: universally BB} that $\phi_{\beta}$ is universally \DSIC{} and universally \BB{} if and only if $\phi$ has the same properties. Furthermore, for every bid profile $b$,
    \begin{align*}
        \sum_{\bar{\phi} \in \operatorname{supp}(\mathcal{D})} \mathcal{D}(\bar{\phi})\sum_{j \in [m]}v^{j} \bar{\delta}^{j}(b) \cdot f\big(S(b, \bar{\phi}) \big) \; =& \; \sum_{\bar{\phi} \in \operatorname{supp}(\mathcal{D})} \mathcal{D}(\bar{\phi})\sum_{j \in [m]}v^{j} \bar{\delta}^{j}(b) \cdot f\Big(S\big(b, \mathcal{T}(\bar{\phi}) \big) \Big) \\
        =& \; \sum_{\bar{\phi}_{\beta} \in \operatorname{supp}(\mathcal{D_{\beta}})} \mathcal{D}_{\beta}(\bar{\phi}_{\beta})\sum_{j \in [m]}v^{j} \bar{\delta}^{j}(b) \cdot f\Big(S\big(b, \bar{\phi}_{\beta} \big) \Big) \, ,
    \end{align*}
    which implies that $\phi$ and $\phi_{\beta}$ have the same \ev{}.
\end{proof}

\paragraph{Allocation rule and payment rule.} Theorem \ref{theorem: random contract equivalence} allows us to focus on broker-based random contracts, and express a random contract as the combination of the allocation rule and the payment rule. In particular, consider an arbitrary random contract $\phi$ with a distribution $\mathcal{D}$. For each deterministic contract $\bar{\phi} \in \operatorname{supp}(\mathcal{D})$, let $\RULE{A}_{\bar{\phi}}$ and $\RULE{P}_{\bar{\phi}}$ denote its allocation and payment rules, respectively. We define the random allocation rule $\RULE{A}$ and payment rule $\RULE{P}$ of $\phi$ jointly by
\begin{equation*}
    \big(\RULE{A}(b), \RULE{P}(b) \big) \; \defeq \;  \big(\RULE{A}_{\bar{\phi}}(b), \RULE{P}_{\bar{\phi}}(b) \big)\, , \qquad \bar{\phi} \sim \mathcal{D}
\end{equation*}
where a single realization of $\bar{\phi}$ is used by both $\RULE{A}$ and $\RULE{P}$.

\begin{definition}[Universal monotonicity]\label{def: universal monotonicity}
    An allocation rule $\RULE{A}$ of a random contract $\phi$ with distribution $\mathcal{D}$ is universally monotone if for every $\bar{\phi} \in \operatorname{supp}(\mathcal{D})$, its allocation rule $\RULE{A}_{\bar{\phi}}$ is monotone.
\end{definition}

The following propositions can be inferred from Definition \ref{def: implementability}, Theorem \ref{theorem: myerson}, and Definition \ref{definition: universally DSIC}.

\begin{corollary}\label{corollary: randomized myerson}
    A random allocation rule $\RULE{A}$ is implementable by a universally \DSIC{} contract if and only if it is universally monotone.
\end{corollary}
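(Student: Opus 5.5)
The plan is to reduce the corollary to Theorem~\ref{theorem: myerson}, applied realization by realization, with Definitions~\ref{definition: universally DSIC} and~\ref{def: universal monotonicity} providing the bridge. Both definitions quantify over every $\bar{\phi}\in\operatorname{supp}(\mathcal{D})$. Because $\mathcal{D}$ is fixed independently of the bids, each realized $\bar{\phi}$ is an ordinary deterministic broker-based contract whose allocation rule $\RULE{A}_{\bar{\phi}}$ maps bid profiles to pairs $\{\delta,S\}$. Its effective allocation $\RULE{A}_{\bar{\phi}}(b;j)=\delta^j f(S)$ is exactly the object to which Theorem~\ref{theorem: myerson} applies.

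\emph{Only if.} Suppose $\RULE{A}$ is implemented by some universally \DSIC{} contract, that is, by a joint realization $(\RULE{A}_{\bar{\phi}},\RULE{P}_{\bar{\phi}})$ with $\bar{\phi}\sim\mathcal{D}$. By Definition~\ref{definition: universally DSIC}, every $\bar{\phi}$ in the support satisfies Eq.~\eqref{eq: DSIC}, so the pair $\{\RULE{A}_{\bar{\phi}},\RULE{P}_{\bar{\phi}}\}$ is a \DSIC{} deterministic contract. Hence $\RULE{A}_{\bar{\phi}}$ is implementable in the sense of Definition~\ref{def: implementability}. The necessity direction of Theorem~\ref{theorem: myerson} then makes $\RULE{A}_{\bar{\phi}}$ monotone. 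Since this holds for every $\bar{\phi}$ in the support, $\RULE{A}$ is universally monotone.

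\emph{If.} Suppose every $\RULE{A}_{\bar{\phi}}$ is monotone. For each realization, define $\RULE{P}_{\bar{\phi}}$ by Eq.~\eqref{eq: payment function} with $\RULE{A}_{\bar{\phi}}$ in place of $\RULE{A}$. The sufficiency direction of Theorem~\ref{theorem: myerson} makes each $\{\RULE{A}_{\bar{\phi}},\RULE{P}_{\bar{\phi}}\}$ \DSIC{}. The random contract then draws $\bar{\phi}\sim\mathcal{D}$ and uses this single realization for both the allocation and the payment rule. It is well defined as a distribution over deterministic contracts and satisfies Definition~\ref{definition: universally DSIC}.

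There is no real obstacle here. The one point to state carefully is that the allocation and the payment must share the same coin: the payment is computed from the realized $\RULE{A}_{\bar{\phi}}$, not from the expected allocation. This is what makes the per-realization application of Theorem~\ref{theorem: myerson} legitimate, since the principals' bids do not influence the coins.
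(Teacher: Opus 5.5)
Your proposal is correct and follows the paper's own argument. The paper obtains this corollary directly from Definition~\ref{def: implementability}, Theorem~\ref{theorem: myerson}, and Definition~\ref{definition: universally DSIC}, applying the deterministic characterization realization by realization with the allocation and payment sharing the same coin, just as you do (the payment construction is recorded in Corollary~\ref{corollary: random myerson payment}).
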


\begin{corollary}\label{corollary: random myerson payment}
    For a universally monotone random allocation rule $\RULE{A}$, applying Eq.~\eqref{eq: payment function} to each deterministic realization defines a payment rule $\RULE{P}$ such that $\lbrace\RULE{A},\RULE{P}\rbrace$ is universally \DSIC{}.
\end{corollary}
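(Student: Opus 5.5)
The plan is a realization-by-realization reduction to the deterministic Theorem~\ref{theorem: myerson}. Universal \DSIC{} is defined realization-wise (Definition~\ref{definition: universally DSIC}), and universal monotonicity is too (Definition~\ref{def: universal monotonicity}). So the whole argument consists of checking that, for each fixed coin outcome $\bar{\phi}\in\operatorname{supp}(\mathcal{D})$, the pair $\lbrace\RULE{A}_{\bar\phi},\RULE{P}_{\bar\phi}\rbrace$ obeys the deterministic characterization.

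The key observation is that $\mathcal{D}$ is chosen independently of the bids. Hence each realization $\bar\phi$ is a fixed deterministic contract whose allocation rule $\RULE{A}_{\bar\phi}$ is a function of $b$ alone.

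I will fix a universally monotone random allocation rule $\RULE{A}$ with distribution $\mathcal{D}$. For each $\bar\phi\in\operatorname{supp}(\mathcal{D})$, the rule $\RULE{A}_{\bar\phi}$ is monotone by Definition~\ref{def: universal monotonicity}. I then define $\RULE{P}_{\bar\phi}$ by plugging $\RULE{A}_{\bar\phi}(\cdot;j)$ into Eq.~\eqref{eq: payment function}. The integral there exists because $\RULE{A}_{\bar\phi}(\cdot,b^{-j};j)$ is monotone and bounded in $[0,1]$. By the sufficiency direction of Theorem~\ref{theorem: myerson}, each deterministic contract $\lbrace\RULE{A}_{\bar\phi},\RULE{P}_{\bar\phi}\rbrace$ is \DSIC{}. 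I then set $\RULE{P}(b)\defeq\RULE{P}_{\bar\phi}(b)$, using the same realization $\bar\phi\sim\mathcal{D}$ that determines $\RULE{A}(b)$, exactly as in the joint definition preceding Definition~\ref{def: universal monotonicity}.

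It then follows that Eq.~\eqref{eq: DSIC} holds for every realization, which is precisely universal \DSIC{} in the sense of Definition~\ref{definition: universally DSIC}. Agents in each realization are paid as in Eq.~\eqref{eq: linear contract}, so the realized contract is a well-defined broker-based deterministic contract.

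The only point needing care is that the payment for realization $\bar\phi$ integrates the allocation of the \emph{same} realization, $\RULE{A}_{\bar\phi}(\tilde b^j,b^{-j};j)$, rather than the expected allocation $\mathbb{E}_{\bar\phi}[\RULE{A}_{\bar\phi}]$. This realization-wise coupling is what makes the guarantee universal rather than merely in expectation, and it is what lets Theorem~\ref{theorem: myerson} apply verbatim. Beyond this, I expect no real obstacle.
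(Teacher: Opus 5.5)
Your proposal is correct and takes the same approach as the paper, which derives this corollary by applying Theorem~\ref{theorem: myerson} separately to each deterministic realization; this works because universal monotonicity and universal \DSIC{} are both defined realization-wise. Your point that each realization's payment must integrate that same realization's allocation, not the expected allocation, is exactly the coupling that the paper's joint definition of $(\RULE{A},\RULE{P})$ via a single draw $\bar{\phi}\sim\mathcal{D}$ encodes.
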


\paragraph{Approximation lower bound.} For random contracts, we establish a lower bound that grows with the number of agents. The proof compares several value profiles under the same distribution of deterministic realizations.

\begin{theorem}\label{theorem: impossiblity for universally truthful}
    Fix $m\geq1$, $n\geq2$, and a revenue function $R$ satisfying $R(v)\geq\max_{j\in[m]}v^j$ for every $v\in\mathbb R_{\geq0}^m$. Every universally \DSIC{} and universally \BB{} random contract has an instance $\mathcal I$ with $m$ principals, $n$ agents, strictly positive costs, and an additive success probability function on which its expected eligible value is less than $\frac{2}{n}\mathtt{OPT}(\mathcal I,R)$. Thus the worst-case approximation ratio is $\Omega(n)$, even for any fixed $m$.
\end{theorem}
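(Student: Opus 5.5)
The plan is to lift the two-point argument in the proof of Theorem~\ref{theorem: approximation impossbility with DSIC and BB} to an $n$-level construction, and then use an averaging (Yao-type) argument over the value profiles. A universally \DSIC{} and universally \BB{} random contract is a fixed distribution $\mathcal D$ over deterministic \DSIC{} \& \BB{} contracts, chosen independently of the bids. So it suffices to exhibit $n$ instances $\mathcal I_1,\dots,\mathcal I_n$ sharing the same public parameters $(m,n,f,c)$ and differing only in $v$, such that every deterministic \DSIC{} \& \BB{} realization $\bar\phi$ satisfies
\[
\sum_{k=1}^{n}\frac{\mathrm{EV}_{\bar\phi}(\mathcal I_k)}{\mathtt{OPT}(\mathcal I_k,R)}<2 .
\]
Taking expectation over $\bar\phi\sim\mathcal D$ then shows the sum of expected ratios is below $2$, so some $k$ has expected ratio below $2/n$.

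\emph{Construction.} Use an additive $f$ with $f(\{k\})=q_k$, where $q_k=w^{k}/\sum_{i\le n}w^{i}$ for a large base $w>n+1$, so that $\sum_{i<k}q_i\le q_k/(w-1)$. Set $c_k=x_kq_k$ with $x_k=z^k$ for some $z>1$, so the costs are strictly positive. Additivity gives $f(i:S)=q_i$, hence $Q(S)=\sum_{i\in S}x_i$ and $f(S)=\sum_{i\in S}q_i$. Instance $\mathcal I_k$ has value profile $(x_k,\vec 0)$. Since $R(v)\ge x_k$ funds $\{k\}$, we get $\mathtt{OPT}(\mathcal I_k,R)\ge x_kq_k$. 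As in the deterministic proof, IR and the eligibility convention force the zero-valued principals to pay zero, so under \BB{} principal~1 alone pays $Q(S)$, and IR gives $Q(S)\le x_k$. In particular, at level $k$ the selected set satisfies $S\subseteq[k]$, since any agent with index above $k$ alone costs more than $x_k$.

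\emph{Key step (per realization).} Fix $\bar\phi$. Call level $k$ \emph{served} if its outcome $S_k$ contains $k$, and let $k^*$ be the largest served level. At level $k^*$, IR forces $Q(S_{k^*})=x_{k^*}$, so $S_{k^*}=\{k^*\}$ and principal~1 has utility exactly $0$. For any $h<k^*$, the outcome at $h$ has price $Q(S_h)\le x_h<x_{k^*}$. If $f(S_h)>0$, the type $x_{k^*}$ strictly gains by reporting $x_h$, which contradicts \DSIC{}. Hence all levels below $k^*$ produce zero eligible value. Every level $k>k^*$ is unserved, so $S_k\subseteq[k-1]$, and its ratio is at most $\sum_{i<k}q_i/q_k\le 1/(w-1)$. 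Level $k^*$ contributes ratio at most $1$. The total is therefore at most $1+n/(w-1)<2$ for $w>n+1$. If no level is served, the bound is even better.

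\emph{Main obstacle.} The step needing the most care is the per-realization \DSIC{} argument, in particular that the served type has exactly zero utility. This relies on the cost structure: any set containing $k$ costs at least $x_k$, which is tight with IR. Arranging this requires choosing $c_k=x_kq_k$, so that each agent's standalone price equals the corresponding value exactly. I must also check two details. First, agents' equilibrium selection (welfare-favoring ties) cannot produce a set cheaper than $Q(S)$; this holds because unpaid agents do not work. Second, the strict inequality survives the expectation over $\mathcal D$, which is fine since the bound $1+n/(w-1)<2$ holds pointwise for every realization.
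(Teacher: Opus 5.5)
Your proof is correct and follows the paper's argument. It uses the same geometric-weight additive instance, in which each agent's standalone price equals the single nonzero principal's value at the corresponding level. It uses the same \DSIC{} deviation: a served higher type has zero utility and would misreport down to any lower level with positive success probability. It uses the same geometric-tail bound on unserved levels, and your per-realization sum of ratios simply repackages the paper's pigeonhole over the disjoint events $E_i$.

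One imprecision is harmless: under \BB{}, principal~1 pays the total agent transfer, which is at least (not exactly) $Q(S)$. You only ever use that $Q(S)$ is at most her payment, which is at most $x_k$ by IR, so nothing breaks.
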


\begin{proof}

Let $a=n+1$, $w_i=a^{i-1}/\sum_{h=1}^n a^{h-1}$, and $c_i=iw_i$ for each agent $i\in[n]$. Define $f(S)=\sum_{i\in S}w_i$. This additive function is strictly monotone, $f([n])=1$, and all costs are positive. Incentivizing agent $i$ requires a transfer of at least $i$, independently of the other agents' actions. The geometric weights satisfy $\sum_{h<i}w_h<w_i/n$.

Fix an arbitrary universally \DSIC{} and universally \BB{} random contract on these public parameters. Its distribution over deterministic realizations is the same for all value profiles considered below. For each $i\in[n]$, let $v(i)=(i,\vec0)$ and write $\mathcal I_i$ for the corresponding instance. Since $R(v(i))\geq i$ can fund agent $i$, we have $\mathtt{OPT}(\mathcal I_i,R)\geq iw_i$.

Let $E_i$ be the event that the realized contract admits principal~1 and incentivizes agent $i$ at $v(i)$. On $E_i$, the zero-valued principals pay zero by IR and the eligibility convention. BB and IR therefore force principal~1 to pay exactly $i$. No other agent can be incentivized within this total payment, so the success probability is $w_i$.

The events $E_i$ are pairwise disjoint. Indeed, if a deterministic realization satisfies both $E_i$ and $E_k$ with $i<k$, principal~1 with value $k$ has zero utility when bidding truthfully, but obtains $(k-i)w_i>0$ by reporting $i$. This contradicts DSIC for that realization. Hence $\sum_{i=1}^n\Pr(E_i)\leq1$, and some $i$ satisfies $\Pr(E_i)\leq1/n$.

At this profile $v(i)$, put $q_i=\delta^1(v(i))f(S(v(i)))$. If $q_i>0$, principal~1 is the only source of payments and pays at most $i$, so no agent with index greater than $i$ can be incentivized. Outside $E_i$, it follows that $q_i\leq\sum_{h<i}w_h$. Thus
\begin{equation*}
    \mathbb E[q_i]\leq w_i\Pr(E_i)+\sum_{h<i}w_h<\frac{2w_i}{n}.
\end{equation*}
The expected eligible value is $i\mathbb E[q_i]<2iw_i/n\leq(2/n)\mathtt{OPT}(\mathcal I_i,R)$, as claimed. The index $i$ is chosen from the distribution of the contract, independently of the realized coins. Since $m$ was arbitrary, the lower bound holds even when the number of principals is fixed.

\end{proof}

Theorem~\ref{theorem: impossiblity for universally truthful} shows that, when the benchmark budget is always at least the maximum principal value, the approximation ratio must grow as $\Omega(n)$ even for fixed $m$. Thus these benchmarks preclude a guarantee depending only on the number of principals. To obtain such a guarantee, we use the weaker budget $\mathcal F_{1/4}^2(v)$, a quarter of the optimal fixed-price revenue supported by at least two principals. This budget need not track an isolated high value.\footnote{For example, if $v=(1,0)$, the lower bound assumes $R(v)\geq1$, whereas $\mathcal F_{1/4}^2(v)=0$.} With this benchmark, \SC{} is universally \DSIC{} and universally \BB{} and achieves an $O_\alpha(\log m)$ approximation for $m\geq2$, independent of the number of agents; for $m=1$ the benchmark is zero.

\subsection{Algorithm Description}

Our random algorithm, \emph{SAmpling TOll SHaring} (\SC{}), is inspired by the auction mechanism proposed in \cite{FiatGHK02}, while the presence of the success probability function leads to several differences in ensuring the universally \DSIC{} property. It relies on a terminology called \emph{optimal fixed-price revenue} and a subprocedure to share a given toll\footnote{To avoid ambiguity, we do not use the terminology \emph{cost} for the principals (bidders) like \cite{FiatGHK02}, because in the current paper, this terminology is used to represent the loss of the agents when exerting effort.} $\tau$ over a subset $X \subseteq [m]$ of principals with a fixed payment.

\begin{definition}[Optimal fixed-price revenue \cite{FiatGHK02}]\label{definition: optimal fixed price revenue}
    Given a bid profile $b$, its optimal fixed-price revenue is $\mathcal{F}(b) = \max_{k \in [|b|]} k \cdot \pi^{k}(b)$, where $\pi^{k}(b)$ is the $k$-th largest bid in $b$.
\end{definition}

\paragraph{Consistent ordering of equal revenues.}
Following the ordering convention of~\cite{FiatGHK02}, fix a distinct priority
$\sigma^j=j$ for each principal, independently of the bids and the random
partition. Higher priorities break ties. For real numbers $a,a'$ and
priorities $s,s'$, write
\[
 (a,s)\succ_{\mathrm{lex}}(a',s')
 \quad\Longleftrightarrow\quad
 a>a'\ \text{or}\ (a=a'\ \text{and}\ s>s').
\]
For a nonempty group $G$, let $G[\ell]$ denote the principal at rank $\ell$
in decreasing lexicographic order of $(b^j,\sigma^j)$. Define its
\emph{ordered revenue} as the pair
\[
 \widehat{\mathcal F}(b^G)
 =\max_{\mathrm{lex},\,1\leq k\leq |G|}
       \big(kb^{G[k]},\sigma^{G[k]}\big).
\]
Its first coordinate is the fixed-price revenue $\mathcal F(b^G)$ from
Definition~\ref{definition: optimal fixed price revenue}. For an empty group,
set $\mathcal F(b^\emptyset)=0$ and
$\widehat{\mathcal F}(b^\emptyset)=(0,-\infty)$.

\begin{definition}[Toll sharing with consistent ordering
\cite{FiatGHK02}]\label{definition: toll sharing procedure}
Given the labeled bids $b^G$ and a target pair
$\widehat{\tau}=(\tau,\sigma_*)$, define
\[
\texttt{TollSharing}(b^G,\widehat{\tau})
 =\max\Big(\{0\}\cup\big\{k\in[|G|]:
(kb^{G[k]},\sigma^{G[k]})
       \succ_{\mathrm{lex}}\widehat{\tau}\big\}\Big).
\]
If the returned value is $k>0$, accept the first $k$ principals in this
ordering. Each accepted principal pays the share $\tau/k$ upon success.
If $k=0$, no principal is accepted and no payment is charged.
\end{definition}

The next two lemmas establish monotonicity. Every comparison of labeled
bids uses the fixed priorities defined above.

\begin{lemma}\label{lemma: monotone fixed-price revenue}
For a fixed group $G$, raising any principal's bid cannot decrease either
$\mathcal F(b^G)$ or $\widehat{\mathcal F}(b^G)$, where the latter is compared
lexicographically.
\end{lemma}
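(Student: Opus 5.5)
Fix the group $G$, a principal $j\in G$, and two bid profiles $b$ and $b'$ that agree outside $j$ with $b'^{j}>b^{j}$. The plan is to show that the sorted vector of labeled bids can only rise coordinatewise, and then take maxima.

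First I would prove a rank-by-rank domination claim. For every $k\in[|G|]$, let $(b^{G[k]},\sigma^{G[k]})$ be the $k$-th largest labeled bid under $b$ and $(b'^{G'[k]},\sigma^{G'[k]})$ the $k$-th largest under $b'$. The claim is
\[
(b'^{G'[k]},\sigma^{G'[k]})\succeq_{\mathrm{lex}}(b^{G[k]},\sigma^{G[k]}).
\]
The argument is a counting one. Let $(b^{G[k]},\sigma^{G[k]})$ be the threshold pair. Under $b$, at least $k$ principals have labeled bid $\succeq_{\mathrm{lex}}$ this threshold. Every principal's labeled bid under $b'$ is lexicographically at least its labeled bid under $b$, because the priority $\sigma^j$ is fixed and only $b^j$ grows. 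So under $b'$ there are still at least $k$ principals at or above the threshold. Hence the $k$-th largest labeled bid under $b'$ is $\succeq_{\mathrm{lex}}$ the threshold. Priorities are distinct, so $\succ_{\mathrm{lex}}$ is a strict total order and the ranks are well defined; this makes the counting step valid.

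Second, I would turn rank domination into domination of each candidate. The domination claim gives $b'^{G'[k]}\geq b^{G[k]}$, so $k\,b'^{G'[k]}\geq k\,b^{G[k]}$. Taking the maximum over $k$ gives $\mathcal F(b'^G)\geq\mathcal F(b^G)$.

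Third comes the ordered revenue $\widehat{\mathcal F}$, which I expect to be the only delicate step, because the priority coordinate does not scale with $k$. For each fixed $k$ I will show
\[
(k\,b'^{G'[k]},\sigma^{G'[k]})\succeq_{\mathrm{lex}}(k\,b^{G[k]},\sigma^{G[k]}).
\]
There are two cases. If $b'^{G'[k]}>b^{G[k]}$, the first coordinates differ strictly, since $k\geq1$, and the inequality holds. Otherwise the bids are equal, and the domination claim then forces $\sigma^{G'[k]}\geq\sigma^{G[k]}$. Multiplying by $k$ preserves this lexicographic comparison.

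Finally, the lexicographic maximum over $k$ of termwise-dominating pairs dominates the maximum of the original pairs. This gives $\widehat{\mathcal F}(b'^G)\succeq_{\mathrm{lex}}\widehat{\mathcal F}(b^G)$. Raising a bid by any amount is covered by this single comparison, and the empty group is trivial.
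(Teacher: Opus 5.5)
Your proposal is correct and follows the same route as the paper: raising one labeled bid can only raise each order statistic of the pairs $(b^j,\sigma^j)$, so each candidate pair $(k\,b^{G[k]},\sigma^{G[k]})$ and their lexicographic maximum cannot decrease, and the same argument on first coordinates handles $\mathcal F$. You additionally spell out two steps the paper leaves implicit: the counting argument for rank-by-rank domination, and the case check showing that scaling the first coordinate by $k\geq 1$ preserves the lexicographic comparison.
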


\begin{proof}
Raising one labeled bid cannot decrease any order statistic of the pairs
$(b^j,\sigma^j)$. 

Consequently, the maximum over the pairs
$(kb^{G[k]},\sigma^{G[k]})$ cannot decrease. The same argument applied to
the first coordinates proves the assertion for $\mathcal F$.
\end{proof}

\begin{lemma}\label{lemma: monotonicity of toll sharing}
Let $b^G$ and $\tilde b^G$ be bids of the same labeled principals, with
$b^j\geq\tilde b^j$ for every $j\in G$. For any fixed target
$\widehat{\tau}$,
\[
 \texttt{TollSharing}(b^G,\widehat{\tau})
 \geq \texttt{TollSharing}(\tilde b^G,\widehat{\tau}).
\]
\end{lemma}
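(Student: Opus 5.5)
The plan is to reduce the claim to a pointwise comparison of order statistics, as in the proof of Lemma~\ref{lemma: monotone fixed-price revenue}. Write $G[\ell]$ and $\tilde G[\ell]$ for the rank-$\ell$ principals under $b^G$ and $\tilde b^G$. The first step is to show that, for every rank $\ell\in[|G|]$,
\[
 \big(b^{G[\ell]},\sigma^{G[\ell]}\big)\succeq_{\mathrm{lex}}\big(\tilde b^{\tilde G[\ell]},\sigma^{\tilde G[\ell]}\big).
\]
The pairs $(b^j,\sigma^j)$ are pairwise distinct because the priorities are distinct, so $\succ_{\mathrm{lex}}$ is a strict total order on them. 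The rank-$\ell$ element is the $\ell$-th largest pair. Pointwise, $(b^j,\sigma^j)\succeq_{\mathrm{lex}}(\tilde b^j,\sigma^j)$ for each $j$, since $b^j\geq\tilde b^j$ and the priority is unchanged. I would prove the order-statistic inequality by the usual counting argument. The set $\{j: (\tilde b^j,\sigma^j)\succeq_{\mathrm{lex}} (\tilde b^{\tilde G[\ell]},\sigma^{\tilde G[\ell]})\}$ has $\ell$ elements. Each of them also satisfies the same relation under $b$, by transitivity. Hence at least $\ell$ pairs under $b$ dominate the threshold pair, and so the $\ell$-th largest pair under $b$ does too.

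The second step is to lift this to the products $k b^{G[k]}$. Fix $k$. If $b^{G[k]}>\tilde b^{\tilde G[k]}$, then $k b^{G[k]}>k\tilde b^{\tilde G[k]}$ because $k\geq1$, so the pair $(kb^{G[k]},\sigma^{G[k]})$ strictly dominates its counterpart. If the bids are equal, the products are equal and the priority comparison carries over unchanged. Either way,
\[
 \big(kb^{G[k]},\sigma^{G[k]}\big)\succeq_{\mathrm{lex}}\big(k\tilde b^{\tilde G[k]},\sigma^{\tilde G[k]}\big).
\]

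The final step is immediate. Let $k$ be the value returned by \texttt{TollSharing}$(\tilde b^G,\widehat\tau)$. If $k=0$ there is nothing to prove. Otherwise $(k\tilde b^{\tilde G[k]},\sigma^{\tilde G[k]})\succ_{\mathrm{lex}}\widehat\tau$, and by transitivity with the second step, $(kb^{G[k]},\sigma^{G[k]})\succ_{\mathrm{lex}}\widehat\tau$. Therefore $k$ belongs to the index set maximized in \texttt{TollSharing}$(b^G,\widehat\tau)$, which gives the desired inequality.

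The main obstacle is the first step in the case of tied bids. There the priorities decide the ranks, so the principal at a given rank can change identity when bids are raised. This is why the comparison has to be made on the pairs and not on identities. Since $\succ_{\mathrm{lex}}$ is a strict total order on distinct labeled pairs, the counting argument goes through cleanly.
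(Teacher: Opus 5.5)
Your proof is correct and follows the paper's argument: the paper also shows that the rank-$\tilde k$ revenue pair cannot decrease when bids are raised, so the rank returned on $\tilde b^G$ stays feasible on $b^G$. You additionally make explicit two steps the paper leaves implicit: the order-statistic comparison on labeled pairs, which the paper takes from Lemma~\ref{lemma: monotone fixed-price revenue}, and the passage from $b^{G[k]}$ to $k\,b^{G[k]}$; your counting argument also covers all bids rising at once, where that lemma raises one bid at a time.
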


\begin{proof}
Let $\tilde k$ be the value returned on $\tilde b^G$. If $\tilde k=0$, the
claim is immediate. Otherwise, its rank-$\tilde k$ revenue pair strictly
exceeds $\widehat{\tau}$. Increasing bids cannot decrease that pair, so
$\tilde k$ remains feasible on $b^G$. The largest feasible rank on $b^G$ is
therefore at least $\tilde k$.
\end{proof}

\begin{algorithm}[t]
\color{black}
\SetAlFnt{\small\color{black}}
\SetAlCapFnt{\small\color{black}}
\SetAlCapNameFnt{\small\color{black}}
\caption{\SC{} with consistent revenue ordering}
\label{algo: sampling}
    \SetAlgoLined
    \KwIn{$\mathcal{I}=\langle m,n,f,b,c\rangle$; fixed priorities $\sigma^j=j$}
    \KwOut{$\delta\in\{0,1\}^m$, $S\subseteq[n]$, $\RULE P\in\mathbb R_{\geq0}^m$}
    Partition $[m]$ into $X,Y$ using one independent fair coin per principal\;
    $\widehat{\tau}^X\leftarrow\widehat{\mathcal F}(b^X)$\;
    $\widehat{\tau}^Y\leftarrow\widehat{\mathcal F}(b^Y)$\;
    \eIf{$\widehat{\tau}^X\succ_{\mathrm{lex}}\widehat{\tau}^Y$}{
        $W\leftarrow X$; $\widehat{\tau}\leftarrow\widehat{\tau}^Y$\;
    }{
        $W\leftarrow Y$; $\widehat{\tau}\leftarrow\widehat{\tau}^X$\;
    }
    $(\tau,\sigma_*)\leftarrow\widehat{\tau}$\;
    $k\leftarrow\texttt{TollSharing}(b^W,\widehat{\tau})$\;
    $\delta\leftarrow\vec 0$; $\RULE P\leftarrow\vec 0$\;
    \For{$\ell=1,\ldots,k$}{
        $j\leftarrow W[\ell]$\;
        $\delta^j\leftarrow1$; $\RULE P^j\leftarrow\tau/k$\;
    }
    $S\leftarrow\texttt{TPS}(\tau)$\;
    \textbf{return} $\delta,S,\RULE P$\;
\end{algorithm}

Algorithm~\ref{algo: sampling} uses the same random partition for its allocation
and payment rules. Write
$\widehat{\tau}^G=(\tau^G,\sigma_G)$ for $G\in\{X,Y\}$.
Since $m\geq1$, at least one group is nonempty. Nonempty disjoint groups have
distinct revenue priorities, so the ordered revenues select a unique,
nonempty winning group $W$. Its maximum revenue pair strictly exceeds the
other group's target pair. Hence some positive rank is feasible and the
algorithm always has $k\geq1$.

The toll supplied to \texttt{TPS} as the budget is
\[
 \tau=\min\{\tau^X,\tau^Y\}
      =\min\{\mathcal F(b^X),\mathcal F(b^Y)\}.
\]
Each of the first $k$ principals in the fixed-priority ordering of $W$ is eligible and pays $\tau / k$.

\begin{lemma}
\label{lemma:satosh-bb}
    \SC{} generates a universally \BB{} contract.
\end{lemma}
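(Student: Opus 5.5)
The plan is to fix an arbitrary realization of the fair coins, and hence a deterministic partition $(X,Y)$, and verify budget balance for every bid profile $b$. By the convention in Section~\ref{sec: myerson}, it suffices to show that the total principal payment $T=\sum_j \RULE P^j(b)$ satisfies $T\geq Q(S)$, where $S=\texttt{TPS}(\tau)$. Any surplus $T-Q(S)$ is then handed to a fixed agent of $S$ when $S\neq\emptyset$. The whole proof is a bookkeeping argument on this single realization.

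First, I compute $T$ exactly. By the discussion following Algorithm~\ref{algo: sampling}, the winning group $W$ is unique and nonempty, and $k=\texttt{TollSharing}(b^W,\widehat\tau)\geq 1$. Exactly $k$ principals are marked eligible, and each pays $\tau/k$. Every other principal has $\delta^j=0$ and $\RULE P^j=0$. Hence $T=k\cdot\tau/k=\tau$. This holds whether or not $\tau$ is zero. Since $k\geq1$, there is no division-by-zero issue.

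Second, I bound the agent side. $S=\texttt{TPS}(\tau)$, and by Lemma~\ref{lemma: basic properties of tps}, \texttt{TPS} satisfies the budget constraint~\eqref{eq: budget constraint}. Thus $Q(S)=\sum_{i\in S}c_i/f(i:S)\leq\tau=T$. If $S=\emptyset$ (for instance when $\tau<B_{\min}$, including $\tau=0$), the agent transfers are zero. The only thing to check here is what happens to the payment $\tau$ when nobody can be funded. I expect this to be the only delicate point. Payments are collected only upon success, and $f(\emptyset)=0$, so the project never succeeds and no money changes hands. The success-contingent ledger is then trivially balanced, which matches the convention's restriction of surplus redistribution to nonempty $S$. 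If $S\neq\emptyset$, the surplus $\tau-Q(S)\geq0$ is added to a fixed agent of $S$. The argument in Section~\ref{sec: myerson} shows this preserves $S$ as the selected equilibrium and gives $\sum_j t_{j\to\beta}(b)=\sum_i t_{\beta\to i}(b)$ exactly.

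Since the realization was arbitrary, Eq.~\eqref{eq: original definition of BB for det contract} holds for every realization and every $b$. This is exactly universal \BB{} per Definition~\ref{definition: universally BB}.
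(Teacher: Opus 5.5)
Your treatment of the case $S\neq\emptyset$ matches the paper's. There $T=k\cdot\tau/k=\tau\geq Q(S)$ by the budget constraint of \texttt{TPS}, and the surplus is paid to a fixed agent of $S$ without changing the selected equilibrium.

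The gap is the case $S=\emptyset$ with $\tau>0$. Budget balance, as defined in Eq.~\eqref{eq: original definition of BB for det contract} and required realization-wise by Definition~\ref{definition: universally BB}, is an identity between the success-contingent transfer functions: $\sum_j t_{j\to\beta}(b)=\sum_i t_{\beta\to i}(b)$ for every bid profile. It is not a statement about realized cash flows. In your construction, when $0<\tau<B_{\min}$ the $k\geq1$ eligible principals still have $t_{j\to\beta}(b)=\tau/k$. So the left side equals $\tau>0$ while every agent transfer is zero, and the identity fails. The fact that $f(\emptyset)=0$, so no money actually changes hands, does not help. The surplus convention of Section~\ref{sec: myerson} applies only to nonempty $S$, so it cannot be invoked either.

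The missing step, which is what the paper does, is to route the whole toll to some agent and check that this creates no effort.
\begin{itemize}
\item First note that $S=\emptyset$ forces $\tau<B_{\min}$. For $\tau\geq B_{\min}=\lambda_1$, the tag $\mu_1$ has positive success probability, so \texttt{TPS} returns a nonempty set.
\item Then assign the full $\tau$ to a fixed agent~1. Every other agent has zero transfer and strictly prefers to shirk.
\item Agent~1's gain from exerting effort alone is $\tau f(\{1\})-c_1<0$, because $\tau<B_{\min}\leq c_1/f(\{1\})$.
\end{itemize}
Hence $\emptyset$ remains the unique equilibrium, and success probability and principal utilities are unchanged. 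Total agent transfers now equal $\tau$ exactly, with $\tau=0$ as a trivial special case. You cannot instead zero out the principals' payments in this case, since Algorithm~\ref{algo: sampling} charges $\tau/k$ regardless of $S$.
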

\begin{proof}
For any fixed realization of the coins, exactly $k$ principals each pay $\tau/k$, so total principal transfers equal $\tau$. If $S=\texttt{TPS}(\tau)$ is nonempty, budget feasibility gives $Q(S)\leq\tau$, and the redistribution convention in Section~\ref{sec: myerson} makes budget balance exact while preserving the selected equilibrium. If $S=\emptyset$, then $\tau<B_{\min}$. Assign the full transfer $\tau$ to fixed agent~1. Every other agent strictly prefers no effort, and agent~1 also prefers no effort because $\tau f(\{1\})<c_1$. Thus the empty set remains the unique equilibrium and total agent transfers again equal $\tau$. This includes $\tau=0$.
\end{proof}

In the next two subsections, we further investigate the properties of \SC{}, including the \DSIC{} property and the approximation guarantee, and prove the following statement.

\begin{theorem}
    For $m\geq2$, \SC{} provides a $\big(O_{\alpha}(\log m), \mathcal{F}_{1/4}^{2}\big)$-approximate contract that is both universally \DSIC{} and universally \BB{}. It runs in polynomial time using value and $\alpha$-demand oracles. For $m=1$, the benchmark is zero.
\end{theorem}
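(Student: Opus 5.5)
The plan is to treat the three claims separately. Universal \BB{} is already Lemma~\ref{lemma:satosh-bb}. Polynomial running time follows from computing $\widehat{\mathcal F}$ and \texttt{TollSharing} by sorting, together with Theorem~\ref{theorem: sad} for \texttt{TPS}. That leaves universal \DSIC{} and the $O_\alpha(\log m)$ approximation against $\mathcal F^2_{1/4}$. For $m=1$ the benchmark is zero and nothing needs proving.

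\textbf{Universal \DSIC{}.} By Corollary~\ref{corollary: randomized myerson} and Corollary~\ref{corollary: random myerson payment}, it suffices to fix a realization of the partition $(X,Y)$ and a principal $j$, say $j\in X$, with $b^{-j}$ fixed. I will show two things: $\gamma(x)=\delta^j f(S)$ is a monotone step function of $j$'s bid $x$, and the fixed share $\tau/k$ coincides with the payment of Eq.~\eqref{eq: payment function}.

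The key observation is that $j$'s bid affects the toll $\tau$ only while $X$ is the losing group, and then $\delta^j=0$. Whenever $X$ wins, $\tau=\tau^Y$ is independent of $x$, so $f(\texttt{TPS}(\tau))$ is a constant $p$. By Lemma~\ref{lemma: monotone fixed-price revenue}, raising $x$ can only move $X$ from losing to winning. By Lemma~\ref{lemma: monotonicity of toll sharing}, raising $x$ can only enlarge the accepted prefix, and $j$ moves weakly earlier in the order. Together these give $\gamma(x)=p\cdot\mathbf 1[x\ \text{above a threshold } \theta]$.

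For a step allocation of this form, Eq.~\eqref{eq: payment function} equals the threshold $\theta$. So the crux is to prove $\theta=\tau/k$, where $k$ is the number accepted at $j$'s actual bid. I expect this to be the main technical obstacle: $k$ itself may depend on $x$, and ties are broken via the priorities $\sigma$.

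I would argue as in \cite{FiatGHK02}. Being accepted means that some rank $\ell\ge \mathrm{rank}(j)$ satisfies $(\ell b^{W[\ell]},\sigma^{W[\ell]})\succ_{\mathrm{lex}}\widehat\tau$. From this I would show that $j$ is accepted iff $x\ge\tau/k$ (lexicographically), and that the largest feasible rank $k$ is unchanged as long as $j$ remains accepted. I would also check that $j$ cannot lower her bid enough to turn $X$ into the loser while still being charged. IR then follows from $x\ge b^{W[k]}\ge\tau/k$.

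\textbf{Approximation.} Under truthful bids, every accepted principal in $W$ has $v^j\ge\tau/k$. Hence
\[
\sum_j v^j\delta^j\ \ge\ \max\{\tau,\ v^{W[1]}\},
\]
and the eligible value is at least $\max\{\tau,v^{W[1]}\}\cdot f(\texttt{TPS}(\tau))$.

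Let $k^*\ge2$ attain $\mathcal F^2(v)=k^*\pi^{k^*}$. I would show that, with constant probability, each side receives at least $\lceil k^*/4\rceil$ of the top $k^*$ principals; this holds with probability $1/2$ when $k^*=2$, and by a Chernoff/balls-in-bins bound otherwise. On this event $\tau\ge \mathcal F^2(v)/4$, or at least $\tau\ge c\,\mathcal F^2(v)$ for a constant $c$.

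Budget-scalability, the jump ratio $r$ (Lemma~\ref{lemma: from jump ratio to approx factor}) and Lemma~\ref{lemma: doubling high payment} then give
\[
f(\texttt{TPS}(\tau))\ \ge\ \frac{f(\SEL^*(\mathcal F^2_{1/4}(v)))}{O_\alpha(1)}.
\]

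It remains to compare values with $\mathtt{SumV}$. Writing $v$ in sorted order, $\sum_{j\ge2}v^j\le H_m\,\mathcal F^2(v)=O(\log m)\,\tau$ on the good event, which covers all principals except the top one. For $\pi^1$, I would additionally condition on principal $W[1]$ being the top bidder. Since the top bidder's group has revenue at least $\pi^1$, I would argue that with constant probability her group is also the winner while the balanced-split event still holds. This needs a short case analysis on whether $\pi^1$ dominates $\mathcal F(v^{-1})$, and it is the second place I expect some friction.

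Summing the two contributions gives an expected eligible value of at least
\[
\frac{\mathtt{SumV}(v)\cdot f(\SEL^*(\mathcal F^2_{1/4}(v)))}{O_\alpha(\log m)},
\]
which is the claimed $\big(O_\alpha(\log m),\mathcal F^2_{1/4}\big)$ guarantee.
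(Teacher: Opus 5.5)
Your treatment of universal \BB{}, running time and universal \DSIC{} matches the paper. The paper also fixes the partition and shows eligibility is monotone. It observes that while $j$ is eligible the success probability is the constant $f(\texttt{TPS}(\tau^L))$, and proves the critical bid equals $\tau/k$. You should add the degenerate case $f(\texttt{TPS}(\tau))=0$, where both payment rules give zero expected payment.

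The approximation argument takes a different route, and it does go through.

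\emph{The paper's route.} Lemma~\ref{lemma: universal welfare bound} proves a bound that holds in every realization. The rank $\hat k$ attaining $\widehat{\mathcal F}(b^W)$ is feasible in \texttt{TollSharing}, so $k\ge\hat k$ and $\sum_j v^j\delta^j\ge\tau^W=\max\{\tau^X,\tau^Y\}\ge\frac12(\tau^X+\tau^Y)$. The per-group harmonic bound $\sum_{j\in G}v^j\le\mathcal H_m\tau^G$ then gives $\sum_j v^j\delta^j\ge\mathtt{SumV}(v)/(2\mathcal H_m)$ deterministically. The top bidder is covered automatically, since $\tau^W\ge\pi^1$. The value bound is also decoupled from the randomness, so only $\Pr[\tau\ge\frac14\mathcal F^2(v)]\ge\frac12$ is needed, and only for the success probability.

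\emph{Your route.} You use the weaker pointwise bound $\max\{\tau,v^{W[1]}\}$. Both the value bound and the success probability then rest on the balanced-split event, and $\pi^1$ needs separate treatment. Your suggested case split handles this without any joint-probability or coupling argument.
\begin{itemize}
\item If $\pi^1\le\mathcal F(v^{-1})$: since $k\pi^{k+1}\le(k+1)\pi^{k+1}\le\mathcal F^2(v)$, you get $\pi^1\le\mathcal F^2(v)$. Hence $\mathtt{SumV}(v)\le\mathcal H_m\mathcal F^2(v)\le4\mathcal H_m\tau$ on the good event.
\item If $\pi^1>\mathcal F(v^{-1})$: the top bidder is unique. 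The group not containing her has revenue at most $\mathcal F(v^{-1})<\pi^1$ in every realization. So her group always wins and $v^{W[1]}=\pi^1$.
\end{itemize}
Either way $\max\{\tau,v^{W[1]}\}\ge\mathtt{SumV}(v)/(8\mathcal H_m)$ on the good event. Your route uses one global harmonic bound on the sorted profile. The paper's per-realization bound via $\tau^W$ is shorter and needs no case analysis.

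Two cautions.
\begin{enumerate}
\item Your fallback ``$\tau\ge c\,\mathcal F^2(v)$ for some smaller constant $c$, then Lemma~\ref{lemma: doubling high payment}'' would not work. That lemma requires $B'\ge\max_{i\in S}\rho_i$. If every agent has $\rho_i\in(c\,\mathcal F^2(v),\frac14\mathcal F^2(v)]$, then $f(\SEL^*(c\,\mathcal F^2(v)))=0$ while the benchmark is positive. You need exactly the threshold $\frac14\mathcal F^2(v)$. Your event ``each side gets at least $\lceil k^*/4\rceil$ of the top $k^*$'' does provide it.
\item A Chernoff bound only handles large $k^*$. The small values of $k^*$ must be checked directly, as in Table~\ref{tab: probabilities for some values}.
\end{enumerate}
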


This result is immediately derived from Lemma~\ref{lemma:satosh-bb}, Corollary~\ref{corollary:satosh-dsic}, and Theorem~\ref{theorem: approx guarantee of satoshi}.

\subsection{Analysis of the \DSIC{} Property of \SC{}}

\SC{} builds on the sampling and toll-sharing structure of~\cite{FiatGHK02}. In our setting, the toll also finances the agents' effort and therefore affects the project's success probability. A principal's effective allocation is $\delta^j f(S)$, so the truthfulness analysis must account for how both eligibility and success probability respond to her bid.

Fix a realization of the random partition. The budget-scalable selection function \texttt{TPS} of Section~\ref{sec: selection} makes the success probability nondecreasing in the toll. To establish monotonicity of the effective allocation, we therefore show that raising a bid cannot decrease the toll and that an eligible principal remains eligible when she raises her bid.

The eligibility argument uses the consistent ordering in Algorithm~\ref{algo: sampling}. Fixed priorities resolve ties between group revenues, and the same lexicographic order determines eligibility in \texttt{TollSharing}. We first establish monotonicity of the toll and the resulting success probability, then verify that this ordering preserves eligibility. Finally, we verify the payment rule using Eq.~\eqref{eq: payment function}.

\begin{lemma}\label{lemma: universally monotone toll}
    Fix the realization of the coins. Let $\tau$ be the extracted toll corresponding to a bid profile $b$. When a principal $j \in [m]$ raises her bid from $b^{j}$ to $\hat{b}^{j} \geq b^{j}$, the resulting toll $\tau'$ satisfies $\tau' \geq \tau$.
\end{lemma}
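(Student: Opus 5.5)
The plan is to use the explicit formula $\tau=\min\{\mathcal F(b^X),\mathcal F(b^Y)\}$ derived right after Algorithm~\ref{algo: sampling}, combined with Lemma~\ref{lemma: monotone fixed-price revenue}. Fix the coin realization, and with it the partition $(X,Y)$. Principal $j$ lies in exactly one group; without loss of generality $j\in X$.

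First, the toll does not depend on which group wins. Since $\widehat{\tau}$ is the ordered revenue of the losing group, its first coordinate is the revenue of the lexicographically smaller pair. Because lexicographic comparison agrees with comparison of first coordinates whenever those differ, this equals $\min\{\tau^X,\tau^Y\}$. When $\tau^X=\tau^Y$, both candidates give the same number regardless of tie-breaking. Hence $\tau(b)=\min\{\mathcal F(b^X),\mathcal F(b^Y)\}$ for every bid profile, with $\mathcal F(b^\emptyset)=0$ covering an empty group.

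Second, raise $b^j$ to $\hat b^j\geq b^j$. The bids $b^Y$ are unchanged, so $\mathcal F(b^Y)$ is unchanged. By Lemma~\ref{lemma: monotone fixed-price revenue} applied to the fixed group $X$, we have $\mathcal F(\hat b^X)\geq\mathcal F(b^X)$. Since the minimum of two quantities is nondecreasing in each argument, $\tau'=\min\{\mathcal F(\hat b^X),\mathcal F(b^Y)\}\geq\tau$.

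The only delicate point is the first step. One must check that a change in the winning group $W$, when $j$'s raised bid flips the lexicographic comparison, does not affect $\tau$. It does not, because $\tau$ always equals the smaller first coordinate regardless of which group wins. With that checked, the rest is immediate.
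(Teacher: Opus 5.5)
Your proposal is correct and follows the paper's proof: write $\tau=\min\{\mathcal F(b^X),\mathcal F(b^Y)\}$ regardless of which group wins, observe that only $j$'s group revenue changes and cannot decrease by Lemma~\ref{lemma: monotone fixed-price revenue}, and conclude by monotonicity of the minimum. Your explicit check that the lexicographic winner selection always leaves the smaller first coordinate as the toll, including ties and empty groups, spells out what the paper states in one line.
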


\begin{proof}
For every bid profile, the toll is
\[
 \tau=\min\{\mathcal F(b^X),\mathcal F(b^Y)\},
\]
even when the two revenues are equal. Raising a bid affects only its own group's revenue,
which cannot decrease by Lemma~\ref{lemma: monotone fixed-price revenue};
the other group's revenue is unchanged. Taking the minimum therefore cannot
decrease the toll. Priorities resolve ties without changing either group's revenue.
\end{proof}

The next proposition follows from Eq.~\eqref{eq: chosen tag} and Lemma \ref{lemma: universally monotone toll}. 

\begin{lemma}\label{lemma: universally monotone success probability}
    Fix the coin realization arbitrarily. Let $S$ and $\hat{S}$ be the chosen subsets of agents corresponding to bid profiles $b$ and $\lbrace \hat{b}^{j}, b^{-j}\rbrace$ with $\hat{b}^{j} \geq b^{j}$, respectively. Then $f(S) \leq f(\hat{S})$.
\end{lemma}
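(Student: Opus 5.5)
The plan is to combine the monotonicity of the toll (Lemma~\ref{lemma: universally monotone toll}) with the budget-scalability of \texttt{TPS} (Lemma~\ref{lemma: basic properties of tps}). Fix a coin realization, and with it the partition $X,Y$. Under both bid profiles $b$ and $\{\hat b^j,b^{-j}\}$, Algorithm~\ref{algo: sampling} selects agents by $S=\texttt{TPS}(\tau)$ and $\hat S=\texttt{TPS}(\tau')$. Here $\tau=\min\{\mathcal F(b^X),\mathcal F(b^Y)\}$ is the toll at $b$, and $\tau'$ is the toll at the raised profile. Neither the tagged partition $\langle\lambda,\mu\rangle$ nor the selection depends on the winning group $W$ or on the number $k$ of accepted principals. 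The success probability is therefore a function of the toll alone.

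The steps are as follows.
\begin{enumerate}
\item By Lemma~\ref{lemma: universally monotone toll}, $\tau'\geq\tau$. This holds even if raising $b^j$ changes which group wins, because the toll is always the minimum of the two group revenues.
\item By Eq.~\eqref{eq: chosen tag}, \texttt{TPS} returns the tag of maximum $f$-value among all tags with $\lambda_k\leq B$. The family of affordable tags at $\tau'$ contains the family at $\tau$, since the partition is fixed independently of the bids, as stated in the construction of \texttt{TPS}. Hence the maximum can only increase, giving $f(\hat S)\geq f(S)$. This is exactly the budget-scalability in Lemma~\ref{lemma: basic properties of tps} applied with $B=\tau'\geq B'=\tau$.
\item Treat the boundary cases explicitly. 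If $\tau<B_{\min}$, then $S=\emptyset$ and $f(S)=0\leq f(\hat S)$. If $\tau'\geq B_{\max}$, then $\hat S$ attains $f([n])$, which dominates everything.
\end{enumerate}

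I do not expect a real obstacle. The only point needing care is that the ties in Eq.~\eqref{eq: chosen tag} are broken by a fixed rule: the algorithm replaces $S$ only on a strict improvement. The comparison is therefore between maxima of $f$-values over nested families, so tie-breaking does not matter because only $f$-values are compared. I would also note that lexicographic priorities can change $W$ but never $\tau$, which is stated in the proof of Lemma~\ref{lemma: universally monotone toll}.
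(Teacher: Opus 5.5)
Your proposal is correct and follows the paper's route. The paper obtains this lemma directly from the toll monotonicity of Lemma~\ref{lemma: universally monotone toll} together with the max-over-affordable-tags form of \texttt{TPS} in Eq.~\eqref{eq: chosen tag}, i.e., budget-scalability; your separate boundary cases are harmless but already covered by budget-scalability.
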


\begin{lemma}\label{lemma: universally monotone eligibility}
    Under an arbitrarily fixed coin realization, if a principal $j$ satisfies $\delta^{j} = 1$, then raising her bid $b^{j}$ to $\hat{b}^{j} \geq b^{j}$ does not change her eligibility.
\end{lemma}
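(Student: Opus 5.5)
We fix the coin realization, so the partition $\{X,Y\}$ is fixed, and suppose $j\in G\in\{X,Y\}$ with $\delta^j=1$ at bid profile $b$. Eligibility requires $G=W$, so $\widehat{\tau}^G\succ_{\mathrm{lex}}\widehat{\tau}^{G'}$, where $G'$ is the other group. It also requires $j=G[\ell]$ for some $\ell\leq k$, where $k=\texttt{TollSharing}(b^G,\widehat{\tau}^{G'})$. We raise $b^j$ to $\hat b^j$ and check, in order, that (i) $G$ remains the winning group, (ii) the target pair $\widehat{\tau}$ is unchanged, and (iii) $j$ remains among the first $k'$ principals of $G$, where $k'$ is the new toll-sharing count.

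Steps (i) and (ii) follow directly from earlier results. The bid change affects only $b^G$, so $\widehat{\tau}^{G'}$ is unchanged. By Lemma~\ref{lemma: monotone fixed-price revenue}, the new $\widehat{\mathcal F}(\hat b^G)\succeq_{\mathrm{lex}}\widehat{\tau}^G\succ_{\mathrm{lex}}\widehat{\tau}^{G'}$, so $G$ still wins. The target is therefore still $\widehat{\tau}=\widehat{\tau}^{G'}$, which is fixed. Lemma~\ref{lemma: monotonicity of toll sharing} then gives $k'\geq k$.

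Step (iii) is the main obstacle. Raising $b^j$ can move $j$ to an earlier rank but never to a later one, since the pairs $(b^{j'},\sigma^{j'})$ are totally ordered and only $j$'s pair increases. Hence the new rank $\ell'$ of $j$ satisfies $\ell'\leq\ell\leq k\leq k'$, and $j$ is accepted. I would make this rigorous by noting that the set of principals lexicographically above $j$ can only shrink when $j$'s pair increases. The fixed distinct priorities guarantee that the ordering is strict, so ties cannot push $j$ downward. I would also note that we do not need the accepted set to stay the same; we only need $j$ to remain in it.

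As a remark for the later DSIC argument, combining this lemma with Lemma~\ref{lemma: universally monotone success probability} shows that the effective allocation $\delta^j f(S)$ is nondecreasing in $b^j$ for every realization. By Corollary~\ref{corollary: randomized myerson}, this yields universal monotonicity.
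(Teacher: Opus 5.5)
Your proposal is correct and matches the paper's proof step for step: the losing group's target $\widehat{\tau}$ is unchanged, Lemma~\ref{lemma: monotone fixed-price revenue} keeps $G$ as the winner, Lemma~\ref{lemma: monotonicity of toll sharing} gives $k'\geq k$, and $j$'s rank in the strict fixed-priority order cannot worsen, so she stays among the first $k'$. One small point about your closing remark: a nondecreasing effective allocation in every realization is universal monotonicity by Definition~\ref{def: universal monotonicity}, and Corollary~\ref{corollary: randomized myerson} is what then yields implementability by a universally \DSIC{} contract.
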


\begin{proof}
Fix the random partition, and suppose that $j$ is eligible in the winning
group $W$. Let $L$ be the other group, and let
$\widehat{\tau}=(\tau,\sigma_*)=\widehat{\mathcal F}(b^L)$.
Raising $j$'s bid leaves this target unchanged and cannot decrease
$\widehat{\mathcal F}(b^W)$ by
Lemma~\ref{lemma: monotone fixed-price revenue}. Thus $W$ remains the winning
group, including when the group revenues are equal.

Let $k$ and $\hat k$ be the old and new values of
$\texttt{TollSharing}(b^W,\widehat{\tau})$. By
Lemma~\ref{lemma: monotonicity of toll sharing}, $\hat k\geq k$.
Increasing $j$'s bid cannot worsen her rank in the fixed-priority ordering.
She was among the first $k$ principals before the increase, so she remains
among the first $\hat k$ afterward. Hence her eligibility is preserved.
\end{proof}

Putting Definition \ref{def: universal monotonicity}, Lemma \ref{lemma: universally monotone success probability} and Lemma \ref{lemma: universally monotone eligibility} together, we obtain the following result.

\begin{theorem}\label{theorem: universally monotone allocation rule}
    \SC{}'s allocation rule is universally monotone.
\end{theorem}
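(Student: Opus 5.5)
We need to show that for every fixed coin realization $\bar\phi$ (that is, every partition $X,Y$), the effective allocation $\RULE A_{\bar\phi}(b;j)=\delta^j(b)\, f(S(b))$ is nondecreasing in $b^j$ for every principal $j$. Universal monotonicity then follows directly from Definition~\ref{def: universal monotonicity}.

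Fix the partition, a principal $j$, the other bids $b^{-j}$, and two bids $\hat b^j\geq b^j$. Write $b$ for the original profile and $\hat b=(\hat b^j,b^{-j})$ for the raised one. Both $\delta^j$ and $f(S)$ are nonnegative, so it suffices to show that each factor weakly increases.

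We split into two cases according to $\delta^j(b)$.

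\textbf{Case $\delta^j(b)=0$.} Then $\RULE A(b;j)=0$. Since $\RULE A(\hat b;j)\geq 0$ always holds, the inequality is immediate.

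\textbf{Case $\delta^j(b)=1$.} By Lemma~\ref{lemma: universally monotone eligibility}, raising the bid preserves eligibility, so $\delta^j(\hat b)=1$. By Lemma~\ref{lemma: universally monotone success probability}, which rests on the toll monotonicity of Lemma~\ref{lemma: universally monotone toll} and the budget-scalability of \texttt{TPS} (Lemma~\ref{lemma: basic properties of tps}), we have $f(S(\hat b))\geq f(S(b))$. Multiplying the two factors gives
\[
\RULE A(\hat b;j)=f\big(S(\hat b)\big)\geq f\big(S(b)\big)=\RULE A(b;j).
\]

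The only subtlety is that $f(S)$ here is the success probability of the selected target set $S=\texttt{TPS}(\tau)$. This coincides with the equilibrium chosen by the agents: under the minimal transfers of Eq.~\eqref{eq: linear contract}, plus surplus redistribution, $S$ is the welfare-favoring equilibrium. This was established in Section~\ref{sec: myerson} and in the proof of Lemma~\ref{lemma:satosh-bb}, including the case $S=\emptyset$.

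Since the realization, $j$, and $b^{-j}$ were arbitrary, every $\RULE A_{\bar\phi}$ is monotone, and hence \SC{}'s allocation rule is universally monotone. I expect no real obstacle beyond the case split. The substantive work, namely that eligibility is preserved under consistent tie-breaking, is already contained in Lemma~\ref{lemma: universally monotone eligibility}.
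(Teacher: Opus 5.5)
Your proof is correct and takes essentially the paper's route. For each fixed partition, the paper obtains the theorem by combining Lemma~\ref{lemma: universally monotone success probability} (toll monotonicity plus budget-scalability of \texttt{TPS}) with Lemma~\ref{lemma: universally monotone eligibility} (eligibility is preserved when the bid rises), which is exactly what your case split on $\delta^j(b)$ does.
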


\begin{theorem}\label{theorem: equivalence of myerson and monopolistic pricing}
For every fixed realization of the random partition and every principal $j$,
the payment in Algorithm~\ref{algo: sampling} agrees with
Eq.~\eqref{eq: payment function} whenever $\RULE A(b;j)>0$.
For all bid profiles, including those with $\RULE A(b;j)=0$, its expected
payment $\RULE P^j(b)\RULE A(b;j)$ agrees with the expected payment induced
by that equation.
\end{theorem}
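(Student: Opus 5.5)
We fix a realization of the random partition $X,Y$, a principal $j$, and the other bids $b^{-j}$. We then compute the one-dimensional effective allocation $\gamma(x)=\RULE A(x,b^{-j};j)$ explicitly and evaluate Eq.~\eqref{eq: payment function} on it. The goal is to show that, on the bids where $j$ is eligible, $\gamma$ is a single step: it is $0$ below a critical bid $\theta$ and equals a constant $g>0$ above $\theta$, with $\theta=\tau/k$. The payment formula then gives $b^j-\frac1g\int_0^{b^j}\gamma=b^j-(b^j-\theta)=\theta=\tau/k$, which is exactly the toll share charged by Algorithm~\ref{algo: sampling}.

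\textbf{Step 1 (shape of $\gamma$).} Let $G\in\{X,Y\}$ be $j$'s group and $L$ the other group. If $j$ is eligible at bid $x$, then $G$ must be the winning group. In that case the toll is $\widehat\tau=\widehat{\mathcal F}(b^L)$, which does not depend on $x$. Hence $S=\texttt{TPS}(\tau)$ and $g\defeq f(\texttt{TPS}(\tau))$ are the same for every $x$ at which $j$ is eligible. Whenever $j$ is not eligible, $\gamma(x)=0$ regardless of $S$. By Lemma~\ref{lemma: universally monotone eligibility}, the eligibility set $\{x:\delta^j(x)=1\}$ is an up-ray. So $\gamma(x)=g\cdot\mathbf 1[x\in\text{up-ray}]$, and it remains to locate the endpoint of that ray.

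\textbf{Step 2 (critical bid, the main obstacle).} Suppose $j$ is eligible at $b^j$ and accepted with count $k$. Let $\theta$ be the endpoint of the ray; we show $\theta=\tau/k$ up to the single point $x=\tau/k$.

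For the lower bound, take $x<\tau/k$ and any rank $k'$ with $j$ among the top $k'$ of $W$. We show the rank-$k'$ pair cannot exceed $\widehat\tau$.
\begin{itemize}
\item If $k'\le k$, then $k'b^{W[k']}\le k'x<\tau k'/k\le\tau$, so the pair is strictly below $\widehat\tau$.
\item If $k'>k$, rank $k'$ was infeasible at $b^j$ by the maximality of $k$ in \texttt{TollSharing}. Lowering $j$'s bid cannot raise any order statistic, so the rank stays infeasible (the monotonicity used in Lemma~\ref{lemma: monotone fixed-price revenue}).
\end{itemize}
Hence $j$ is not accepted. If instead $j$'s group loses the comparison, $j$ is ineligible anyway.

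For the upper bound, take $x>\tau/k$. The other $k-1$ originally accepted principals keep their bids and ranks relative to the rank-$k$ threshold. The rank-$k$ pair of this set of $k$ principals is then either the original rank-$k$ pair, which is $\succ_{\mathrm{lex}}\widehat\tau$, or has first coordinate $kx>\tau$. In both cases rank $k$ is feasible, so $W$ still wins and $j$ is among the first $\hat k\ge k$ accepted principals.

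The delicate part is this tie-handling with the fixed priorities $\sigma$. I would check it carefully using the lexicographic definitions, but any ambiguity is confined to the single point $x=\tau/k$, which has measure zero and does not affect the integral.

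\textbf{Step 3 (conclusion and the zero case).} If $\RULE A(b;j)>0$, then $g>0$ and $j$ is eligible. Steps 1--2 give $\int_0^{b^j}\gamma=g(b^j-\tau/k)$, so Eq.~\eqref{eq: payment function} yields $\tau/k$, matching the algorithm's payment.

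If $\RULE A(b;j)=0$, then either $\delta^j=0$, in which case the algorithm charges $0$, or $\delta^j=1$ but $g=0$ (when $\tau<B_{\min}$), in which case the algorithm charges $\tau/k$. In either case $\RULE P^j(b)\RULE A(b;j)=0$, which equals the expected payment induced by Eq.~\eqref{eq: payment function}, since that equation sets the payment to $0$. Together with Theorem~\ref{theorem: universally monotone allocation rule} and Corollary~\ref{corollary: random myerson payment}, this is the statement.
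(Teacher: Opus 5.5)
Your route is the paper's. On bids where $j$ is eligible, her group must beat the fixed target $\widehat{\mathcal F}(b^L)$, so the effective allocation is a single step of height $q=f(\texttt{TPS}(\tau))$. Below $\tau/k$ she is excluded, because ranks above $k$ stay infeasible and any rank $r\le k$ containing her would need $rx\geq\tau$. The integral then returns $\tau/k$, and you handle the zero-allocation case exactly as the paper does.

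There is one step that does not follow as written. In the upper bound you conclude from the feasibility of rank $k$ that ``$j$ is among the first $\hat k\ge k$ accepted principals.'' For $\tau/k<x<b^j$, however, principal $j$ is \emph{lowering} her bid. Principals outside the originally accepted set, with bids between $x$ and $b^j$, may now rank above her. So she need not be in the new top $k$, and $\hat k\geq k$ alone does not place her among the accepted. The up-ray property from Step~1 does not rescue this. It only gives eligibility above $b^j$, whereas pinning the critical bid at $\tau/k$ requires eligibility at bids arbitrarily close to $\tau/k$ from above, which lie below $b^j$.

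The paper closes this by contradiction. If $j$ were not among the first $k$, she together with the $k$ principals ahead of her would make rank $k+1$ feasible. That is impossible: at $b^j$ she already satisfied every rank-$r$ inequality for $r>k$, so changing only her bid cannot increase the number of principals satisfying it. A direct fix also works. If $j$'s new rank is $\rho\geq k$, the rank-$\rho$ pair is $(\rho x,\sigma^j)$ with $\rho x\geq kx>\tau$, so rank $\rho$ is feasible and $\hat k\geq\rho$. If $\rho<k$, then $\hat k\geq k>\rho$. Either way she is accepted.

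A smaller point: your dichotomy for the rank-$k$ pair of the originally accepted set is slightly off. If $j$ was originally $W[k]$ and raises her bid, the new minimum of that set is another member whose pair strictly dominates the original rank-$k$ pair. This is harmless, since feasibility still holds. The paper's reformulation makes the tie cases you flagged as delicate immediate: rank $r$ is feasible iff at least $r$ members of the group satisfy $(b^i,\sigma^i)\succ_{\mathrm{lex}}(\tau/r,\sigma_*)$. With the patch above, your proof is complete.
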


\begin{proof}
Fix the random partition, a principal $j$, and all other bids $b^{-j}$.
First suppose $\RULE A(b;j)>0$. Let $G$ be $j$'s group, $L$ the other
group, and $k$ the current number of eligible principals. Write
\[
 (\tau,\sigma_*)=\widehat{\mathcal F}(b^L),
 \qquad p=\frac{\tau}{k}.
\]
Whenever $j$ is eligible, $G$ wins and its budget is the fixed value $\tau$.
Thus her effective allocation is the constant
$q=f(\texttt{TPS}(\tau))>0$ whenever she is eligible. Define her critical bid by
\[
 x=\inf\{z\geq0:\RULE A(z,b^{-j};j)>0\}.
\]
We show that $x=p$.

A rank $r$ is feasible precisely when at least $r$ principals $i\in G$ satisfy
\[
 (b^i,\sigma^i)\succ_{\mathrm{lex}}(\tau/r,\sigma_*).
\]
At her current bid, $j$ satisfies this inequality for $r=k$, and hence for
all $r>k$. None of these larger ranks is feasible, since $k$ is maximal.
Because $j$ already satisfies their inequalities, changing only her bid
cannot make any rank $r>k$ feasible.

If $j$ reports $z>p$, she and the other $k-1$ currently eligible principals
still satisfy the inequality for rank $k$. Thus $k$ remains the largest
feasible rank and $G$ wins. Principal $j$ must be among the first $k$:
otherwise, those $k$ principals together with $j$ would make rank $k+1$
feasible, a contradiction. Hence she is eligible and pays $p$.
Conversely, if $z<p$, eligibility at any rank $r\leq k$ would require
$z\geq\tau/r\geq p$, a contradiction.

Therefore $x=p$, with effective allocation zero below $x$ and equal to $q$
above $x$. The value at $x$ itself does not affect the integral in
Eq.~\eqref{eq: payment function}, which gives, at the current eligible bid,
\[
 b^j-\frac{1}{q}\int_0^{b^j}\RULE A(z,b^{-j};j)\,dz
=b^j-(b^j-x)
=x=p=\RULE P^j(b).
\]
If reporting $x$ selects the other group because of priorities, $j$ is
ineligible and both rules charge zero at that report; the equality above
concerns reports at which she is eligible.

Finally, if $\RULE A(b;j)=0$, both rules induce zero expected payment.
Even if an eligible principal has a nonzero success-contingent payment
coefficient, a zero success probability makes its expected payment zero.
\end{proof}

Theorem~\ref{theorem: universally monotone allocation rule} and
Theorem~\ref{theorem: myerson} give a normalized payment rule that is
\DSIC{} for each fixed partition. By
Theorem~\ref{theorem: equivalence of myerson and monopolistic pricing},
Algorithm~\ref{algo: sampling} induces the same expected payments at
every report, and therefore the same utilities, including zero effective
allocation. This establishes the following corollary.

\begin{corollary}
\label{corollary:satosh-dsic}
    The random contract generated by \SC{} in Algorithm \ref{algo: sampling} is universally \DSIC{}.
\end{corollary}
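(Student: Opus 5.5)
The plan is to fix an arbitrary realization $\bar\phi$ of the random partition $(X,Y)$ and show that the deterministic contract produced by Algorithm~\ref{algo: sampling} under this realization satisfies Eq.~\eqref{eq: DSIC}. Universal \DSIC{} then follows directly from Definition~\ref{definition: universally DSIC}.

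The first step is implementability. By Theorem~\ref{theorem: universally monotone allocation rule}, the realized allocation rule $\RULE{A}_{\bar\phi}$ is monotone: its effective allocation $\delta^j f(\texttt{TPS}(\tau))$ is nondecreasing in $b^j$, as assembled from Lemmas~\ref{lemma: universally monotone toll}, \ref{lemma: universally monotone success probability} and~\ref{lemma: universally monotone eligibility}. Theorem~\ref{theorem: myerson} (equivalently Corollary~\ref{corollary: random myerson payment}) then says that pairing $\RULE{A}_{\bar\phi}$ with the Myerson-type payment $\RULE{P}^{j}_{\mathrm{My}}$ of Eq.~\eqref{eq: payment function} yields a \DSIC{} contract for this realization. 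That is, for every true value $v^j$, every report $y$, and every $b^{-j}$,
\[
(v^j-\RULE{P}^{j}_{\mathrm{My}}(v^j,b^{-j}))\RULE{A}(v^j,b^{-j};j)\geq \max\{(v^j-\RULE{P}^{j}_{\mathrm{My}}(y,b^{-j}))\RULE{A}(y,b^{-j};j),0\}.
\]

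The second step transfers this to the algorithm's actual payment $\RULE{P}^j=\tau/k$, or $0$ for ineligible principals. Principal $j$'s utility under any report $y$ is $(v^j-\RULE{P}^j(y,b^{-j}))\RULE{A}(y,b^{-j};j)=v^j\RULE{A}(y,b^{-j};j)-\RULE{P}^j(y,b^{-j})\RULE{A}(y,b^{-j};j)$. Because the allocation is identical in both contracts, it depends on payments only through the product $\RULE{P}^j\RULE{A}$. Theorem~\ref{theorem: equivalence of myerson and monopolistic pricing} states that this product coincides with the Myerson product at every report $y$, including reports with zero effective allocation. The utility functions of the two contracts are therefore identical pointwise in $y$. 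Both inequalities in Eq.~\eqref{eq: DSIC}, truthfulness and IR, then carry over verbatim.

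I expect the main obstacle to be the boundary reports, where the two payment rules could disagree without the product argument. These are $y$ equal to the critical bid $x=p$, where priority tie-breaking may switch the winning group, and profiles with $\delta^j=1$ but $f(\texttt{TPS}(\tau))=0$ because $\tau<B_{\min}$. In both cases I would argue via the product: whenever $\RULE{A}(y,b^{-j};j)=0$, both utilities vanish, so whatever nominal payment coefficient is charged is irrelevant. Theorem~\ref{theorem: equivalence of myerson and monopolistic pricing} explicitly covers these reports. A last sanity check is nonnegativity of transfers, which holds because $\tau/k\geq0$.
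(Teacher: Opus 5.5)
Your proposal is correct and follows the paper's own argument. Fix the partition, and use universal monotonicity (Theorem~\ref{theorem: universally monotone allocation rule}) together with Theorem~\ref{theorem: myerson} to obtain a \DSIC{} Myerson payment. Then transfer truthfulness and IR to the algorithm's payment $\tau/k$ through the pointwise agreement of $\RULE{P}^j(\cdot)\,\RULE{A}(\cdot;j)$ from Theorem~\ref{theorem: equivalence of myerson and monopolistic pricing}, including the zero-allocation boundary cases that theorem already covers.
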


\subsection{Approximation Analysis}

We now proceed to investigate the approximation guarantee of \SC{}. Lemma \ref{lemma: universal welfare bound} first relates the total values of eligible principals to \texttt{SumV}. Note that the logarithmic ratio provided in this proposition cannot be derived from the conventional bound provided in \cite[Lemma 3.4]{GoldbergHKSW06} on $\texttt{SumV}(b)/\mathcal{F}(b)$ directly, since a principal $j$ that is counted by the expression $\max_{k}k \cdot \pi^{k}(b)$ in Definition \ref{definition: optimal fixed price revenue} may not be eligible with the random algorithm \SC{}. 

\begin{lemma}\label{lemma: universal welfare bound}
    For any fixed coin realization, $\sum_{j \in [m]}b^{j}\cdot \delta^{j} \geq \frac{1}{2\mathcal{H}_{m}}\sum_{j \in [m]}b^{j}$, where $\mathcal{H}_{m}$ is the harmonic number with parameter $m$.
\end{lemma}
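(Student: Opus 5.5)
The plan is to fix the coin realization and compare the eligible bids with the ineligible bids, using the toll $\tau$ as an intermediate quantity. Let $W$ be the winning group, $L$ the other group, and $(\tau,\sigma_*)=\widehat{\mathcal F}(b^L)$ the target pair. As shown after Algorithm~\ref{algo: sampling}, the returned rank satisfies $k\geq1$. The first $k$ principals $W[1],\dots,W[k]$ are eligible. Everything else, namely the ranks $r>k$ in $W$ together with all of $L$, is ineligible. Write $E=\sum_j b^j\delta^j$.

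First I would show $E\geq\tau$. Rank $k$ is feasible, so $(kb^{W[k]},\sigma^{W[k]})\succ_{\mathrm{lex}}(\tau,\sigma_*)$, which gives $kb^{W[k]}\geq\tau$. Each of the first $k$ principals bids at least $b^{W[k]}$, so $E\geq kb^{W[k]}\geq\tau$.

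Next I would bound the ineligible principals of $W$. For each rank $r>k$, maximality of $k$ says rank $r$ is infeasible. Hence $(rb^{W[r]},\sigma^{W[r]})\not\succ_{\mathrm{lex}}(\tau,\sigma_*)$, and in particular $rb^{W[r]}\leq\tau$. This holds even when $\tau=0$, since the first coordinate still cannot exceed $\tau$. Summing $b^{W[r]}\leq\tau/r$ over $r=k+1,\dots,|W|$ gives at most $\tau(\mathcal H_{|W|}-\mathcal H_k)\leq\tau(\mathcal H_{|W|}-1)$, using $k\geq1$.

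For the group $L$, the standard argument applies. For every rank $r$, $r\pi^r(b^L)\leq\mathcal F(b^L)=\tau$. Summing over $r$ gives $\sum_{j\in L}b^j\leq\tau\mathcal H_{|L|}$, and this is trivially true when $L$ is empty.

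Combining the three bounds,
\[
\sum_{j\in[m]}b^j\;\leq\;E+\tau\big(\mathcal H_{|W|}-1+\mathcal H_{|L|}\big)\;\leq\;E\big(\mathcal H_{|W|}+\mathcal H_{|L|}\big)\;\leq\;2\mathcal H_m\,E ,
\]
where the middle step uses $\tau\leq E$ and the last uses $|W|,|L|\leq m$. Rearranging gives the lemma.

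The main obstacle is the step bounding the ineligible members of $W$. This is exactly where the remark about \cite{GoldbergHKSW06} applies: ineligible principals inside $W$ are not controlled by $\mathcal F(b^W)$ directly. They are controlled only through the infeasibility of higher ranks against the other group's target, and the lexicographic tie-breaking must be checked to still yield $rb^{W[r]}\leq\tau$. The "$-1$" saved from $k\geq1$ is what absorbs the additive $E$ term and yields the clean factor $2\mathcal H_m$.
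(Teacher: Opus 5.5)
Your proof is correct, but it takes a different route from the paper's.

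\textbf{The paper's route.} The paper works with the winning group's own revenue. It picks a rank $\hat k$ attaining $\widehat{\mathcal F}(b^W)$. Because $W$ won the comparison, this pair beats $\widehat{\mathcal F}(b^L)$, so $\hat k$ is feasible and $k\geq\hat k$. This gives the stronger intermediate bound $E\geq\mathcal F(b^W)\geq\frac12\big(\mathcal F(b^W)+\mathcal F(b^L)\big)$. The paper then applies the standard harmonic bound $\sum_{j\in G}b^j\leq\mathcal H_m\mathcal F(b^G)$ to each group separately.

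\textbf{Your route.} You never use $\mathcal F(b^W)$, and you use the group comparison only through $k\geq1$. Feasibility of rank $k$ gives you only $E\geq\tau=\mathcal F(b^L)$. You then control the ineligible tail of $W$ against the other group's target via maximality of $k$, and the saving $\mathcal H_k\geq1$ absorbs the extra $E$ term.

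\textbf{What each buys.} Your argument is purely a property of \texttt{TollSharing}: it works for any target and any group on which the returned rank is at least one. The paper's version is shorter and records the additional fact that the eligible set already secures the winning group's full fixed-price revenue. Both chains actually yield the factor $\mathcal H_{|W|}+\mathcal H_{|L|}$ before relaxing to $2\mathcal H_m$, since the paper's bounds give $\sum_j b^j\leq\mathcal H_{|W|}\mathcal F(b^W)+\mathcal H_{|L|}\mathcal F(b^L)\leq(\mathcal H_{|W|}+\mathcal H_{|L|})E$.

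\textbf{A correction to your closing remark.} The ineligible members of $W$ \emph{are} controlled by $\mathcal F(b^W)$ directly, because $\ell\, b^{W[\ell]}\leq\mathcal F(b^W)$ for every rank $\ell$. The nontrivial step on the paper's route is relating $E$ to $\mathcal F(b^W)$, which is exactly the observation $k\geq\hat k$.
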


\begin{proof}
Let $W$ be the winning group and $L$ the other group under the ordered
revenues. Write $\tau^G=\mathcal F(b^G)$ for $G\in\{W,L\}$, and let
$k=\texttt{TollSharing}(b^W,\widehat{\mathcal F}(b^L))$.
Choose a rank $\hat k$ attaining the ordered maximum
$\widehat{\mathcal F}(b^W)$. 
Because $W$ wins, the revenue pair at $\hat k$ exceeds
$\widehat{\mathcal F}(b^L)$. Thus $\hat k$ is feasible and $k\geq\hat k$
by maximality. Hence,
\[
 \begin{aligned}
 \sum_{j\in[m]}b^j\delta^j
   &=\sum_{\ell=1}^{k}b^{W[\ell]}
     \geq\sum_{\ell=1}^{\hat k}b^{W[\ell]}\\
   &\geq \hat k\,b^{W[\hat k]}
     =\tau^W
     \geq \frac{\tau^W+\tau^L}{2}.
 \end{aligned}
\]
The last inequality uses $\tau^W\geq\tau^L$, which remains true when
the ordered comparison is resolved by priorities at equal revenues.
In particular, the sum of the top $\hat k$ bids is only required to be
at least their fixed-price revenue.

For either nonempty group $G$, every rank $\ell$ satisfies
$\ell b^{G[\ell]}\leq\tau^G$. Therefore,
\[
 \sum_{j\in G}b^j
 =\sum_{\ell=1}^{|G|}b^{G[\ell]}
 \leq\sum_{\ell=1}^{|G|}\frac{\tau^G}{\ell}
 =\mathcal H_{|G|}\tau^G
 \leq\mathcal H_m\tau^G.
\]
For an empty group the same inequality holds with both sides zero.
Applying it to $W$ and $L$ gives
\[
 \sum_{j\in[m]}b^j\delta^j
 \geq\frac{\tau^W+\tau^L}{2}
 \geq\frac{1}{2\mathcal H_m}
       \left(\sum_{j\in W}b^j+\sum_{j\in L}b^j\right)
 =\frac{1}{2\mathcal H_m}\sum_{j\in[m]}b^j.
\]
\end{proof}

We next bound the success probability. Since $f\big(\mathtt{TPS}(\cdot)\big)$ is nonlinear, $\mathbb{E}\big[f(\mathtt{TPS}(\tau))\big]$ need not equal $f\big(\mathtt{TPS}(\mathbb{E}[\tau])\big)$. Bounds on the expected toll alone therefore do not suffice for our analysis; we use a constant-probability lower bound on $\tau$.
Recall the benchmark $\mathcal{F}^{2}(b)$ defined in Section~\ref{sec:preliminaries}. The next lemma bounds the probability that $\tau \geq \frac{1}{4}\mathcal{F}^{2}(b)$.

\begin{lemma}\label{lemma: high probability result on toll}
    The toll extracted by \SC{} from the principals satisfies $\Pr\Big[\tau \geq \frac{1}{4}\mathcal{F}^{2}(b)\Big] \geq \frac{1}{2}$.
\end{lemma}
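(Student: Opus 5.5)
This is the classical random-sampling argument of Fiat et al., adapted to the benchmark $\mathcal F^{2}$. Throughout, write $\pi^\ell=\pi^\ell(b)$. If $m<2$ or $\mathcal F^2(b)=0$, the claim is trivial because $\tau\geq0$. Otherwise, fix the rank $k\geq2$ attaining $\mathcal F^2(b)=k\pi^k$. Let $T$ be the set of the top $k$ principals, ordered by decreasing $(b^j,\sigma^j)$, so every $j\in T$ has $b^j\geq\pi^k$.

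The first step reduces the toll to counting. Recall that $\tau=\min\{\mathcal F(b^X),\mathcal F(b^Y)\}$. If $X$ contains $k_1$ members of $T$ and $Y$ contains $k_2=k-k_1$, then charging the fixed price $\pi^k$ inside each group gives $\mathcal F(b^X)\geq k_1\pi^k$ and $\mathcal F(b^Y)\geq k_2\pi^k$. Hence
\[
\tau\geq\min\{k_1,k-k_1\}\,\pi^k .
\]
It therefore suffices to show
\[
\Pr\Big[\min\{k_1,k-k_1\}\geq k/4\Big]\geq\tfrac12 ,
\]
where $k_1\sim\mathrm{Bin}(k,1/2)$ is the count produced by the independent fair coins.

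The second step handles small $k$ by exact calculation.
For $k=2$, we need $\min\geq1$, i.e.\ a split $1$–$1$. This has probability exactly $1/2$.
For $k=3$, we need $\min\geq1$, i.e.\ a split $1$–$2$ or $2$–$1$. This has probability $6/8=3/4$.
For $k=4$, we again need $\min\geq1$. The failure probability is $2/16$, so success has probability $7/8$.
For $k=5,6,7,8$, we need $\min\geq2$. The failure probability is $2\Pr[k_1\leq1]=2(1+k)/2^k$, which equals $12/32$, $14/64$, $16/128$ and $18/256$ respectively, all at most $1/2$.

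The third step covers general $k$ with a tail bound. The failure event is $|k_1-k/2|>k/4$. Chebyshev's inequality with $\mathrm{Var}(k_1)=k/4$ bounds its probability by
\[
\frac{k/4}{(k/4)^2}=\frac{4}{k},
\]
which is at most $1/2$ once $k\geq8$. Combining this with the small cases gives probability at least $1/2$ for every $k\geq2$.

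The only delicate point is the boundary case $k=2$. There the bound $1/2$ is tight, so no slack is available. Equally, the reason for restricting the benchmark to $k\geq2$ is that with $k=1$ the split can never place a top principal on both sides. Ties in bids and the priority ordering are irrelevant here, because the inequality $\mathcal F(b^G)\geq k_G\pi^k$ uses only bid values.
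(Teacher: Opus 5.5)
Your proposal is correct and follows essentially the same route as the paper. Like the paper, you fix the top $k^*$ principals before the coins are drawn, bound each group's fixed-price revenue below by $\pi^{k^*}$ times its count of them, and verify $\Pr[k^*/4\le k^X\le 3k^*/4]\ge 1/2$ by exact computation for $k^*<8$ and Chebyshev for $k^*\ge 8$. Your small-case probabilities match the paper's table, and your extra exact check at $k=8$ is redundant but harmless.
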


\begin{proof}

If $\mathcal F^2(b)=0$, the claim is immediate since $\tau\geq0$.
Otherwise choose
$k^*\in\argmax_{2\leq k\leq m}k\pi^k(b)$.
Before drawing the random partition, fix the set $H^*$ of exactly the top
$k^*$ principals in the global fixed-priority ordering. Every member of
$H^*$ has bid at least $\pi^{k^*}(b)$, even when additional principals have
that same bid. Define
$k^X=|H^*\cap X|$ and $k^Y=|H^*\cap Y|$.
Then $k^X+k^Y=k^*$ and
$\tau^X\geq\pi^{k^*}(b)k^X$,
$\tau^Y\geq\pi^{k^*}(b)k^Y$.
The toll remains $\tau=\min\{\tau^X,\tau^Y\}$ under consistent
ordering. Thus

    \begin{align*}
        \Pr\Big[\tau \geq \tfrac14\mathcal F^2(b)\Big]
        &=\Pr\Big[\min\{\tau^X,\tau^Y\}\geq\tfrac14\mathcal F^2(b)\Big]\\
        &\geq\Pr\Big[\pi^{k^*}(b)\min\{k^X,k^Y\}
                  \geq\tfrac14 k^*\pi^{k^*}(b)\Big]\\
        &=\Pr\Big[\min\{k^X,k^Y\}\geq\tfrac14 k^*\Big]\\
        &=\Pr\Big[\tfrac14 k^*\leq k^X\leq\tfrac34 k^*\Big],
    \end{align*}

The last equality follows from $k^X+k^Y=k^*$.

    Since the partition is generated by flipping a fair coin, we have $k^{X} \sim \texttt{Binomial}(k^{*}, 0.5)$. Therefore, the expected value and the variance of $k^{X}$ satisfy $\mathbb{E}[k^{X}] = \frac{1}{2}k^{*}$ and $\operatorname{Var}[k^{X}] = \frac{1}{4}k^{*}$. Using Chebyshev's inequality, we have

\begin{equation*}
        \Pr\Big[ \frac{1}{4}k^{*} \leq k^{X} \leq \frac{3}{4}k^{*} \Big] = 1 - \Pr\Big[ \big|k^{X} - \mathbb{E}[k^{X}] \big| > \frac{1}{4}k^{*}\Big] \geq 1 - \frac{\operatorname{Var}[k^{X}]}{(\frac{1}{4}k^{*})^{2}} = 1 - \frac{4}{k^{*}} \, .
    \end{equation*}

    Therefore, this proposition holds when $k^{*} \geq 8$. Note that by definition, $k^{*}$ must be at least $2$, so it remains to verify the case where $k^{*} \in [2, 8)$. For $k^{*} \in [2, 8)$, the probabilities are in Table \ref{tab: probabilities for some values}.
    
    \begin{table}
        \centering
        \caption{The probability that $k^{X} \in \Big[ \frac{1}{4}k^{*}, \frac{3}{4}k^{*} \Big]$ for the case where $k^{*} \in [2, 8)$}
        \label{tab: probabilities for some values}
        \begin{tabular}{|c|c|c|c|c|c|c|}
            \hline
            $k^{*}$ & $2$ & $3$ & $4$ & $5$ & $6$ & $7$\\
            \hline
            $\Pr\Big[ \frac{1}{4}k^{*} \leq k^{X} \leq \frac{3}{4}k^{*} \Big]$ & $\frac{1}{2}$ & $\frac{3}{4}$ & $\frac{7}{8}$ & $\frac{5}{8}$ & $\frac{25}{32}$ & $\frac{7}{8}$\\
            \hline
        \end{tabular}
    \end{table}
\end{proof}

\begin{theorem}\label{theorem: approx guarantee of satoshi}
    For $m\geq2$, \SC{} has an approximation guarantee of $\Big(O_{\alpha}(\log m), \mathcal{F}_{1/4}^{2}\Big)$.
For $m=1$, the benchmark is zero.
\end{theorem}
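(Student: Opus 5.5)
The plan is to combine the two ingredients just established: Lemma~\ref{lemma: universal welfare bound} for the principal side and Lemma~\ref{lemma: high probability result on toll} for the agent side, linked through the approximation guarantee of \texttt{TPS}. Evaluate everything at the truthful profile $b=v$, which is justified by Corollary~\ref{corollary:satosh-dsic}. For $m=1$ we have $\mathcal F^2_{1/4}(v)=0$, so the benchmark is $f(\SEL^*(0))\cdot v^1=0$ and nothing needs proving. Assume $m\geq2$ from now on, and write $B^*=\frac14\mathcal F^2(v)$. By Eq.~\eqref{eq: definition of benchmark} and Definition~\ref{def: optimal selection}, $\mathtt{OPT}(\mathcal I,\mathcal F^2_{1/4})=f(\SEL^*(B^*))\cdot\sum_j v^j$, exactly as in the proof of Lemma~\ref{lemma: approx ratio of NEED}.

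Fix a coin realization. The eligible value equals $\big(\sum_j v^j\delta^j\big)\cdot f(\texttt{TPS}(\tau))$. The first factor is deterministically at least $\frac{1}{2\mathcal H_m}\sum_j v^j$ by Lemma~\ref{lemma: universal welfare bound}, whatever the realization. For the second factor, consider the event $E=\{\tau\geq B^*\}$, which has probability at least $1/2$ by Lemma~\ref{lemma: high probability result on toll}. On $E$, budget-scalability (Lemma~\ref{lemma: basic properties of tps}) gives $f(\texttt{TPS}(\tau))\geq f(\texttt{TPS}(B^*))$. The $r$-approximation of \texttt{TPS} (Lemma~\ref{lemma: from jump ratio to approx factor} together with Theorem~\ref{theorem: sad}, where $r=129r_\alpha$) then gives $f(\texttt{TPS}(B^*))\geq f(\SEL^*(B^*))/r$. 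The cases $B^*<B_{\min}$ (benchmark zero) and $B^*\geq B_{\max}$ (where $[n]$ is chosen) are already covered by that lemma.

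Off $E$ the eligible value is nonnegative. Since the lower bound on the first factor holds pointwise, the expectation of the product is at least
\[
\frac{1}{2\mathcal H_m}\sum_j v^j\cdot \Pr[E]\cdot\frac{f(\SEL^*(B^*))}{r}\;\geq\;\frac{1}{4r\mathcal H_m}\,\mathtt{OPT}(\mathcal I,\mathcal F^2_{1/4}).
\]
Using $\mathcal H_m=O(\log m)$, this is the ratio $O_\alpha(\log m)$. The step to check most carefully is the decoupling of the two factors. Both depend on the same random partition, so they cannot be treated as independent. The argument avoids needing independence because the welfare bound holds for every realization, so only the probability of $E$ enters. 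I would make this explicit by bounding the product on the event $E$ and discarding its complement. Polynomial running time follows from Theorem~\ref{theorem: sad} and the fact that computing the partition, $\widehat{\mathcal F}$ and \texttt{TollSharing} is $O(m\log m)$.
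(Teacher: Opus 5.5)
Your proposal is correct and follows essentially the same route as the paper. You apply Lemma~\ref{lemma: universal welfare bound} pointwise to the eligible-value factor, restrict to the event $\tau\geq\frac14\mathcal F^2(v)$ of probability at least $\frac12$, use monotonicity of $f(\texttt{TPS}(\cdot))$ in the budget, and invoke the $r$-approximation of \texttt{TPS}, arriving at the same constant $\frac{1}{4r\mathcal H_m}$. Your explicit observation that the pointwise welfare bound removes any need for independence between the two factors is exactly what the paper's conditioning step relies on implicitly.
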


\begin{proof}

Let $r$ be the approximation factor of \texttt{TPS}, so
$r=O_\alpha(1)$ by Theorem~\ref{theorem: sad} and
Lemma~\ref{lemma: from jump ratio to approx factor}. For the analysis, set
$(\lambda_0,\mu_0)=(0,\emptyset)$, so the maximum over feasible tags
also represents $f(\texttt{TPS}(B))$ when $0\leq B<B_{\min}$.
When every principal bids truthfully, we have

    \begin{align*}
        \mathbb{E}\Big[ \sum_{j \in [m]}v^{j}\delta^{j} \cdot f(S) \Big] \geq& \; \frac{1}{2\mathcal{H}_{m}}\sum_{j \in [m]}v^{j} \cdot \mathbb{E}[ f(S) ] \\
        =& \; \frac{1}{2\mathcal{H}_{m}} \sum_{j \in [m]}v^{j} \cdot \mathbb{E}\Big[ \max_{k: \lambda_{k}\leq \tau} f(\mu_{k}) \Big] \\
        \geq&\; \frac{1}{2\mathcal{H}_{m}} \sum_{j \in [m]}v^{j} \cdot \mathbb{E}\Big[ \max_{k: \lambda_{k}\leq \tau} f(\mu_{k}) \Big| \tau \geq \frac{1}{4}\mathcal{F}^{2}(v) \Big] \cdot \Pr\Big[\tau \geq \frac{1}{4}\mathcal{F}^{2}(v)\Big] \\
        \geq&\; \frac{1}{4\mathcal{H}_{m}} \sum_{j \in [m]}v^{j} \cdot \mathbb{E}\Big[ \max_{k: \lambda_{k}\leq \tau} f(\mu_{k}) \Big| \tau \geq \frac{1}{4}\mathcal{F}^{2}(v) \Big] \\
        \geq& \; \frac{1}{4\mathcal{H}_{m}} \sum_{j \in [m]}v^{j} \cdot  \max_{k: \lambda_{k}\leq \frac{1}{4}\mathcal{F}^{2}(v)} f(\mu_{k})  \\
        \geq& \; \frac{1}{4r\mathcal H_m} \cdot \sum_{j \in [m]}v^{j} \cdot f\Big(S^{*}\big( \frac{1}{4}\mathcal{F}^{2}(v) \big) \Big) \, .
    \end{align*}
    
    The first transition above holds because Lemma \ref{lemma: universal welfare bound} establishes $\sum_{j \in [m]}v^{j}\delta^{j} \geq \frac{1}{2\mathcal{H}_{m}}\sum_{j \in [m]}v^{j}$ independently of the coin realization. The third one is derived from the law of total expectation. The fourth transition follows from Lemma \ref{lemma: high probability result on toll}. The fifth one holds because the term $\max_{k: \lambda_{k}\leq \tau}f(\mu_{k})$ is non-de\-creasing with $\tau$. The last transition uses the $r$-approximation guarantee of
\texttt{TPS}. Thus the expected eligible value is at least
$\mathtt{OPT}(\mathcal I,\mathcal F_{1/4}^2)/(4r\mathcal H_m)$, which gives
the claimed $O_\alpha(\log m)$ approximation for $m\geq2$; for $m=1$ the
benchmark is zero under the convention above.
\end{proof}

\clearpage
\bibliographystyle{alpha}
\bibliography{references}

\end{document}